\newtheorem{theorem}{Theorem}
\newtheorem{axiom}[theorem]{Axiom}
\newtheorem{conjecture}[theorem]{Conjecture}
\newtheorem{corollary}[theorem]{Corollary}
\newtheorem{definition}[theorem]{Definition}
\newtheorem{example}[theorem]{Example}
\newtheorem{exercise}[theorem]{Exercise}
\newtheorem{lemma}[theorem]{Lemma}
\newtheorem{proposition}[theorem]{Proposition}
\newtheorem{remark}[theorem]{Remark}
\newenvironment{proof}[1][Proof]{\noindent\textbf{#1.} }{\hfill $\square$}
\let\pdfoutput=\undefined\fi
\chardef\@x10\chardef\@xv60
\def\tcitime{
\def\@time{%
  \@minute\time\@hour\@minute\divide\@hour\@xv
  \ifnum\@hour<\@x 0\fi\the\@hour:%
  \multiply\@hour\@xv\advance\@minute-\@hour
  \ifnum\@minute<\@x 0\fi\the\@minute
  }}%
\def\x@hyperref#1#2#3{%
   \catcode`\~ = 12
   \catcode`\$ = 12
   \catcode`\_ = 12
   \catcode`\# = 12
   \catcode`\& = 12
   \catcode`\% = 12
   \y@hyperref{#1}{#2}{#3}%
}
\def\y@hyperref#1#2#3#4{%
   #2\ref{#4}#3
   \catcode`\~ = 13
   \catcode`\$ = 3
   \catcode`\_ = 8
   \catcode`\# = 6
   \catcode`\& = 4
   \catcode`\% = 14
}
\def\QCTOpt[#1]#2{%
  \def\QCTOptB{#1}
  \def\QCTOptA{#2}
}
\def\QCTNOpt#1{%
  \def\QCTOptA{#1}
  \let\QCTOptB\empty
}
\def\Qct{%
  \@ifnextchar[{%
    \QCTOpt}{\QCTNOpt}
}
\def\QCBOpt[#1]#2{%
  \def\QCBOptB{#1}%
  \def\QCBOptA{#2}%
}
\def\QCBNOpt#1{%
  \def\QCBOptA{#1}%
  \let\QCBOptB\empty
}
\def\Qcb{%
  \@ifnextchar[{%
    \QCBOpt}{\QCBNOpt}%
}
\def\PrepCapArgs{%
  \ifx\QCBOptA\empty
    \ifx\QCTOptA\empty
      {}%
    \else
      \ifx\QCTOptB\empty
        {\QCTOptA}%
      \else
        [\QCTOptB]{\QCTOptA}%
      \fi
    \fi
  \else
    \ifx\QCBOptA\empty
      {}%
    \else
      \ifx\QCBOptB\empty
        {\QCBOptA}%
      \else
        [\QCBOptB]{\QCBOptA}%
      \fi
    \fi
  \fi
}
\def\GRAPHICSPS#1{%
 \ifcase\GRAPHICSTYPE
   \special{ps: #1}%
 \or
   \special{language "PS", include "#1"}%
 \fi
}%
\def\graffile#1#2#3#4{%
    \bgroup
	   \@inlabelfalse
       \leavevmode
       \@ifundefined{bbl@deactivate}{\def~{\string~}}{\activesoff}%
        \raise -#4 \BOXTHEFRAME{%
           \hbox to #2{\raise #3\hbox to #2{\null #1\hfil}}}%
    \egroup
}%
\def\draftbox#1#2#3#4{%
 \leavevmode\raise -#4 \hbox{%
  \frame{\rlap{\protect\tiny #1}\hbox to #2%
   {\vrule height#3 width\z@ depth\z@\hfil}%
  }%
 }%
}%
\let\nographics=\@msidraft
\newif\ifwasdraft
\def\GRAPHIC#1#2#3#4#5{%
   \ifnum\@msidraft=\@ne\draftbox{#2}{#3}{#4}{#5}%
   \else\graffile{#1}{#3}{#4}{#5}%
   \fi
}
\def\addtoLaTeXparams#1{%
    \edef\LaTeXparams{\LaTeXparams #1}}%
\newif\ifBoxFrame \BoxFramefalse
\newif\ifOverFrame \OverFramefalse
\newif\ifUnderFrame \UnderFramefalse
\def\BOXTHEFRAME#1{%
   \hbox{%
      \ifBoxFrame
         \frame{#1}%
      \else
         {#1}%
      \fi
   }%
}
\def\doFRAMEparams#1{\BoxFramefalse\OverFramefalse\UnderFramefalse\readFRAMEparams#1\end}%
\def\readFRAMEparams#1{%
 \ifx#1\end%
  \let\next=\relax
  \else
  \ifx#1i\dispkind=\z@\fi
  \ifx#1d\dispkind=\@ne\fi
  \ifx#1f\dispkind=\tw@\fi
  \ifx#1t\addtoLaTeXparams{t}\fi
  \ifx#1b\addtoLaTeXparams{b}\fi
  \ifx#1p\addtoLaTeXparams{p}\fi
  \ifx#1h\addtoLaTeXparams{h}\fi
  \ifx#1X\BoxFrametrue\fi
  \ifx#1O\OverFrametrue\fi
  \ifx#1U\UnderFrametrue\fi
  \ifx#1w
    \ifnum\@msidraft=1\wasdrafttrue\else\wasdraftfalse\fi
    \@msidraft=\@ne
  \fi
  \let\next=\readFRAMEparams
  \fi
 \next
 }%
\def\IFRAME#1#2#3#4#5#6{%
      \bgroup
      \let\QCTOptA\empty
      \let\QCTOptB\empty
      \let\QCBOptA\empty
      \let\QCBOptB\empty
      #6%
      \parindent=0pt
      \leftskip=0pt
      \rightskip=0pt
      \setbox0=\hbox{\QCBOptA}%
      \@tempdima=#1\relax
      \ifOverFrame
          \typeout{This is not implemented yet}%
          \show\HELP
      \else
         \ifdim\wd0>\@tempdima
            \advance\@tempdima by \@tempdima
            \ifdim\wd0 >\@tempdima
               \setbox1 =\vbox{%
                  \unskip\hbox to \@tempdima{\hfill\GRAPHIC{#5}{#4}{#1}{#2}{#3}\hfill}%
                  \unskip\hbox to \@tempdima{\parbox[b]{\@tempdima}{\QCBOptA}}%
               }%
               \wd1=\@tempdima
            \else
               \textwidth=\wd0
               \setbox1 =\vbox{%
                 \noindent\hbox to \wd0{\hfill\GRAPHIC{#5}{#4}{#1}{#2}{#3}\hfill}\\%
                 \noindent\hbox{\QCBOptA}%
               }%
               \wd1=\wd0
            \fi
         \else
            \ifdim\wd0>0pt
              \hsize=\@tempdima
              \setbox1=\vbox{%
                \unskip\GRAPHIC{#5}{#4}{#1}{#2}{0pt}%
                \break
                \unskip\hbox to \@tempdima{\hfill \QCBOptA\hfill}%
              }%
              \wd1=\@tempdima
           \else
              \hsize=\@tempdima
              \setbox1=\vbox{%
                \unskip\GRAPHIC{#5}{#4}{#1}{#2}{0pt}%
              }%
              \wd1=\@tempdima
           \fi
         \fi
         \@tempdimb=\ht1
         \advance\@tempdimb by -#2
         \advance\@tempdimb by #3
         \leavevmode
         \raise -\@tempdimb \hbox{\box1}%
      \fi
      \egroup%
}%
\def\DFRAME#1#2#3#4#5{%
  \vspace\topsep
  \hfil\break
  \bgroup
     \leftskip\@flushglue
	 \rightskip\@flushglue
	 \parindent\z@
	 \parfillskip\z@skip
     \let\QCTOptA\empty
     \let\QCTOptB\empty
     \let\QCBOptA\empty
     \let\QCBOptB\empty
	 \vbox\bgroup
        \ifOverFrame 
           #5\QCTOptA\par
        \fi
        \GRAPHIC{#4}{#3}{#1}{#2}{\z@}%
        \ifUnderFrame 
           \break#5\QCBOptA
        \fi
	 \egroup
  \egroup
  \vspace\topsep
  \break
}%
\def\FFRAME#1#2#3#4#5#6#7{%
  \@ifundefined{floatstyle}
    {
     \begin{figure}[#1]%
    }
    {
	 \ifx#1h
      \begin{figure}[H]%
	 \else
      \begin{figure}[#1]%
	 \fi
	}
  \let\QCTOptA\empty
  \let\QCTOptB\empty
  \let\QCBOptA\empty
  \let\QCBOptB\empty
  \ifOverFrame
    #4
    \ifx\QCTOptA\empty
    \else
      \ifx\QCTOptB\empty
        \caption{\QCTOptA}%
      \else
        \caption[\QCTOptB]{\QCTOptA}%
      \fi
    \fi
    \ifUnderFrame\else
      \label{#5}%
    \fi
  \else
    \UnderFrametrue%
  \fi
  \begin{center}\GRAPHIC{#7}{#6}{#2}{#3}{\z@}\end{center}%
  \ifUnderFrame
    #4
    \ifx\QCBOptA\empty
      \caption{}%
    \else
      \ifx\QCBOptB\empty
        \caption{\QCBOptA}%
      \else
        \caption[\QCBOptB]{\QCBOptA}%
      \fi
    \fi
    \label{#5}%
  \fi
  \end{figure}%
 }%
\def\makeactives{
  \catcode`\"=\active
  \catcode`\;=\active
  \catcode`\:=\active
  \catcode`\'=\active
  \catcode`\~=\active
}
   \gdef\activesoff{%
      \def"{\string"}%
      \def;{\string;}%
      \def:{\string:}%
      \def'{\string'}%
      \def~{\string~}%
    }
\def\FRAME#1#2#3#4#5#6#7#8{%
 \bgroup
 \ifnum\@msidraft=\@ne
   \wasdrafttrue
 \else
   \wasdraftfalse%
 \fi
 \def\LaTeXparams{}%
 \dispkind=\z@
 \def\LaTeXparams{}%
 \doFRAMEparams{#1}%
 \ifnum\dispkind=\z@\IFRAME{#2}{#3}{#4}{#7}{#8}{#5}\else
  \ifnum\dispkind=\@ne\DFRAME{#2}{#3}{#7}{#8}{#5}\else
   \ifnum\dispkind=\tw@
    \edef\@tempa{\noexpand\FFRAME{\LaTeXparams}}%
    \@tempa{#2}{#3}{#5}{#6}{#7}{#8}%
    \fi
   \fi
  \fi
  \ifwasdraft\@msidraft=1\else\@msidraft=0\fi{}%
  \egroup
 }%
\def\TEXUX#1{"texux"}
\long\def\QQQ#1#2{%
     \long\expandafter\def\csname#1\endcsname{#2}}%
\long\def\QQA#1#2{}%
\def\QTR#1#2{{\csname#1\endcsname {#2}}}%
\def\EXPAND#1[#2]#3{}%
\def\NOEXPAND#1[#2]#3{}%
\def\LaTeXparent#1{}%
\def\ChildStyles#1{}%
\def\ChildDefaults#1{}%
\def\QTagDef#1#2#3{}%
  \providecommand{\UNICODE}[2][]{\protect\rule{.1in}{.1in}}
  \providecommand{\U}[1]{\protect\rule{.1in}{.1in}}
\def\QQfnmark#1{\footnotemark}
 \def\abstract{%
  \if@twocolumn
   \section*{Abstract (Not appropriate in this style!)}%
   \else \small 
   \begin{center}{\bf Abstract\vspace{-.5em}\vspace{\z@}}\end{center}%
   \quotation 
   \fi
  }%
   \def\registered{\relax\ifmmode{}\r@gistered
                    \else$\m@th\r@gistered$\fi}%
 \def\r@gistered{^{\ooalign
  {\hfil\raise.07ex\hbox{$\scriptstyle\rm\text{R}$}\hfil\crcr
  \mathhexbox20D}}}}{}%
\newdimen\theight
\def\newfmtname{LaTeX2e}
  \DeclareOldFontCommand{\rm}{\normalfont\rmfamily}{\mathrm}
  \DeclareOldFontCommand{\sf}{\normalfont\sffamily}{\mathsf}
  \DeclareOldFontCommand{\tt}{\normalfont\ttfamily}{\mathtt}
  \DeclareOldFontCommand{\bf}{\normalfont\bfseries}{\mathbf}
  \DeclareOldFontCommand{\it}{\normalfont\itshape}{\mathit}
  \DeclareOldFontCommand{\sl}{\normalfont\slshape}{\@nomath\sl}
  \DeclareOldFontCommand{\sc}{\normalfont\scshape}{\@nomath\sc}
\def\alpha{{\Greekmath 010B}}%
\def\beta{{\Greekmath 010C}}%
\def\gamma{{\Greekmath 010D}}%
\def\delta{{\Greekmath 010E}}%
\def\epsilon{{\Greekmath 010F}}%
\def\zeta{{\Greekmath 0110}}%
\def\eta{{\Greekmath 0111}}%
\def\theta{{\Greekmath 0112}}%
\def\iota{{\Greekmath 0113}}%
\def\kappa{{\Greekmath 0114}}%
\def\lambda{{\Greekmath 0115}}%
\def\mu{{\Greekmath 0116}}%
\def\nu{{\Greekmath 0117}}%
\def\xi{{\Greekmath 0118}}%
\def\pi{{\Greekmath 0119}}%
\def\rho{{\Greekmath 011A}}%
\def\sigma{{\Greekmath 011B}}%
\def\tau{{\Greekmath 011C}}%
\def\upsilon{{\Greekmath 011D}}%
\def\phi{{\Greekmath 011E}}%
\def\chi{{\Greekmath 011F}}%
\def\psi{{\Greekmath 0120}}%
\def\omega{{\Greekmath 0121}}%
\def\varepsilon{{\Greekmath 0122}}%
\def\vartheta{{\Greekmath 0123}}%
\def\varpi{{\Greekmath 0124}}%
\def\varrho{{\Greekmath 0125}}%
\def\varsigma{{\Greekmath 0126}}%
\def\varphi{{\Greekmath 0127}}%
\def\nabla{{\Greekmath 0272}}
\def\FindBoldGroup{%
   {\setbox0=\hbox{$\mathbf{x\global\edef\theboldgroup{\the\mathgroup}}$}}%
}
\def\Greekmath#1#2#3#4{%
    \if@compatibility
        \ifnum\mathgroup=\symbold
           \mathchoice{\mbox{\boldmath$\displaystyle\mathchar"#1#2#3#4$}}%
                      {\mbox{\boldmath$\textstyle\mathchar"#1#2#3#4$}}%
                      {\mbox{\boldmath$\scriptstyle\mathchar"#1#2#3#4$}}%
                      {\mbox{\boldmath$\scriptscriptstyle\mathchar"#1#2#3#4$}}%
        \else
           \mathchar"#1#2#3#4%
        \fi 
    \else 
        \FindBoldGroup
        \ifnum\mathgroup=\theboldgroup 
           \mathchoice{\mbox{\boldmath$\displaystyle\mathchar"#1#2#3#4$}}%
                      {\mbox{\boldmath$\textstyle\mathchar"#1#2#3#4$}}%
                      {\mbox{\boldmath$\scriptstyle\mathchar"#1#2#3#4$}}%
                      {\mbox{\boldmath$\scriptscriptstyle\mathchar"#1#2#3#4$}}%
        \else
           \mathchar"#1#2#3#4%
        \fi     	    
	  \fi}
\newif\ifGreekBold  \GreekBoldfalse
\let\SAVEPBF=\pbf
\def\pbf{\GreekBoldtrue\SAVEPBF}%
  \newcounter{equationnumber}  
  \def\mathletters{%
     \addtocounter{equation}{1}
     \edef\@currentlabel{\theequation}%
     \setcounter{equationnumber}{\c@equation}
     \setcounter{equation}{0}%
     \edef\theequation{\@currentlabel\noexpand\alph{equation}}%
  }
    \def\BibTeX{{\rm B\kern-.05em{\sc i\kern-.025em b}\kern-.08em
                 T\kern-.1667em\lower.7ex\hbox{E}\kern-.125emX}}}{}%
\def\AmS{{\protect\usefont{OMS}{cmsy}{m}{n}%
                A\kern-.1667em\lower.5ex\hbox{M}\kern-.125emS}}}{}%
\def\@@eqncr{\let\@tempa\relax
    \ifcase\@eqcnt \def\@tempa{& & &}\or \def\@tempa{& &}%
      \else \def\@tempa{&}\fi
     \@tempa
     \if@eqnsw
        \iftag@
           \@taggnum
        \else
           \@eqnnum\stepcounter{equation}%
        \fi
     \fi
     \global\tag@false
     \global\@eqnswtrue
     \global\@eqcnt\z@\cr}
\def\TCItag{\@ifnextchar*{\@TCItagstar}{\@TCItag}}
\def\@TCItag#1{%
    \global\tag@true
    \global\def\@taggnum{(#1)}%
    \global\def\@currentlabel{#1}}
\def\@TCItagstar*#1{%
    \global\tag@true
    \global\def\@taggnum{#1}%
    \global\def\@currentlabel{#1}}
\def\tint{\msi@int\textstyle\int}%
\def\tiint{\msi@int\textstyle\iint}%
\def\tiiint{\msi@int\textstyle\iiint}%
\def\tiiiint{\msi@int\textstyle\iiiint}%
\def\tidotsint{\msi@int\textstyle\idotsint}%
\def\toint{\msi@int\textstyle\oint}%
\newtoks\temptoksa
\newtoks\temptoksb
\newtoks\temptoksc
\def\msi@int#1#2{%
 \def\@temp{{#1#2\the\temptoksc_{\the\temptoksa}^{\the\temptoksb}}}%
 \futurelet\@nextcs
 \@int
}
\def\@int{%
   \ifx\@nextcs\limits
      \typeout{Found limits}%
      \temptoksc={\limits}%
	  \let\@next\@intgobble%
   \else\ifx\@nextcs\nolimits
      \typeout{Found nolimits}%
      \temptoksc={\nolimits}%
	  \let\@next\@intgobble%
   \else
      \typeout{Did not find limits or no limits}%
      \temptoksc={}%
      \let\@next\msi@limits%
   \fi\fi
   \@next   
}%
\def\@intgobble#1{%
   \typeout{arg is #1}%
   \msi@limits
}
\def\msi@limits{%
   \temptoksa={}%
   \temptoksb={}%
   \@ifnextchar_{\@limitsa}{\@limitsb}%
}
\def\@limitsa_#1{%
   \temptoksa={#1}%
   \@ifnextchar^{\@limitsc}{\@temp}%
}
\def\@limitsb{%
   \@ifnextchar^{\@limitsc}{\@temp}%
}
\def\@limitsc^#1{%
   \temptoksb={#1}%
   \@ifnextchar_{\@limitsd}{\@temp}%
}
\def\@limitsd_#1{%
   \temptoksa={#1}%
   \@temp
}
\def\dint{\msi@int\displaystyle\int}%
\def\diint{\msi@int\displaystyle\iint}%
\def\diiint{\msi@int\displaystyle\iiint}%
\def\diiiint{\msi@int\displaystyle\iiiint}%
\def\didotsint{\msi@int\displaystyle\idotsint}%
\def\doint{\msi@int\displaystyle\oint}%
\def\ExitTCILatex{\makeatother }
\if@compatibility\message{amsmath already loaded}\fi\aftergroup\ExitTCILatex}
\if@compatibility\message{amstex already loaded}\fi\aftergroup\ExitTCILatex}
\if@compatibility\message{amsgen already loaded}\fi\aftergroup\ExitTCILatex}
\let\DOTSI\relax
\def\RIfM@{\relax\ifmmode}%
\def\FN@{\futurelet\next}%
\def\iint{\DOTSI\intno@\tw@\FN@\ints@}%
\def\iiint{\DOTSI\intno@\thr@@\FN@\ints@}%
\def\iiiint{\DOTSI\intno@4 \FN@\ints@}%
\def\idotsint{\DOTSI\intno@\z@\FN@\ints@}%
\def\ints@{\findlimits@\ints@@}%
\newif\iflimtoken@
\newif\iflimits@
\def\findlimits@{\limtoken@true\ifx\next\limits\limits@true
 \else\ifx\next\nolimits\limits@false\else
 \limtoken@false\ifx\ilimits@\nolimits\limits@false\else
 \ifinner\limits@false\else\limits@true\fi\fi\fi\fi}%
\def\multint@{\int\ifnum\intno@=\z@\intdots@                          
 \else\intkern@\fi                                                    
 \ifnum\intno@>\tw@\int\intkern@\fi                                   
 \ifnum\intno@>\thr@@\int\intkern@\fi                                 
 \int}
\def\multintlimits@{\intop\ifnum\intno@=\z@\intdots@\else\intkern@\fi
 \ifnum\intno@>\tw@\intop\intkern@\fi
 \ifnum\intno@>\thr@@\intop\intkern@\fi\intop}%
\def\intic@{%
    \mathchoice{\hskip.5em}{\hskip.4em}{\hskip.4em}{\hskip.4em}}%
\def\negintic@{\mathchoice
 {\hskip-.5em}{\hskip-.4em}{\hskip-.4em}{\hskip-.4em}}%
\def\ints@@{\iflimtoken@                                              
 \def\ints@@@{\iflimits@\negintic@
   \mathop{\intic@\multintlimits@}\limits                             
  \else\multint@\nolimits\fi                                          
  \eat@}
 \else                                                                
 \def\ints@@@{\iflimits@\negintic@
  \mathop{\intic@\multintlimits@}\limits\else
  \multint@\nolimits\fi}\fi\ints@@@}%
\def\intkern@{\mathchoice{\!\!\!}{\!\!}{\!\!}{\!\!}}%
\def\plaincdots@{\mathinner{\cdotp\cdotp\cdotp}}%
\def\intdots@{\mathchoice{\plaincdots@}%
 {{\cdotp}\mkern1.5mu{\cdotp}\mkern1.5mu{\cdotp}}%
 {{\cdotp}\mkern1mu{\cdotp}\mkern1mu{\cdotp}}%
 {{\cdotp}\mkern1mu{\cdotp}\mkern1mu{\cdotp}}}%
\def\RIfM@{\relax\protect\ifmmode}
\def\text{\RIfM@\expandafter\text@\else\expandafter\mbox\fi}
\let\nfss@text\text
\def\text@#1{\mathchoice
   {\textdef@\displaystyle\f@size{#1}}%
   {\textdef@\textstyle\tf@size{\firstchoice@false #1}}%
   {\textdef@\textstyle\sf@size{\firstchoice@false #1}}%
   {\textdef@\textstyle \ssf@size{\firstchoice@false #1}}%
   \glb@settings}
\def\textdef@#1#2#3{\hbox{{%
                    \everymath{#1}%
                    \let\f@size#2\selectfont
                    #3}}}
\newif\iffirstchoice@
\def\Let@{\relax\iffalse{\fi\let\\=\cr\iffalse}\fi}%
\def\vspace@{\def\vspace##1{\crcr\noalign{\vskip##1\relax}}}%
\def\multilimits@{\bgroup\vspace@\Let@
 \baselineskip\fontdimen10 \scriptfont\tw@
 \advance\baselineskip\fontdimen12 \scriptfont\tw@
 \lineskip\thr@@\fontdimen8 \scriptfont\thr@@
 \lineskiplimit\lineskip
 \vbox\bgroup\ialign\bgroup\hfil$\m@th\scriptstyle{##}$\hfil\crcr}%
\def\Sb{_\multilimits@}%
\def\endSb{\crcr\egroup\egroup\egroup}%
\def\Sp{^\multilimits@}%
\newdimen\ex@
\def\rightarrowfill@#1{$#1\m@th\mathord-\mkern-6mu\cleaders
 \hbox{$#1\mkern-2mu\mathord-\mkern-2mu$}\hfill
 \mkern-6mu\mathord\rightarrow$}%
\def\leftarrowfill@#1{$#1\m@th\mathord\leftarrow\mkern-6mu\cleaders
 \hbox{$#1\mkern-2mu\mathord-\mkern-2mu$}\hfill\mkern-6mu\mathord-$}%
\def\leftrightarrowfill@#1{$#1\m@th\mathord\leftarrow
\mkern-6mu\cleaders
 \hbox{$#1\mkern-2mu\mathord-\mkern-2mu$}\hfill
 \mkern-6mu\mathord\rightarrow$}%
\def\overrightarrow{\mathpalette\overrightarrow@}%
\def\overrightarrow@#1#2{\vbox{\ialign{##\crcr\rightarrowfill@#1\crcr
 \noalign{\kern-\ex@\nointerlineskip}$\m@th\hfil#1#2\hfil$\crcr}}}%
\def\overleftarrow{\mathpalette\overleftarrow@}%
\def\overleftarrow@#1#2{\vbox{\ialign{##\crcr\leftarrowfill@#1\crcr
 \noalign{\kern-\ex@\nointerlineskip}$\m@th\hfil#1#2\hfil$\crcr}}}%
\def\overleftrightarrow{\mathpalette\overleftrightarrow@}%
\def\overleftrightarrow@#1#2{\vbox{\ialign{##\crcr
   \leftrightarrowfill@#1\crcr
 \noalign{\kern-\ex@\nointerlineskip}$\m@th\hfil#1#2\hfil$\crcr}}}%
\def\underrightarrow{\mathpalette\underrightarrow@}%
\def\underrightarrow@#1#2{\vtop{\ialign{##\crcr$\m@th\hfil#1#2\hfil
  $\crcr\noalign{\nointerlineskip}\rightarrowfill@#1\crcr}}}%
\def\underleftarrow{\mathpalette\underleftarrow@}%
\def\underleftarrow@#1#2{\vtop{\ialign{##\crcr$\m@th\hfil#1#2\hfil
  $\crcr\noalign{\nointerlineskip}\leftarrowfill@#1\crcr}}}%
\def\underleftrightarrow{\mathpalette\underleftrightarrow@}%
\def\underleftrightarrow@#1#2{\vtop{\ialign{##\crcr$\m@th
  \hfil#1#2\hfil$\crcr
 \noalign{\nointerlineskip}\leftrightarrowfill@#1\crcr}}}%
\def\qopnamewl@#1{\mathop{\operator@font#1}\nlimits@}
\let\nlimits@\displaylimits
\def\setboxz@h{\setbox\z@\hbox}
\def\varlim@#1#2{\mathop{\vtop{\ialign{##\crcr
 \hfil$#1\m@th\operator@font lim$\hfil\crcr
 \noalign{\nointerlineskip}#2#1\crcr
 \noalign{\nointerlineskip\kern-\ex@}\crcr}}}}
 \def\rightarrowfill@#1{\m@th\setboxz@h{$#1-$}\ht\z@\z@
  $#1\copy\z@\mkern-6mu\cleaders
  \hbox{$#1\mkern-2mu\box\z@\mkern-2mu$}\hfill
  \mkern-6mu\mathord\rightarrow$}
\def\leftarrowfill@#1{\m@th\setboxz@h{$#1-$}\ht\z@\z@
  $#1\mathord\leftarrow\mkern-6mu\cleaders
  \hbox{$#1\mkern-2mu\copy\z@\mkern-2mu$}\hfill
  \mkern-6mu\box\z@$}
\def\projlim{\qopnamewl@{proj\,lim}}
\def\injlim{\qopnamewl@{inj\,lim}}
\def\varinjlim{\mathpalette\varlim@\rightarrowfill@}
\def\varprojlim{\mathpalette\varlim@\leftarrowfill@}
\def\varliminf{\mathpalette\varliminf@{}}
\def\varliminf@#1{\mathop{\underline{\vrule\@depth.2\ex@\@width\z@
   \hbox{$#1\m@th\operator@font lim$}}}}
\def\varlimsup{\mathpalette\varlimsup@{}}
\def\varlimsup@#1{\mathop{\overline
  {\hbox{$#1\m@th\operator@font lim$}}}}
\def\align{\@verbatim \frenchspacing\@vobeyspaces \@alignverbatim
You are using the "align" environment in a style in which it is not defined.}
\let\csname endalign*\endcsname =\endtrivlist
\def\alignat{\@verbatim \frenchspacing\@vobeyspaces \@alignatverbatim
You are using the "alignat" environment in a style in which it is not defined.}
\let\csname endalignat*\endcsname =\endtrivlist
\def\xalignat{\@verbatim \frenchspacing\@vobeyspaces \@xalignatverbatim
You are using the "xalignat" environment in a style in which it is not defined.}
\let\csname endxalignat*\endcsname =\endtrivlist
\def\gather{\@verbatim \frenchspacing\@vobeyspaces \@gatherverbatim
You are using the "gather" environment in a style in which it is not defined.}
\let\csname endgather*\endcsname =\endtrivlist
\def\multiline{\@verbatim \frenchspacing\@vobeyspaces \@multilineverbatim
You are using the "multiline" environment in a style in which it is not defined.}
\let\csname endmultiline*\endcsname =\endtrivlist
\def\arrax{\@verbatim \frenchspacing\@vobeyspaces \@arraxverbatim
You are using a type of "array" construct that is only allowed in AmS-LaTeX.}
\def\tabulax{\@verbatim \frenchspacing\@vobeyspaces \@tabulaxverbatim
You are using a type of "tabular" construct that is only allowed in AmS-LaTeX.}
\let\csname endarrax*\endcsname =\endtrivlist
\let\csname endtabulax*\endcsname =\endtrivlist
 \def\endequation{%
     \ifmmode\ifinner 
      \iftag@
        \addtocounter{equation}{-1} 
        $\hfil
           \displaywidth\linewidth\@taggnum\egroup \endtrivlist
        \global\tag@false
        \global\@ignoretrue   
      \else
        $\hfil
           \displaywidth\linewidth\@eqnnum\egroup \endtrivlist
        \global\tag@false
        \global\@ignoretrue 
      \fi
     \else   
      \iftag@
        \addtocounter{equation}{-1} 
        \eqno \hbox{\@taggnum}
        \global\tag@false%
        $$\global\@ignoretrue
      \else
        \eqno \hbox{\@eqnnum}
        $$\global\@ignoretrue
      \fi
     \fi\fi
 } 
 \newif\iftag@ \tag@false
 \def\TCItag{\@ifnextchar*{\@TCItagstar}{\@TCItag}}
 \def\@TCItag#1{%
     \global\tag@true
     \global\def\@taggnum{(#1)}%
     \global\def\@currentlabel{#1}}
 \def\@TCItagstar*#1{%
     \global\tag@true
     \global\def\@taggnum{#1}%
     \global\def\@currentlabel{#1}}
     \def\tag{\@ifnextchar*{\@tagstar}{\@tag}}
     \def\@tag#1{%
         \global\tag@true
         \global\def\@taggnum{(#1)}}
     \def\@tagstar*#1{%
         \global\tag@true
         \global\def\@taggnum{#1}}
\begin{document}

\title{\textquotedblleft Calibeating": Beating Forecasters at Their Own Game%
\thanks{%
Previous versions: February 2020; October 2021 (Hebrew University of
Jerusalem, Center for Rationality DP-743), May 2022, September 2022 (\texttt{%
arXiv:2209.04892}), October 2022 (\texttt{arXiv:2209.04892v2}). Journal
(shortened) version: \emph{Theoretical Economics} 18 (2023), 4, 1441-1474.
This version corrects an error in Appendix A.7 (Foster and Hart 2026), and
adds a new Appendix A.10 (Foster and Hart 2024). We thank Drew Fudenberg,
Benjy Weiss, the coeditor, and the referees for useful comments and
suggestions. A presentation is available at \texttt{%
http://www.ma.huji.ac.il/hart/pres.html\#calib-beat-p}}}
\author{Dean P. Foster\thanks{%
Department of Statistics, Wharton, University of Pennsylvania, Philadelphia,
and Amazon, New York. \emph{e-mail}: \texttt{dean@foster.net} \ \emph{web
page}: \texttt{http://deanfoster.net}} \and Sergiu Hart\thanks{%
Institute of Mathematics, Department of Economics, and Federmann Center for
the Study of Rationality, The Hebrew University of Jerusalem. \emph{e-mail}: 
\texttt{hart@huji.ac.il} \ \emph{web page}: \texttt{%
http://www.ma.huji.ac.il/hart}}}
\maketitle

\begin{abstract}
In order to identify expertise, forecasters should not be tested by their
calibration score, which can always be made arbitrarily small, but rather by
their Brier score. The Brier score is the sum of the calibration score and
the refinement score; the latter measures how good the sorting into bins
with the same forecast is, and thus attests to \textquotedblleft expertise."
This raises the question of whether one can gain calibration without losing
expertise, which we refer to as \textquotedblleft calibeating." We provide
an easy way to calibeat any forecast, by a deterministic online procedure.
We moreover show that calibeating can be achieved by a stochastic procedure
that is itself calibrated, and then extend the results to simultaneously
calibeating multiple procedures, and to deterministic procedures that are
continuously calibrated.
\end{abstract}

\tableofcontents

\def\@biblabel#1{#1\hfill}
\def\thebibliography#1{\section*{References}
\addcontentsline{toc}{section}{References}
\list
{}{
\labelwidth 0pt
\leftmargin 1.8em
\itemindent -1.8em
\usecounter{enumi}}
\def\newblock{\hskip .11em plus .33em minus .07em}
\sloppy\clubpenalty4000\widowpenalty4000
\sfcode`\.=1000\relax\def\baselinestretch{1}\large \normalsize}
\let\endthebibliography=\endlist
\newcommand{\shfrac}[2]{\ensuremath{{}^{#1} \hspace{-0.04in}/_{\hspace{-0.03in}#2}}}
\newcommand\T{\rule{0pt}{2.6ex}}
\newcommand\B{\rule[-1.2ex]{0pt}{0pt}}%

\section{Introduction}

Forecasters---whether of weather or of events like elections and
sports---make probabilistic predictions, such as \textquotedblleft the
probability of rain is $p.$\textquotedblright\ What does it mean, and how
does one test whether it is any good? Taking the classic view of probability
as long-run frequency, the above prediction translates to \textquotedblleft
in the days when the forecast is $p$ the frequency of rain is close to $p$
in the long run.\textquotedblright\ If this holds for all values of $p$ used
as forecasts, one says that the forecaster is \emph{calibrated. }There is a
large literature on calibration; see the survey of Olszewski (2015), and the
recent paper of Foster and Hart (2021), which also discusses the economic
utility of calibration (see Section I.A there).

The \emph{calibration score} $\mathcal{K}$ is defined as the average squared
distance between forecasts and realized (relative) frequencies (i.e., the
proportion of, say, rainy days), where each forecast is weighted by how
often it has been used; evaluated after $t$ days, this yields%
\begin{equation*}
\mathcal{K}=\frac{1}{t}\sum_{s=1}^{t}\left( c_{s}-\bar{a}(c_{s})\right) ^{2},
\end{equation*}%
where $c_{s}$ is the forecast at time $s$ and for each $p$ we denote by $%
\bar{a}(p)\equiv \bar{a}_{t}(p)$ the frequency of rain in the days from $1$
to $t$ in which the forecast was $p$ (giving weight $1/t$ to each day is the
same as weighting each forecast by the proportion of days it has been used).
Being calibrated means that $\mathcal{K}$ is (close to) $0.$

A classic and surprising result of Foster and Vohra (1998) is that one can
generate forecasts that are \emph{guaranteed} to be calibrated, no matter
what the weather will be. This immediately casts some doubt on whether
calibration is the appropriate way to test the expertise of forecasters.
(There is an extensive literature on \textquotedblleft experts" that uses
calibration tests to check whether they are indeed experts; see, e.g., the
book of Cesa-Bianchi and Lugosi 2006 and the survey of Olszewski 2015. The
fact that the calibration score is not the right way to identify experts
does \emph{not} imply that calibration should be ignored---on the contrary,
calibration is a useful property for forecasts to satisfy; see Section I.A
in Foster and Hart 2021.)

\begin{figure}[htbp] \centering%
\renewcommand{\arraystretch}{1.5}%
\begin{tabular}{c||c|c|c|c|c|c|c||c|c|c|}
\textbf{Day} & $1$ & $\;\;2\;\;$ & $3$ & $\;\;4\;\;$ & $5$ & $\;\;6\;\;$ & $%
\;\;...\;\;$ & $\;\;\mathcal{K\;\;}$ & $\;\;\mathcal{R\;\;}$ & $\;\;\mathcal{%
B\;\;}$ \\ \hline\hline
\textbf{Rain} & $1$ & $0$ & $1$ & $0$ & $1$ & $0$ &  &  &  &  \\ \hline
\textbf{F1} & $100\%$ & $0\%$ & $100\%$ & $0\%$ & $100\%$ & $0\%$ &  & $0$ & 
$0$ & $0$ \\ \hline
\textbf{F2} & $50\%$ & $50\%$ & $50\%$ & $50\%$ & $50\%$ & $50\%$ &  & $0$ & 
$0.25$ & $0.25$ \\ \hline
\end{tabular}%
\renewcommand{\arraystretch}{1}%

\caption{Two calibrated forecasts\label{fig1}}%
\end{figure}%

Take the following simple and well-known example (see Figure \ref{fig1}).
Suppose that the weather alternates between rain on odd days and no rain on
even days. Consider two rain forecasters: F1 forecasts $100\%$ on odd days
and $0\%$ on even days, and F2 forecasts $50\%$ every day. While both
forecasts are well calibrated (the calibration score $\mathcal{K}$ of F1 is $%
0$ every day, and that of F2 is $0$ on even days and $\approx 0$,
specifically, $1/(4t^{2})$, on odd days), F1 is clearly a much better and
more useful forecaster than F2.

The difference between the two forecasts is underscored by appealing to the
classic \emph{Brier} (1950) \emph{score }$\mathcal{B}$, which measures how
close the forecasts and the realizations are, by the standard mean squared
error formula:%
\begin{equation*}
\mathcal{B}=\frac{1}{t}\sum_{s=1}^{t}\left( c_{s}-a_{s}\right) ^{2},
\end{equation*}%
where $a_{s}$ denotes the weather on day $s$, with $a_{s}=1$ standing for
rain and $a_{s}=0$ for no rain, and $c_{s}$ is, as above, the forecast on
day $s$. For F1 the Brier score $\mathcal{B}$ is $0$ every day (because $%
c_{s}=a_{s}$ for all $s)$, whereas for F2 it is $1/4$ every day (because $%
(0.5-1)^{2}=(0.5-0)^{2}=1/4)$. The Brier score thus distinguishes well
between the two forecasters ($\mathcal{B}=0$ vs. $\mathcal{B}=1/4$), while
the calibration score does not ($\mathcal{K}=0$ for both).

To interpret this difference in the Brier scores, view forecasting as
consisting of two separate ingredients. The first one is the
\textquotedblleft classification" or \textquotedblleft sorting" of days into
\textquotedblleft bins," where all the days with the same forecast $p$ are
assigned to the same bin. The second one is the specific value of the
forecast $p$ that is used to define each bin, which we refer to as the
\textquotedblleft label" of the bin. In the above example, F1 sorts the days
into two bins, a $100\%$-bin, which consists of the odd days, and a $0\%$%
-bin, which consists of the even days, whereas for F2 there is a single bin,
the $50\%$-bin, which contains all days. Both bins of F1 are homogeneous:
there is no variance among the days in the same bin (they are either all
\textquotedblleft rain," or all \textquotedblleft no rain"); by contrast, in
the single bin of F2 there is a high variance among the days (half of them
are \textquotedblleft rain" and half \textquotedblleft no rain"). This
\textquotedblleft within-bin variance" is captured by the \emph{refinement
score} $\mathcal{R}$, which is the average squared distance between the
weather $a_{s}$ and the bin-average weather (which is the average frequency
of rain on the days from $1$ to $t$ that are in the $c_{s}$-bin, i.e., on
those days when the forecast was the same as on day $s$), denoted by $\bar{a}%
(c_{s})$:%
\begin{equation*}
\mathcal{R}=\frac{1}{t}\sum_{s=1}^{t}\left( a_{s}-\bar{a}(c_{s})\right) ^{2}.
\end{equation*}

The Brier score neatly decomposes into the sum of the refinement and the
calibration scores, 
\begin{equation*}
\mathcal{B}=\mathcal{R}+\mathcal{K}
\end{equation*}%
(this easily follows from the equality $\mathbb{E}\left[ X^{2}\right] =%
\mathbb{V}ar\left[ X\right] +\left( \mathbb{E}\left[ X\right] \right) ^{2}$;
see Section \ref{sus:scores}). The refinement score $\mathcal{R}$ yields the
average of the within-bin variances, and the calibration score $\mathcal{K}$
the average squared distance between the bin labels and the bin averages.
Perfect calibration, i.e., $\mathcal{K}=0$, says that all the labels are
correct: the label of each bin, i.e., the value of the forecast that defines
the bin, is equal to the average weather of the bin. In addition, the
refinement score $\mathcal{R}$ and the calibration score $\mathcal{K}$ are
\textquotedblleft orthogonal": changing the labels does not affect $\mathcal{%
R}$ (indeed, $\mathcal{R}$ is the \textquotedblleft relabeling-minimum"
Brier score; see Section \ref{sus:general B}), and changing the distribution
of actions within each bin without changing their average does not affect $%
\mathcal{K}$. Returning to the example, we have $\mathcal{R}=$ $\mathcal{K}=%
\mathcal{B}=0$ for all $t$ for F1, and $\mathcal{R}\approx 1/4$, $\mathcal{K}%
\approx 0$, $\mathcal{B}=1/4$ for all $t$ for F2 (for perfect classification
without calibration, use, for instance, the forecast $75\%$ on odd days and
the forecast $25\%$ on even days: $\mathcal{R}=0$ and $\mathcal{K}=1/16$ for
all $t$).

Thus, our first conclusion is

\begin{quote}
\underline{\emph{Conclusion}}\emph{: Experts should better be tested by the
Brier score and not by calibration alone.}
\end{quote}

Unlike the calibration score, the Brier score cannot in general be brought
down to zero in the long run. Indeed, for an i.i.d.$\ 50\%$ probability of
rain, the refinement score $\mathcal{R}$ is close to $(1/2)\cdot (1/2)=1/4$
for \emph{any} forecasting sequence (because this is the variance of each
bin), and thus the Brier score $\mathcal{B}$ is at least $1/4$. However, if
there are certain \textquotedblleft regularities" or \textquotedblleft
patterns" in the weather, then an expert forecaster who recognizes them can
get a lower refinement score. For example, suppose that it is very likely
that when it rains, it does so for precisely two consecutive days; this
means a high probability, say $90\%$, that $1$ comes after $01$ and also
that $0$ comes after $011$ (where $1$ stands for rain and $0$ for no rain).
For a forecaster that forecasts $p_{1}$ if and only if the last two days
were $01$, and forecasts $p_{2}$ (different from $p_{1}$) if and only if the
last three days were $011$, the $p_{1}$-bin and the $p_{2}$-bin each have a
low variance of $0.9\cdot 0.1=0.09$. Knowledge about the weather, which we
refer to as \emph{expertise}, is thus reflected in sorting the days into
bins that consist of similar days, and in making the binning as refined as
possible (which can only decrease $\mathcal{R}$; see Section \ref{s:self}
and Appendix \ref{s-a:refined})---that is, in having a low refinement score $%
\mathcal{R}.$

Returning to calibration, a forecaster can always guarantee its forecasts to
be calibrated, by the Foster and Vohra (1998) result. However, this would
require it to run one of the calibration procedures (some of which---like
the \textquotedblleft forecast-hedging" one of Section 5 of Foster and Hart
2021---are extremely simple) and ignore whatever expert knowledge he has
about the weather, and whichever patterns he has identified in the data.

Thus, the natural question that arises is

\begin{quote}
\underline{\emph{Question}}\emph{: Can one gain calibration without losing
expertise?}
\end{quote}

\noindent In formal terms, can one decrease $\mathcal{K}$ to zero without
increasing $\mathcal{R}$?

This can of course always be done \emph{in retrospect}: replacing each
forecast $p$ with the corresponding bin average $\bar{a}(p)$ yields
calibration while preserving the binning, and thus the refinement score $%
\mathcal{R}$. For example, if the frequency of rain on the days when the
forecast was $70\%$ turned out to be $40\%$, then each forecast of $70\%$ is
\textquotedblleft corrected" to $40\%$. The new calibration score is then
zero, i.e., $\mathcal{K}^{\prime }=0$, while the refinement score is
unchanged, i.e., $\mathcal{R}^{\prime }=\mathcal{R}$; therefore, the Brier
score is decreased by the calibration score: $\mathcal{B}^{\prime }=\mathcal{%
B-K}$ (because $\mathcal{B}^{\prime }=\mathcal{K}^{\prime }+\mathcal{R}%
^{\prime }=0+\mathcal{R=R}$ and $\mathcal{R=B-K}$). We will call this

\begin{quote}
\emph{\textquotedblleft }\underline{\emph{Calibeating}}\emph{": Beating the
Brier score by an amount equal to the calibration score}.
\end{quote}

\noindent The calibeating described above is however obtained only in
retrospect---\emph{offline}---since the bin averages are known only at the
time $t$ when the testing is done. Moreover, the forecast corrections depend
on the testing horizon $t$, since the average frequency of rain may well
change over time: $\bar{a}_{t}(p)$ and $\bar{a}_{t^{\prime }}(p)$ may be
quite different for $t\neq t^{\prime }$.

The interesting question is then what can be done \emph{online}, by a
procedure where the forecast of each day $s$ may be modified on the basis of
what is known at that time only and nothing beyond it (i.e., neither the
upcoming weather on day $s$, nor the future weather and forecasts on days
after $s$). Our main result is

\begin{quote}
\underline{\emph{Result}}\emph{: One can guarantee online calibeating of
forecasts.}
\end{quote}

The first result (Theorem \ref{th:beat-1} in Section \ref{s:simple}) shows
that this can be achieved by a simple online procedure: replace each
forecast by the average frequency of rain on the \emph{previous} days in
which this forecast was made. This attains---\emph{online}---the same
lowering of the Brier score by the calibration score that is obtained by the
above offline correction. We emphasize that this calibeating is achieved for
weather and forecasts that are arbitrary (and not stationary in any way),
for sorting into bins that may be far from perfect, and for bin averages
that need not converge; moreover, everything is guaranteed uniformly, even
against a so-called \textquotedblleft adversary." The proof uses a neat
online estimation of the variance.

Thus, any forecast that is not calibrated can be beaten, online, by another
forecast with a strictly better (i.e., lower) Brier score. An alternative
interpretation of the result takes a forecasting procedure and announces
every period, instead of the intended forecast, its corresponding
calibeating replacement (as described in the previous paragraph). This
generates a new forecasting procedure, whose Brier score is lower than that
of the original one---a clear improvement. This may apply, for instance, to
\textquotedblleft online regression" or \textquotedblleft online
least-squares" procedures, introduced by Foster (1991)---see also Forster
(1999), Vovk (2001), Azoury and Warmuth (2001), and Cesa-Bianchi and Lugosi
(2006)---which minimize the Brier score directly, and need not be calibrated
in general.

Now the calibeating procedure of our first result need not be calibrated
itself, which means that it may be calibeaten too. To avoid this, our second
result (Theorem \ref{th:calibration} in Section \ref{s:beat-by-calibrated})
provides a calibeating procedure that is guaranteed to be calibrated, by
appealing to a \textquotedblleft stochastic fixed point" result, namely, the
stochastic \textquotedblleft outgoing minimax" tool of Foster and Hart
(2021). The calibeating in this case thus yields $\mathcal{K}^{\prime }=0$
and $\mathcal{B}^{\prime }=\mathcal{R}^{\prime }\leq \mathcal{R}.$

The procedure of this second result is \emph{stochastic}, as it must be in
order to guarantee calibration (cf. Dawid 1982, Oakes 1985, and Foster and
Vohra 1998). However, if the calibration requirement is weakened to \emph{%
continuous} calibration---a concept introduced in Foster and Hart (2021),
which implies smooth and weak calibration as well, and suffices for
equilibrium dynamics---we obtain (Theorem \ref{th:cont} in Section \ref%
{s:beat-by-calibrated} and Theorem \ref{th:cont-calib} in Appendix \ref%
{s:cont-calib}) \emph{deterministic} calibeating procedures that are
continuously calibrated. This requires the use of a fixed point tool,
specifically, the \textquotedblleft outgoing fixed point" result of Foster
and Hart (2021); see Section III.D there, and Appendix \ref{sus:fp-mm
procedures} here, for the distinction between minimax and fixed point
methods.

Next, we show that all the above results can be extended to simultaneously
calibeating multiple forecasters (Theorem \ref{th:multi} in Section \ref%
{s:multi-beat}).

Finally, we comment on the use of the quadratic scores (such as $\left\Vert
a-c\right\Vert ^{2}$). This is standard in statistics (e.g., analysis of
variance and linear regression), as it easily leads to useful
decompositions, such as the Brier score being the sum of the refinement and
calibration scores here. However, it raises the question of how much do our
results depend on using the quadratic scores.\footnote{%
We thank the referee who posed this question.} While a general analysis is
beyond the scope of the present paper, we believe that the ideas and
approach here carry through for other scoring functions; in Appendix \ref%
{s-a:log} we show this for another classic scoring rule, the logarithmic one.

To summarize the contribution of this paper: we address the frequently asked
question of how to get better forecasts when there is some expertise. We
argue that expertise should better be tested by the Brier score and not just
by calibration, and show how to calibeat forecasts that are not calibrated:
lower their Brier score by at least their calibration score, without losing
the expertise embodied in these forecasts.

\section{The Setup\label{s:setup}}

Let $A$ be the set of possible outcomes, which we call \emph{actions}, and
let $C$ be the set of \emph{forecasts} about these actions. We assume that $%
C\subset \mathbb{R}^{m}$ is a nonempty compact convex subset of a Euclidean
space, and that $A\subseteq C$. Some examples: (i) $A=\{0,1\}$, with $a=1$
standing for \textquotedblleft rain\textquotedblright\ and $a=0$ for
\textquotedblleft no rain,\textquotedblright\ and $C=[0,1]$, with $c$ in $C$
standing for \textquotedblleft the chance of rain is $c$"; (ii) more
generally, $C$ is the set of probability distributions $\Delta (A)$ on a
finite set $A$, i.e., a unit simplex (we identify the elements of $A$ with
the unit vectors of $C$); (iii) $C$ is the convex hull $\mathrm{conv}(A)$ of 
$A$. Let $\gamma :=\mathrm{diam}(C)\equiv \max_{c,c^{\prime }\in
C}\left\Vert c-c^{\prime }\right\Vert $ denote the \emph{diameter} of the
set $C$. Let $\delta >0;$ a subset $D$ of $C$ is a $\delta $-\emph{grid} of $%
C$ if for every $c\in C$ there is $d\in D$ at a distance of less than $%
\delta $ from $c$, i.e., $\left\Vert d-c\right\Vert <\delta ;$ a compact set 
$C$ always has a finite $\delta $-grid (obtained from a finite subcover by
open $\delta $-balls).

The time periods are indexed by $t=1,2,..$. . An \emph{action sequence} is $%
\mathbf{a}=(a_{t})_{t\geq 1}$ with $a_{t}\in A$ for all $t$, and we write $%
\mathbf{a}_{t}=(a_{s})_{1\leq s\leq t}$ for its first $t$ elements;
similarly, a \emph{forecasting sequence} is $\mathbf{c}=(c_{t})_{t\geq 1}$
with $c_{t}\in C$ for all $t$, and we put $\mathbf{c}_{t}=(c_{s})_{1\leq
s\leq t}.$

\subsection{The Calibration, Refinement, and Brier Scores\label{sus:scores}}

Fix a time horizon $t$. For each possible forecast $x$ in $C$ let\footnote{%
The number of elements of a finite set $Z$ is denoted by $|Z|.$}%
\begin{eqnarray*}
n_{t}(x) &%
{\;:=\;}%
&|\{1\leq s\leq t:c_{s}=x\}|, \\
\bar{a}_{t}(x) &%
{\;:=\;}%
&\frac{1}{n_{t}(x)}\sum_{1\leq s\leq t:c_{s}=x}^{{}}a_{s},\text{\ and} \\
v_{t}(x) &%
{\;:=\;}%
&\frac{1}{n_{t}(x)}\sum_{1\leq s\leq t:c_{s}=x}^{{}}\left\Vert a_{s}-\bar{a}%
_{t}(x)\right\Vert ^{2}
\end{eqnarray*}%
be, respectively, the \emph{number} of times that the forecast $x$ has been
used up to time $t$, and the action\emph{\ average} and \emph{variance} in
those periods; when $x$ has not been used, i.e., $n_{t}(x)=0$, we put for
convenience $v_{t}(x)%
{\;:=\;}%
0$ and (see below) $e_{t}(x)%
{\;:=\;}%
0$.

The \emph{calibration error} $e_{t}(x)$ of a forecast $x$ is the difference
between the action average and $x$, i.e., 
\begin{equation*}
e_{t}(x)%
{\;:=\;}%
\bar{a}_{t}(x)-x,
\end{equation*}%
and the \emph{calibration score }is the average square calibration error,
i.e.,\footnote{%
The sum is finite as it goes over all $x$ with $n_{t}(x)>0,$ i.e., over $x$
in the set $\{c_{1},...,c_{t}\}$.}%
\begin{equation*}
\mathcal{K}_{t}%
{\;:=\;}%
\sum_{x\in C}\left( \frac{n_{t}(x)}{t}\right) \left\Vert e_{t}(x)\right\Vert
^{2};
\end{equation*}%
thus, the error of each $x$ is weighted in proportion to the number of times 
$n_{t}(x)$ that $x$ has been used (the weights add up to $1$ because $%
\sum_{x}n_{t}(x)=t$). Since from $1$ to $t$ there are exactly $n_{t}(x)$
terms with $x=c_{s}$, this is equivalent to%
\begin{equation*}
\mathcal{K}_{t}=\frac{1}{t}\sum_{s=1}^{t}\left\Vert e_{t}(c_{s})\right\Vert
^{2}=\frac{1}{t}\sum_{s=1}^{t}\left\Vert \bar{a}_{t}(c_{s})-c_{s}\right\Vert
^{2}.
\end{equation*}%
We refer to $\mathcal{K}_{t}$ as the \textquotedblleft $\ell _{2}$%
-calibration score," to distinguish it from $K_{t}$ (note the different
font) that is used in other papers (e.g., Foster and Hart 2021 and Hart
2021), and which is the \textquotedblleft $\ell _{1}$-calibration score,"
i.e., the weighted average of $\left\Vert e_{t}(x)\right\Vert $ rather than $%
\left\Vert e_{t}(x)\right\Vert ^{2}$. The two scores are equivalent, since $%
(K_{t})^{2}\leq \mathcal{K}_{t}\leq \gamma K_{t}$ (the first inequality by
Jensen's inequality, the second by $\left\Vert e_{t}(x)\right\Vert \leq
\gamma $), and so $\mathcal{K}_{t}\rightarrow 0$ if and only if $%
K_{t}\rightarrow 0.$

The \emph{refinement score} is the average over all forecasts of the
corresponding action variances:%
\begin{equation*}
\mathcal{R}_{t}%
{\;:=\;}%
\sum_{x\in C}\left( \frac{n_{t}(x)}{t}\right) v_{t}(x);
\end{equation*}%
again, this is equivalently expressed as 
\begin{equation*}
\mathcal{R}_{t}=\frac{1}{t}\sum_{s=1}^{t}\left\Vert a_{s}-\bar{a}%
_{t}(c_{s})\right\Vert ^{2}.
\end{equation*}

Finally, the \emph{Brier} (1950) \emph{score,} 
\begin{equation*}
\mathcal{B}_{t}%
{\;:=\;}%
\frac{1}{t}\sum_{s=1}^{t}\left\Vert a_{s}-c_{s}\right\Vert ^{2},
\end{equation*}%
measures how close the forecasts $c_{s}$ are to the actions $a_{s}$ by a
standard mean of squared error formula. This is a so-called
\textquotedblleft strictly proper scoring rule," which means that if the
sequence $\mathbf{a}_{t}$ is generated by a probability distribution $%
\mathbb{P}$, then the unique minimizer of the expected Brier score is the
sequence $c_{s}=\mathbb{P}\left[ a_{s}|\mathbf{a}_{s-1}\right] $ of true
conditional probabilities (assume for simplicity that $C$ is the set of
probability distributions $\Delta (A)$ on a finite set $A$).

One may assume for convenience\footnote{%
See Foster and Hart (2021); this matters also when generalizing to
fractional binnings (Section \ref{s:cont-calib}).} that one assigns to the
bins the \emph{differences} \linebreak $z_{s}:=a_{s}-c_{s}$ between actions
and forecasts, instead of the actions $a_{s}$; this amounts to subtracting
the constant $x$ from all the entries in the $x$-bin, and then $e_{t}(x)$
and $v_{t}(x)$ become, respectively, the expectation and variance of the $x$%
-bin. The empirical distribution of the differences $z_{s}$ and of the bin
labels $c_{s}$ yields two ($\mathbb{R}^{m}$-valued) random variables, which
we denote by $Z$ and $U$, respectively; namely, the pair $(Z,U)$ takes the
value $(z_{s},c_{s})\equiv (a_{s}-c_{s},c_{s})$ for $s=1,...,t$ with
probability $1/t$ each. With this representation we have%
\begin{eqnarray*}
e_{t}(x) &=&\mathbb{E}\left[ Z|U=x\right] , \\
v_{t}(x) &=&\mathbb{V}ar\left[ Z|U=x\right] , \\
\mathcal{K}_{t} &=&\mathbb{E}\left[ \left\Vert \mathbb{E}\left[ Z|U\right]
\right\Vert ^{2}\right] , \\
\mathcal{R}_{t} &=&\mathbb{E}\left[ \mathbb{V}ar\left[ Z|U\right] \right]
,\;\;\text{and} \\
\mathcal{B}_{t} &=&\mathbb{E}\left[ \left\Vert Z\right\Vert ^{2}\right] =%
\mathbb{E}\left[ \mathbb{E[}\left\Vert Z\right\Vert ^{2}|U]\right] .
\end{eqnarray*}%
Using the identity $\mathbb{E}\left[ X^{2}\right] =\mathbb{V}ar\left[ X%
\right] +\left( \mathbb{E}\left[ X\right] \right) ^{2}$ for each one of the $%
m$ coordinates of $Z|U$, summing over the coordinates, and then taking
overall expectation yields%
\begin{equation}
\mathcal{B}_{t}=\mathcal{R}_{t}+\mathcal{K}_{t},  \label{eq:B=R+K}
\end{equation}%
which is a useful decomposition of the Brier score (see Sanders 1963 and
Murphy 1972). Appendix \ref{s-a:decompose} generalizes this to
\textquotedblleft fractional" binnings.

For each $x$ the variance $v_{t}(x)$ of the $x$-bin is the minimum over $%
y\in C$ of\linebreak\ $n_{t}(x)^{-1}\sum_{1\leq s\leq
t:c_{s}=x}^{{}}\left\Vert a_{s}-y\right\Vert ^{2},$ which is attained when $%
y $ equals the bin average $\bar{a}_{t}(x)$. Therefore, the refinement score
is the Brier score where each bin label $x$ is replaced by $\bar{a}_{t}(x)$,
and this is the minimal Brier score over all relabelings of the bins: 
\begin{equation}
\mathcal{R}_{t}=\min_{\phi }\mathcal{B}_{t}^{\phi (\mathbf{c})},
\label{eq:R=minB}
\end{equation}%
where the minimum is taken over all functions $\phi :C\rightarrow C$ (from
current labels $x$ to new labels $y$), and we write $\mathcal{B}_{t}^{\phi (%
\mathbf{c})}$ for the Brier score where the sequence $\mathbf{c}$ is
replaced by\footnote{%
Joining two bins that have the same average does not affect the refinement
score.} $\phi (\mathbf{c})=(\phi (c_{s}))_{1\leq s\leq t}$. Thus, starting
from the Brier scoring rule, we could define the refinement score $\mathcal{R%
}$ as the \textquotedblleft relabeling-minimum" Brier score, and the
calibration score $\mathcal{K}$ as the \textquotedblleft residual" score $%
\mathcal{B-R}$. The same holds for the logarithmic scoring rule (see
Appendix \ref{s-a:log}), and may well be used for other scoring rules.

\subsection{Calibration\label{sus:calibration}}

A stochastic \emph{forecasting procedure} $\sigma $ is a mapping $\sigma
:\cup _{t\geq 1}(A^{t-1}\times C^{t-1})\rightarrow \Delta (C);$ i.e., to
each history $(\mathbf{a}_{t-1},\mathbf{c}_{t-1})$ of actions and forecasts
before time $t$ the procedure $\sigma $ assigns a probability distribution $%
\sigma (\mathbf{a}_{t-1},\mathbf{c}_{t-1})$ on $C$, which yields the
forecast $c_{t}\in C$. When these distributions are all pure (i.e., their
support is always a single $c_{t}$ in $C$), the procedure is \emph{%
deterministic}.

Let $\varepsilon \geq 0;$ a (stochastic) procedure $\sigma $ is $\varepsilon 
$\emph{-calibrated} (Foster and Vohra 1998) if\footnote{%
The calibration score $\mathcal{K}_{t}$ depends on the actions and forecasts
up to time $t,$ and is thus a function $\mathcal{K}_{t}\equiv \mathcal{K}%
_{t}(\mathbf{a},\sigma )$ of the action sequence $\mathbf{a}$ and the
forecasting procedure $\sigma $ (in fact, only $\mathbf{a}^{t}$ and $\sigma
^{t}$ matter for $\mathcal{K}_{t}$). The same applies to the other scores
throughout the paper.}$_{\text{'}}$\footnote{%
The reason that we have $\varepsilon ^{2}$ on the right-hand side is that we
are dealing here with the square-calibration score; the same applies to
calibeating. The definition here implies the standard one that uses $K_{t}$
instead of $\mathcal{K}_{t}$ (e.g., Foster and Hart 2021), since, as we have
seen in Section \ref{sus:scores}, $(K_{t})^{2}\leq \mathcal{K}_{t}.$}%
\begin{equation*}
\varlimsup_{t\rightarrow \infty }\left( \sup_{\mathbf{a}_{t}}\mathbb{E}\left[
\mathcal{K}_{t}\right] \right) \leq \varepsilon ^{2}
\end{equation*}%
(the expectation $\mathbb{E}$ is taken over the random forecasts of $\sigma $%
).

\subsection{The Concept of \textquotedblleft Calibeating"\label%
{sus:calibeat-def}}

We come now to the central concept of this paper, \textquotedblleft
calibeating," which stands for \textquotedblleft beating by an amount equal
to the calibration score": a forecasting sequence $\mathbf{c}$
\textquotedblleft calibeats" another forecasting sequence $\mathbf{b}$ if,
fixing the action sequence, $\mathbf{c}$ beats the Brier score of $\mathbf{b}
$ by at least $\mathbf{b}$'s calibration score (i.e., $\mathcal{B}^{\mathbf{c%
}}\leq \mathcal{B}^{\mathbf{b}}-\mathcal{K}^{\mathbf{b}}$ in the long run).
Thus, if $\mathbf{b}$ is not calibrated, and hence its calibration score $%
\mathcal{K}^{\mathbf{b}}$ is positive, then the Brier score $\mathcal{B}^{%
\mathbf{c}}$ of $\mathbf{c}$ is not just better (i.e., lower) than the Brier
score $\mathcal{B}^{\mathbf{b}}$ of $\mathbf{b}$, but it is strictly better,
by at least $\mathcal{K}^{\mathbf{b}}$. The formal definition will require
calibeating to be carried out \emph{online}---i.e., to have access only to
the current forecast of $\mathbf{b\ }$(and the history) and nothing beyond
that---and also to be \emph{guaranteed}---i.e., to hold no matter what the
sequences of actions and forecasts will be; moreover, this should hold
uniformly over all these sequences.

By way of the uniformity requirement, we consider a given set $B\subseteq C$
of possible forecasts; for instance, $B$ may be a finite set. A forecasting
procedure $\sigma $ all of whose forecasts are in $B$ is called a $B$-\emph{%
forecasting procedure }(when $B=C$ we will usually just say a
\textquotedblleft forecasting procedure"). Let $\Sigma _{B}$ denote the set
of all $B$-forecasting procedures $\sigma $, i.e., all mappings $\sigma
:\cup _{t\geq 1}(A^{t-1}\times B^{t-1})\rightarrow \Delta (B)$. For $\sigma
\in \Sigma _{B}$, let $b_{t}\in B$ denote the forecast at time $t$, and put $%
\mathbf{b}_{t}=(b_{s})_{1\leq s\leq t}$ and $\mathbf{b}=(b_{s})_{s\geq 1}.$

Assume that in each period $t$ the forecast $b_{t}$ is announced \emph{before%
} we provide our forecast $c_{t}$. Thus, (the distribution of) $c_{t}$ may
depend on $(\mathbf{a}_{t-1},\mathbf{c}_{t-1},\mathbf{b}_{t})$, i.e., on the
history $h_{t-1}=(\mathbf{a}_{t-1},\mathbf{c}_{t-1},\mathbf{b}_{t-1})$
before time $t$ together with the current $b_{t}$. A $\mathbf{b}$\emph{%
-based forecasting procedure} $\zeta $ is a mapping\footnote{%
One should not confuse \textquotedblleft $B$-forecasting" with
\textquotedblleft $\mathbf{b}$-based"; the former refers to the \emph{outputs%
} of the procedure (all forecasts are in $B$) whereas the latter refers to
the \emph{inputs} of the procedure (the sequence $\mathbf{b}$)$.$}\emph{\ }$%
\zeta :\cup _{t\geq 1}(A^{t-1}\times C^{t-1}\times B^{t})\rightarrow \Delta
(C)$. We will use superscripts $\mathbf{b},\mathbf{c}$ on the scores $%
\mathcal{B},\mathcal{R},\mathcal{K}$ to denote the sequence to which they
apply, and similarly for action averages; for example, $\bar{a}_{t}^{\mathbf{%
b}}(x)$ is the average of the actions in all periods $s\leq t$ where $%
b_{s}=x $, and $\bar{a}_{t}^{\mathbf{c}}(x)$ is the average of the actions
in all periods $s\leq t$ where $c_{s}=x$.

Let $\varepsilon \geq 0$; a $\mathbf{b}$-based procedure $\zeta $ is $%
(\varepsilon ,B)$\emph{-calibeating} if its Brier score beats the Brier
score of \emph{any} $B$-forecasting procedure $\sigma $ (on which it is
based) by that procedure's calibration score; formally,%
\begin{equation}
\varlimsup_{t\rightarrow \infty }\left( \sup_{\sigma \in \Sigma _{B}}\sup_{%
\mathbf{a}_{t}\in A^{t}}\mathbb{E}\left[ \mathcal{B}_{t}^{\mathbf{c}}\mathbf{%
-}\left( \mathcal{B}_{t}^{\mathbf{b}}-\mathcal{K}_{t}^{\mathbf{b}}\right) %
\right] \right) \leq \varepsilon ^{2},  \label{eq:calibeat-def}
\end{equation}%
where the expectation $\mathbb{E}$ is over the random forecasts of $\sigma $
and $\zeta $; when $\varepsilon =0$ we call this $B$-\emph{calibeating}.
Thus, calibeating is guaranteed for any sequence $\mathbf{a}$ of actions and
any sequence $\mathbf{b}$ of resulting forecasts of $\sigma $, \emph{%
uniformly }over all $B$-forecasting procedures $\sigma $ and action
sequences $\mathbf{a}.$

Clearly, condition (\ref{eq:calibeat-def}) is not affected if one allows the
sequences $\mathbf{a}_{t}$ to be random. Moreover, since \emph{all }%
sequences $\mathbf{a}_{t}$ are considered, one may envision an
\textquotedblleft adversary" that chooses the $B$-forecasting procedure $%
\sigma $ as well as the action sequence $\mathbf{a}_{t}$, and so the
sequences $\mathbf{b}_{t}$ and $\mathbf{a}_{t}$ may well be
\textquotedblleft coordinated." Thus, $\sup_{\sigma }\sup_{\mathbf{a}_{t}}$
in (\ref{eq:calibeat-def}) is the same as $\sup_{\mathbf{a}_{t},\mathbf{b}%
_{t}}$, where $\mathbf{b}_{t}$ ranges over $B^{t};$ indeed, the latter
supremum can only be larger, as all sequences $\mathbf{b}_{t}$ are
considered there and not just those generated by $\sigma $; however, it
cannot be strictly larger since all $\sigma $ that forecast a fixed sequence 
$\mathbf{b}_{t}$ (ignoring the history) are included in the former supremum.
Thus, a $\mathbf{b}$-based procedure $\zeta $ is $(\varepsilon ,B)$%
-calibeating if%
\begin{equation}
\varlimsup_{t\rightarrow \infty }\left( \sup_{\mathbf{a}_{t}\in A^{t},%
\mathbf{b}_{t}\in B^{t}}\mathbb{E}\left[ \mathcal{B}_{t}^{\mathbf{c}}\mathbf{%
-}\left( \mathcal{B}_{t}^{\mathbf{b}}-\mathcal{K}_{t}^{\mathbf{b}}\right) %
\right] \right) \leq \varepsilon ^{2},  \label{eq:calibeat-def1}
\end{equation}%
where the expectation is now over the randomizations of $\zeta .$

\subsubsection{Calibeating for General $B$\label{sus:general B}}

Since $\mathcal{B-K=R}$, we can replace $\mathcal{B}_{t}^{\mathbf{b}}-%
\mathcal{K}_{t}^{\mathbf{b}}$ in (\ref{eq:calibeat-def}) and (\ref%
{eq:calibeat-def1}) with the refinement score $\mathcal{R}_{t}^{\mathbf{b}}$
of $\mathbf{b}$: calibeating means that $\mathbf{c}$'s Brier score beats $%
\mathbf{b}$'s refinement score. This allows the notion of calibeating to be
generalized to sequences $\mathbf{b}=(b_{t})_{t\geq 1}$ for which $b_{t}$
need not be an element of $C$. The \textquotedblleft forecast" may thus be
\textquotedblleft a nice day," a \textquotedblleft red day," a
\textquotedblleft $b$-day," or just \textquotedblleft $b$," for some $b$ in
an arbitrary set $B$. What matters for the resulting refinement scores $%
\mathcal{R}_{t}^{\mathbf{b}}$ are the bins into which the days are
classified and the ensuing bin variances; the specific labels $b$ of the
bins do not matter (the labels \emph{do} however matter for the calibration
score, which is \textquotedblleft orthogonal" to the refinement score).
Therefore, we extend our definition to arbitrary sets $B$: a $\mathbf{b}$%
-based forecasting procedure is $(\varepsilon ,B)$\emph{-calibeating} if%
\begin{equation}
\varlimsup_{t\rightarrow \infty }\left( \sup_{\sigma \in \Sigma _{B}}\sup_{%
\mathbf{a}_{t}\in A^{t}}\mathbb{E}\left[ \mathcal{B}_{t}^{\mathbf{c}}\mathbf{%
-}\mathcal{R}_{t}^{\mathbf{b}}\right] \right) \leq \varepsilon ^{2}
\label{eq:calibeat-def-R}
\end{equation}%
or, equivalently,%
\begin{equation}
\varlimsup_{t\rightarrow \infty }\left( \sup_{\mathbf{a}_{t}\in A^{t},%
\mathbf{b}_{t}\in B^{t}}\mathbb{E}\left[ \mathcal{B}_{t}^{\mathbf{c}}\mathbf{%
-}\mathcal{R}_{t}^{\mathbf{b}}\right] \right) \leq \varepsilon ^{2}.
\label{eq:calibeat-def-R1}
\end{equation}

As we will see below, this natural extension will be useful, for instance,
when considering the joint binning generated by several forecasting
procedures.

Finally, calibeating can be formalized in terms of Brier scores only. Since
the refinement score is the minimal Brier score over all relabelings of the
bins (see (\ref{eq:R=minB})), it follows that calibeating amounts to getting
the Brier score of $\mathbf{c}$ down to the \textquotedblleft
relabeling-minimum" Brier score of $\mathbf{b}$, i.e. (ignoring $\varepsilon 
$, $\varlimsup $, and $\sup $), $\mathcal{B}_{t}^{\mathbf{c}}\leq \min_{\phi
}\mathcal{B}_{t}^{\phi (\mathbf{b})}$, where the minimum is taken over all
functions $\phi :B\rightarrow \Delta (A)$ (the minimum is attained when $%
\phi (b)$ equals the average $\bar{a}_{t}^{\mathbf{b}}(b)$ of the $b$-bin;
cf. the correction of forecasts \textquotedblleft in retrospect" in the
Introduction).

\section{The Online Refinement Score\label{s:online R}}

The main tool that we will use is that $\mathcal{R}_{t}$, the refinement
score at time $t$, which is the average variance of the bins and can thus be
computed only at time $t$ when the averages of all bins are known (i.e., 
\emph{offline}), can be approximated by a similar score $\widetilde{\mathcal{%
R}}_{t}$, which is computed period by period (i.e., \emph{online}).

Specifically, we define the \emph{online refinement score }$\widetilde{%
\mathcal{R}}_{t}$ \emph{at time }$t$ by 
\begin{equation*}
\widetilde{\mathcal{R}}_{t}%
{\;:=\;}%
\frac{1}{t}\sum_{s=1}^{t}\left\Vert a_{s}-\bar{a}_{s-1}(c_{s})\right\Vert
^{2},
\end{equation*}%
where for each $c$ in $C$ we take $\bar{a}_{0}(c)$ to be an arbitrary
element of $C$. Comparing this with the refinement score $\mathcal{R}%
_{t}=(1/t)\sum_{s=1}^{t}\left\Vert a_{s}-\bar{a}_{t}(c_{s})\right\Vert ^{2}$%
, we see that what $\widetilde{\mathcal{R}}_{t}$ does is to replace for each 
$s=1,...,t$ the term $\bar{a}_{t}(c_{s})$, the average at time $t$ of the $%
c_{s}$-bin to which $a_{s}$ is assigned\footnote{%
As pointed out in Section \ref{sus:scores}, neither $\mathcal{R}_{t}$ nor $%
\widetilde{\mathcal{R}}_{t}$ is affected whether we assign to the $c_{s}$%
-bin the action $a_{s}$ or the difference $z_{s}=a_{s}-c_{s}$.} (an average
that will be determined only at time $t$, i.e., offline), by the term $\bar{a%
}_{s-1}(c_{s})$, the past average (i.e., before time $s)$ of that same $%
c_{s} $-bin, which \emph{is} known at time $s$ (i.e., online).

The following proposition bounds the difference between $\widetilde{\mathcal{%
R}}_{t}$ and $\mathcal{R}_{t}$.

\begin{proposition}
\label{p:online-R}For any $t\geq 1$ and any sequences $\mathbf{a}_{t}$ and $%
\mathbf{c}_{t}$ we have 
\begin{equation}
\mathcal{R}_{t}\leq \widetilde{\mathcal{R}}_{t}\leq \mathcal{R}_{t}+\gamma
^{2}\frac{N_{t}}{t}\left( \ln \left( \frac{t}{N_{t}}\right) +1\right) ,
\label{eq:Rtilde-R}
\end{equation}%
where $N_{t}:=\left\vert \{c_{s}:1\leq s\leq t\}\right\vert $ is the number
of distinct elements in the sequence $\mathbf{c}_{t}=(c_{1},...,c_{t})$
(i.e., the number of distinct forecasts used).
\end{proposition}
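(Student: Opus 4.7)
The plan is to work one bin at a time, using the classical \emph{incremental variance} identity, then aggregate across bins with the concavity of the logarithm.

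Fix a forecast value $x$ used $n = n_t(x)$ times, at periods $s_1 < s_2 < \cdots < s_n$, and write $A_k := a_{s_k}$ and $\bar A_k := (1/k)\sum_{j\le k} A_j$ for the running mean (with $\bar A_0 \in C$ arbitrary). Set $T_k := \sum_{j=1}^k \|A_j - \bar A_k\|^2$. The first step is to establish the Welford-type recursion
\[
T_k - T_{k-1} \;=\; \frac{k-1}{k}\,\|A_k - \bar A_{k-1}\|^2,
\]
which follows by a direct expansion using $\bar A_k = \bar A_{k-1} + \tfrac{1}{k}(A_k - \bar A_{k-1})$. Summing $k=1,\dots,n$ gives
\[
\sum_{k=1}^n \|A_k - \bar A_n\|^2 \;=\; \sum_{k=1}^n \frac{k-1}{k}\,\|A_k - \bar A_{k-1}\|^2 \;\le\; \sum_{k=1}^n \|A_k - \bar A_{k-1}\|^2,
\]
and the leftmost quantity is exactly the contribution of the $x$-bin to $t\mathcal{R}_t$, while the rightmost is its contribution to $t\widetilde{\mathcal{R}}_t$. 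Summing over all bins yields the left inequality $\mathcal{R}_t \le \widetilde{\mathcal{R}}_t$.

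For the upper bound, rearrange the recursion as
\[
\|A_k - \bar A_{k-1}\|^2 \;=\; \frac{k}{k-1}(T_k - T_{k-1}) \;=\; (T_k - T_{k-1}) + \frac{1}{k}\|A_k - \bar A_{k-1}\|^2
\]
(the case $k=1$ is immediate since $T_1 = 0$). Summing over $k$ and using $\|A_k - \bar A_{k-1}\| \le \gamma$ (both points lie in $C$, which has diameter $\gamma$) gives the per-bin gap
\[
\sum_{k=1}^n \|A_k - \bar A_{k-1}\|^2 - \sum_{k=1}^n \|A_k - \bar A_n\|^2 \;=\; \sum_{k=1}^n \frac{1}{k}\|A_k - \bar A_{k-1}\|^2 \;\le\; \gamma^2 \sum_{k=1}^n \frac{1}{k} \;\le\; \gamma^2\bigl(1+\ln n\bigr).
\]

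Finally, let the distinct forecasts in $\mathbf{c}_t$ be $x_1,\dots,x_{N_t}$ with bin sizes $n_1,\dots,n_{N_t}$ summing to $t$. Aggregating the bin-by-bin gap,
\[
t\bigl(\widetilde{\mathcal{R}}_t - \mathcal{R}_t\bigr) \;\le\; \gamma^2 \sum_{i=1}^{N_t}\bigl(1+\ln n_i\bigr) \;=\; \gamma^2 N_t + \gamma^2 \sum_{i=1}^{N_t}\ln n_i,
\]
and by Jensen's inequality (concavity of $\ln$), $\sum_i \ln n_i \le N_t \ln(t/N_t)$. Dividing by $t$ yields the right inequality in \eqref{eq:Rtilde-R}.

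The only real obstacle is getting the telescoping identity right; once it is in hand, both inequalities fall out of the same computation, and the harmonic-sum bound $\sum_{k\le n} 1/k \le 1 + \ln n$ together with Jensen's inequality produces the precise $\gamma^2(N_t/t)(\ln(t/N_t)+1)$ form.
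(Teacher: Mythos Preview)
Your proof is correct and follows essentially the same approach as the paper: the Welford variance-update identity applied bin by bin (which the paper states separately as Proposition~\ref{p:var}), the harmonic-sum bound $\sum_{k\le n}1/k\le 1+\ln n$ for the per-bin gap, and Jensen's inequality for $\ln$ to aggregate across bins. The only cosmetic difference is that the paper factors out the single-bin identity as a standalone proposition before invoking it, whereas you carry it out inline.
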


Thus, $\widetilde{\mathcal{R}}_{t}-\mathcal{R}_{t}\rightarrow 0$ as $%
t\rightarrow \infty $ when $N_{t}/t\rightarrow 0$, i.e., the number of
forecasts used up to time $t$ increases at a slower rate than\footnote{%
For a simple example where $N_{t}/t$ does not converge to $0$ and the online
refinement score $\widetilde{\mathcal{R}}_{t}$ does not approach the
refinement score $\mathcal{R}_{t},$ take%
\begin{equation*}
\begin{array}{ccccccccccccc}
\mathbf{a:} & 0 & 1 & 0 & 1 & 0 & 1 & 0 & 1 & ... & 0 & 1 & ... \\ 
\mathbf{c:} & 1 & 1 & \frac{1}{2} & \frac{1}{2} & \frac{1}{3} & \frac{1}{3}
& \frac{1}{4} & \frac{1}{4} & ... & \frac{1}{n} & \frac{1}{n} & ...%
\end{array}%
.
\end{equation*}%
Indeed, for all even periods $t=2n$ (where $N_{t}=n,$ and so $%
N_{t}/t\rightarrow 1/2),$ we have $\mathcal{R}_{t}=1/4$ (since each $(1/i)$%
-bin contains two elements, $a_{2i-1}=0$ and $a_{2i}=1)$ and $\widetilde{%
\mathcal{R}}_{t}\geq 1/2$ (since $(a_{2i}-\bar{a}%
_{2i-1}(c_{2i}))^{2}=(1-0)^{2}=1$ and $(a_{2i-1}-\bar{a}%
_{2i-2}(c_{2i-1}))^{2}\geq 0$).} $t$. When forecasts belong to a \emph{finite%
} set $D\subset C$, and so $N_{t}\leq |D|$ and $\ln (t/N_{t})\leq \ln t$ for
all $t$, we get 
\begin{equation}
0\leq \widetilde{\mathcal{R}}_{t}-\mathcal{R}_{t}\leq |D|\frac{\ln t+1}{t}.
\label{eq:finite D}
\end{equation}

Proposition \ref{p:online-R} follows from the following online formula for
the variance. Let $(x_{n})_{n\geq 1}$ be a sequence of vectors in a
Euclidean space (or, more generally, in a normed vector space).

\begin{proposition}
\label{p:var}For every $n\geq 1$ we have 
\begin{equation}
\sum_{i=1}^{n}\left\Vert x_{i}-\bar{x}_{n}\right\Vert
^{2}=\sum_{i=1}^{n}\left( 1-\frac{1}{i}\right) \left\Vert x_{i}-\bar{x}%
_{i-1}\right\Vert ^{2},  \label{eq:p-var}
\end{equation}%
where $\bar{x}_{m}:=(1/m)\sum_{i=1}^{m}x_{i}$ denotes the average of%
\footnote{%
The sum on the right-hand side of (\ref{eq:p-var}) effectively starts from $%
i=2,$ and so it does not matter how $\bar{x}_{0}$ is defined.} $%
x_{1},...,x_{m}.$
\end{proposition}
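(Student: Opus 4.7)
The plan is to prove Proposition \ref{p:var} by induction on $n$, exploiting the standard recursive update of the running mean. The base case $n=1$ is immediate: both sides equal $0$ because $\bar{x}_1 = x_1$ on the left and the factor $1-1/1$ kills the single term on the right.

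For the inductive step, let $S_n := \sum_{i=1}^n \|x_i - \bar{x}_n\|^2$. Assuming the identity at stage $n-1$, it suffices to establish the increment
\begin{equation*}
S_n - S_{n-1} \;=\; \Bigl(1 - \frac{1}{n}\Bigr)\,\|x_n - \bar{x}_{n-1}\|^2.
\end{equation*}
First I would write down the one-step update of the mean, $\bar{x}_n - \bar{x}_{n-1} = \frac{1}{n}(x_n - \bar{x}_{n-1})$, and set $d := x_n - \bar{x}_{n-1}$ as shorthand. Then for $i \leq n-1$ I expand
\begin{equation*}
\|x_i - \bar{x}_n\|^2 \;=\; \|x_i - \bar{x}_{n-1}\|^2 \;-\; \frac{2}{n}\langle x_i - \bar{x}_{n-1},\,d\rangle \;+\; \frac{1}{n^2}\|d\|^2,
\end{equation*}
and separately $\|x_n - \bar{x}_n\|^2 = \bigl((n-1)/n\bigr)^2 \|d\|^2$.

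The key cancellation I would exploit next is the defining property $\sum_{i=1}^{n-1}(x_i - \bar{x}_{n-1}) = 0$, which kills the cross-term when I sum the expansion over $i=1,\dots,n-1$. This yields $\sum_{i=1}^{n-1}\|x_i - \bar{x}_n\|^2 = S_{n-1} + \frac{n-1}{n^2}\|d\|^2$, and adding the $i=n$ contribution gives $S_n - S_{n-1} = \bigl(\frac{n-1}{n^2} + \frac{(n-1)^2}{n^2}\bigr)\|d\|^2 = \frac{n-1}{n}\|d\|^2$, as required.

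I do not expect any real obstacle: the whole proof is powered by the orthogonality between the centered data $\{x_i - \bar{x}_{n-1}\}_{i\leq n-1}$ and any fixed vector, which makes the inner-product term vanish in the sum. The argument is purely algebraic and uses only the parallelogram-type expansion $\|u+v\|^2 = \|u\|^2 + 2\langle u,v\rangle + \|v\|^2$, so it is valid in any inner-product space, in particular in the Euclidean setting of the paper. Since the sum on the right-hand side of (\ref{eq:p-var}) vanishes at $i=1$ regardless of $\bar{x}_0$, the arbitrary choice of $\bar{x}_0$ (noted in the footnote) plays no role.
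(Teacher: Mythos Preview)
Your proof is correct and follows essentially the same approach as the paper: both establish the one-step increment $S_n-S_{n-1}=(1-\tfrac1n)\|x_n-\bar{x}_{n-1}\|^2$ and then iterate. The only cosmetic difference is that the paper shortens the computation by assuming without loss of generality that $\bar{x}_{n-1}=0$ (and then using $S_n=\sum_i\|x_i\|^2-n\|\bar{x}_n\|^2$), whereas you carry the shift $d=x_n-\bar{x}_{n-1}$ through explicitly and cancel the cross term via $\sum_{i\le n-1}(x_i-\bar{x}_{n-1})=0$; the content is the same.
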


\begin{proof}
Put $s_{n}:=\sum_{i=1}^{n}\left\Vert x_{i}-\bar{x}_{n}\right\Vert ^{2}$; we
claim that%
\begin{equation}
s_{n}=s_{n-1}+\left( 1-\frac{1}{n}\right) \left\Vert x_{n}-\bar{x}%
_{n-1}\right\Vert ^{2}.  \label{eq:diff-var}
\end{equation}%
We provide a short proof:\footnote{%
An alternative proof of (\ref{eq:diff-var}) uses $\mathbb{V}ar(X)=\mathbb{E}%
\left[ \mathbb{V}ar(X|Y)\right] +\mathbb{V}ar(\mathbb{E}\left[ X|Y\right] )$%
, where $X=x_{i}$ with probability $1/n$ and $Y$ is the indicator that $i=n.$
Formula (\ref{eq:diff-var}) is known as a \textquotedblleft variance update"
formula; see, e.g., Welford (1962).} let $n\geq 2$ (when $n=1$ both sides
vanish), and assume that $\bar{x}_{n-1}=0$ (this is without loss of
generality, since subtracting a constant from all the $x_{i}$ does not
affect any of the terms); then $\bar{x}_{n}=(1/n)x_{n}$, and so, using $%
s_{n}=\sum_{i=1}^{n}||x_{i}||^{2}-n||\bar{x}_{n}||^{2}$, we get%
\begin{equation*}
s_{n}-s_{n-1}=\left( \sum_{i=1}^{n}\left\Vert x_{i}\right\Vert
^{2}-n\left\Vert \frac{1}{n}x_{n}\right\Vert ^{2}\right)
-\sum_{i=1}^{n-1}\left\Vert x_{i}\right\Vert ^{2}=\left\Vert
x_{n}\right\Vert ^{2}-\frac{1}{n}\left\Vert x_{n}\right\Vert ^{2},
\end{equation*}%
which is $(1-1/n)\left\Vert x_{n}\right\Vert ^{2}=(1-1/n)\left\Vert x_{n}-%
\bar{x}_{n-1}\right\Vert ^{2}$.

Applying (\ref{eq:diff-var}) recursively yields the result.
\end{proof}

\bigskip

Let $v_{n}:=(1/n)\sum_{i=1}^{n}\left\Vert x_{i}-\bar{x}_{n}\right\Vert ^{2}$
denote the variance of $x_{1},...,x_{n}$, and put $\widetilde{v}%
_{n}:=(1/n)\sum_{i=1}^{n}\left\Vert x_{i}-\bar{x}_{i-1}\right\Vert ^{2};$
i.e., $\bar{x}_{n}$, the final (up to $n$) average, is replaced for each $%
i=1,...,n$ with $\bar{x}_{i-1}$, the previous (up to $i-1)$ average (take $%
\bar{x}_{0}$ to be an arbitrary element of the convex hull of the $x_{i}$).
We refer to $\widetilde{v}_{n}$ as the \emph{online variance} of $%
x_{1},...,x_{n}$. Proposition \ref{p:var} gives $\widetilde{v}%
_{n}-v_{n}=(1/n)\sum_{i=1}^{n}(1/i)\left\Vert x_{i}-\bar{x}_{i-1}\right\Vert
^{2}$, and so%
\begin{equation}
0\leq \widetilde{v}_{n}-v_{n}\leq \frac{1}{n}\sum_{i=1}^{n}\frac{1}{i}\xi
^{2}\leq \xi ^{2}\frac{\ln n+1}{n},  \label{eq:v-tilda}
\end{equation}%
where $\xi :=\max_{1\leq i,j\leq n}\left\Vert x_{i}-x_{j}\right\Vert $;
moreover, the\footnote{%
We use standard asymptotic notation as $n\rightarrow \infty $: $%
f(n)=O(g(n)),\;f(n)=o(g(n)),\;$and $f(n)\sim g(n)$ stand for, respectively, $%
\overline{\lim }_{n\rightarrow \infty }f(n)/g(n)<\infty ,$ $%
\lim_{n\rightarrow \infty }f(n)/g(n)=0,$ and $\lim_{n\rightarrow \infty
}f(n)/g(n)=1.$} $O(\log n/n)$ bound is tight (take each $x_{i}$ to be at a
distance of at least some $\delta >0$ from $\bar{x}_{i-1}$), and thus so is
the $O(\log t/t)$ bound in Proposition \ref{p:online-R} and (\ref{eq:finite
D}).

Proposition \ref{p:online-R} now easily follows.

\bigskip

\begin{proof}[Proof of Proposition \protect\ref{p:online-R}]
Let $D\equiv D_{t}:=\{c_{s}:1\leq s\leq t\}\subset C$ be the set of
forecasts used up to time $t$, i.e., the set of nonempty bins. For each $%
d\in D$ we apply (\ref{eq:v-tilda}) to get%
\begin{equation*}
0\leq \frac{1}{n_{t}(d)}\sum_{s\leq t:c_{s}=d}^{{}}\left\Vert a_{s}-\bar{a}%
_{s-1}(d)\right\Vert ^{2}-\frac{1}{n_{t}(d)}\sum_{s\leq
t:c_{s}=d}^{{}}\left\Vert a_{s}-\bar{a}_{t}(d)\right\Vert ^{2}\leq \gamma
^{2}\frac{\ln n_{t}(d)+1}{n_{t}(d)}.
\end{equation*}%
Averaging over $d$ in $D_{t}$ with the weights $n_{t}(d)/t$ then yields%
\begin{equation*}
0\leq \widetilde{\mathcal{R}}_{t}-\mathcal{R}_{t}\leq \gamma ^{2}\frac{1}{t}%
\sum_{d\in D_{t}}(\ln n_{t}(d)+1).
\end{equation*}%
Since the function $\ln $ is concave and $\sum_{d\in D_{t}}n_{t}(d)=t$, the
sum $\sum_{d\in D_{t}}\ln n_{t}(d)$ is maximal when all the $n_{t}(d)$ are
equal, i.e., when $n_{t}(d)=t/N_{t}$ for each $d\in D_{t};$ this yields the
result (\ref{eq:Rtilde-R}).
\end{proof}

\section{A Simple Way to Calibeat\label{s:simple}}

We provide a simple calibeating procedure. The set $B$ is taken for now to
be finite (the restriction on the number of possible forecasts, i.e., on the
number of bins, is needed in order for the resulting classification to be
meaningful; in the extreme case where all forecasts are distinct, and thus
each bin contains a single element, we have $\mathcal{R}_{t}=0$ for all $t$%
). This finiteness assumption may be relaxed; see Remark (d) below.

\begin{theorem}
\label{th:beat-1}Let $B$ be a finite set, and let $\zeta $ be the
deterministic $\mathbf{b}$-based forecasting procedure given by 
\begin{equation}
c_{t}=\bar{a}_{t-1}^{\mathbf{b}}(b_{t})  \label{eq:c=a-bar}
\end{equation}%
for every time $t\geq 1$ (if $t$ is the first time that $b_{t}$ is used,
take $c_{t}$ to be an arbitrary element of $C$). Then $\zeta $ is $B$%
-calibeating; specifically,%
\begin{equation}
0\leq \mathcal{B}_{t}^{\mathbf{c}}-\mathcal{R}_{t}^{\mathbf{b}}\leq \gamma
^{2}|B|\frac{\ln t+1}{t}  \label{eq:r-b}
\end{equation}%
for all $t\geq 1$ and all sequences $\mathbf{a}_{t}\in A^{t}$ and $\mathbf{b}%
_{t}\in B^{t}.$
\end{theorem}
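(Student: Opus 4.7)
The plan is to observe that the theorem is essentially an immediate consequence of Proposition \ref{p:online-R} applied to the sequence $\mathbf{b}$, once one notices that the Brier score of the procedure $\zeta$ coincides (by construction) with the online refinement score of $\mathbf{b}$.

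More precisely, the first step is to rewrite $\mathcal{B}_t^{\mathbf{c}}$ using the definition of $\zeta$. Since $c_s=\bar{a}_{s-1}^{\mathbf{b}}(b_s)$ for each $s\geq 1$ (where for the first occurrence of any label $b_s$ we set both $c_s$ and $\bar{a}_{s-1}^{\mathbf{b}}(b_s)$ to the same arbitrary element of $C$), we obtain
\[
\mathcal{B}_t^{\mathbf{c}} \;=\; \frac{1}{t}\sum_{s=1}^{t}\left\Vert a_s-c_s\right\Vert^{2} \;=\; \frac{1}{t}\sum_{s=1}^{t}\left\Vert a_s-\bar{a}_{s-1}^{\mathbf{b}}(b_s)\right\Vert^{2} \;=\; \widetilde{\mathcal{R}}_t^{\mathbf{b}},
\]
i.e., the Brier score of the (deterministic) $\mathbf{b}$-based procedure equals, exactly, the online refinement score of the $\mathbf{b}$-bins. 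This identification is the whole content of the construction, and it is what makes Proposition \ref{p:online-R} applicable.

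The second step is to invoke Proposition \ref{p:online-R} with the forecast sequence $\mathbf{b}$ in place of $\mathbf{c}$ (the proposition is stated for arbitrary sequences of bin labels, and the derivation in Section \ref{s:online R} does not use that $b_s\in C$ — only that actions land in bins and variances are computed within each bin, consistent with the extension to general $B$ from Section \ref{sus:general B}). This gives
\[
\mathcal{R}_t^{\mathbf{b}} \;\leq\; \widetilde{\mathcal{R}}_t^{\mathbf{b}} \;\leq\; \mathcal{R}_t^{\mathbf{b}} + \gamma^{2}\,\frac{N_t^{\mathbf{b}}}{t}\!\left(\ln\!\left(\frac{t}{N_t^{\mathbf{b}}}\right)+1\right),
\]
where $N_t^{\mathbf{b}}$ is the number of distinct labels used by $\mathbf{b}$ up to time $t$. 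Combined with the first step, this is already the desired inequality $0\leq \mathcal{B}_t^{\mathbf{c}}-\mathcal{R}_t^{\mathbf{b}}\leq \gamma^{2} N_t^{\mathbf{b}}\, t^{-1}(\ln(t/N_t^{\mathbf{b}})+1)$.

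For the third step I replace $N_t^{\mathbf{b}}$ by the uniform bound $|B|$. Since $N_t^{\mathbf{b}}\leq |B|$ and the map $x\mapsto x(\ln(t/x)+1)$ is nondecreasing on $(0,t]$ (its derivative equals $\ln(t/x)\geq 0$), the right-hand side is bounded by $\gamma^{2}|B|(\ln(t/|B|)+1)/t\leq \gamma^{2}|B|(\ln t+1)/t$, which is exactly the bound (\ref{eq:r-b}). Note that this bound is uniform in the action sequence $\mathbf{a}_t$ and the label sequence $\mathbf{b}_t\in B^{t}$, so it yields $B$-calibeating in the sense of (\ref{eq:calibeat-def-R1}) with $\varepsilon=0$. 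There is really no obstacle here: the whole difficulty of the result was absorbed into the variance-update lemma (Proposition \ref{p:var}) and the online refinement bound (Proposition \ref{p:online-R}); the only thing that could be considered slightly delicate is the pedestrian check that the monotonicity of $x(\ln(t/x)+1)$ lets us pass from the data-dependent $N_t^{\mathbf{b}}$ to the a-priori bound $|B|$.
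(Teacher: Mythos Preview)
Your proof is correct and follows exactly the paper's approach: identify $\mathcal{B}_t^{\mathbf{c}}=\widetilde{\mathcal{R}}_t^{\mathbf{b}}$ by construction, then invoke Proposition~\ref{p:online-R} (specifically~(\ref{eq:finite D})). Your third step spells out the passage from $N_t^{\mathbf{b}}$ to $|B|$ via the monotonicity of $x\mapsto x(\ln(t/x)+1)$, whereas the paper uses the simpler two-step bound $N_t(\ln(t/N_t)+1)\leq N_t(\ln t+1)\leq |B|(\ln t+1)$; both are valid and land on the same final estimate.
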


\begin{proof}
Our choice of $c_{t}=\bar{a}_{t-1}^{\mathbf{b}}(b_{t})$ makes $\mathcal{B}%
_{t}^{\mathbf{c}}=\widetilde{\mathcal{R}}_{t}^{\mathbf{b}}$ for any $\mathbf{%
a}_{t}$ and $\mathbf{b}_{t};$ use Proposition \ref{p:online-R} (see (\ref%
{eq:finite D})).\footnote{%
One always has $\widetilde{\mathcal{R}}_{t}^{\mathbf{b}}=\mathcal{B}_{t}^{%
\mathbf{\bar{a}}(\mathbf{b})}$; that is, the online refinement score $%
\widetilde{\mathcal{R}}_{t}^{\mathbf{b}}$ of the sequence $\mathbf{b}%
=(b_{s})_{s\geq 1}$ is the same as the Brier score of the sequence of action
averages $\mathbf{\bar{a}}(\mathbf{b})=(\bar{a}_{s-1}(b_{s}))_{s\geq 1}.$}
\end{proof}

\bigskip

The calibeating forecast $c_{t}$ is thus the average of the actions in those
periods $1\leq s\leq t-1$ in which the forecast $b_{s}$ was equal to the
current forecast $b_{t}$. When $B$ is a subset of $C$ we get by (\ref{eq:r-b}%
) that $\mathcal{B}_{t}^{\mathbf{c}}\leq \mathcal{B}_{t}^{\mathbf{b}}-%
\mathcal{K}_{t}^{\mathbf{b}}+o(1)$; i.e., the Brier score of $\mathbf{c}$ is
lower than that of $\mathbf{b}$ by essentially the calibration score of $%
\mathbf{b}$. Note that the specific values $b_{t}$ of the $B$-forecasts are
not used by the calibeating procedure $\zeta $, and only the binning that
they generate matters (see Section \ref{sus:general B}).

\bigskip

\noindent \textbf{Remarks.} \emph{(a)} The simple calibeating procedure $%
\zeta $\textrm{\ }is \textquotedblleft universal" also in the sense of being
independent of the specific set $B$: the forecast $c_{t}$ is just the past
average of the current bin.

\emph{(b) }The history of one's own forecasts, $\mathbf{c}_{t-1}$, is not
used by the procedure $\zeta $; thus, $c_{t}$ is a function of $\mathbf{a}%
_{t-1}$ and $\mathbf{b}_{t}$ only.

\emph{(c) }One cannot guarantee a Brier score that is lower than the
refinement score of $\mathbf{b}$. Indeed, for every $t\geq 1$ and every $%
\mathbf{b}_{t}\in B^{t}$, we have%
\begin{equation*}
\sup_{\mathbf{a}_{t}}\mathbb{E}\left[ \mathcal{B}_{t}^{\mathbf{c}}-\mathcal{R%
}_{t}^{\mathbf{b}}\right] \geq 0
\end{equation*}%
for \emph{any} sequence $\mathbf{c}_{t}$, because when all $a_{s}$ are equal
to a fixed $a^{0}\in A$ we get $\mathcal{R}_{t}^{\mathbf{b}}=0$ (because all
bins contain only $a^{0}$, and so their variance is zero).

\emph{(d) }If the set $B$ is not finite, the procedure $\zeta $ calibeats
also all sequences$\mathbf{\ b}$ with $N_{t}^{\mathbf{b}}/t\rightarrow 0$ as 
$t\rightarrow \infty $, where $N_{t}^{\mathbf{b}}:=\left\vert \{b_{s}:s\leq
t\}\right\vert $ is the number of distinct forecasts used by $\mathbf{b}$ up
to time $t$ (use Proposition \ref{p:online-R}).

\emph{(e)} From any forecasting procedure, whose forecasts may not be
calibrated, we can generate by Theorem \ref{th:beat-1} another forecasting
procedure that yields lower Brier scores in the long run, as follows. Let
the $\mathbf{b}$-forecasts be generated by a forecasting procedure $\sigma $%
, and let $\sigma ^{\prime }$ replace each $b_{t}$ by the corresponding $%
\bar{a}_{t-1}^{\mathbf{b}}(b_{t})$ (see Remark (b) in Appendix \ref%
{susus-a:simple-additional} for some technical details); then $\sigma
^{\prime }$ yields lower Brier scores than $\sigma $ in the long run: $%
\mathcal{B}_{t}^{\mathbf{c}}\leq \mathcal{B}_{t}^{\mathbf{b}}-\mathcal{K}%
_{t}^{\mathbf{b}}+o(1).$

\emph{(f)} The existence of a calibeating procedure may be proved by a
minimax argument, which extends the 1995 proof of Hart of calibration (see
Section 4 of Foster and Vohra 1998, and Hart 2021); we do so in Appendix \ref%
{s-a:minimax}. The existence proof does not however provide an explicit
calibeating procedure, for sure not the very simple one of Theorem \ref%
{th:beat-1}.

\bigskip

Additional comments are relegated to Appendix \ref{sus-a:simple}. In
particular, we show that one cannot guarantee a calibeating error of an
order of magnitude lower than $\log t/t$ (see Appendix \ref{susus-a:lower
bound}), and that the best that one can do is to decrease the error in
Theorem \ref{th:beat-1} by a factor between $2$ and $4$ (depending on the
dimension $m$), by using a more complex formula for the forecast $c_{t}$
instead of (\ref{eq:c=a-bar}) (see Appendix \ref{susus-a:constant}).

\section{Self-calibeating $=$ Calibrating\label{s:self}}

The construction of Section \ref{s:simple} may be leveraged to obtain
calibration. Indeed, when $\mathbf{b}=\mathbf{c}$ we have $\mathcal{B}_{t}^{%
\mathbf{c}}-\mathcal{R}_{t}^{\mathbf{b}}=\mathcal{B}_{t}^{\mathbf{c}}-%
\mathcal{R}_{t}^{\mathbf{c}}=\mathcal{K}_{t}^{\mathbf{c}}$, and so
\textquotedblleft self-calibeating," i.e., $\mathbf{c}$ calibeating $\mathbf{%
c}$, is equivalent to calibration, i.e., $\mathcal{K}_{t}^{\mathbf{c}%
}\rightarrow 0$. To achieve this by the construct of Theorem \ref{th:beat-1}
we would need to choose $c_{t}$ so that $c_{t}=\bar{a}_{t-1}^{\mathbf{c}%
}(c_{t})$. However, this requires a fixed point of the function $\bar{a}%
_{t-1}^{\mathbf{c}}(\cdot )$, which of course need not exist in general. We
circumvent this by using a \textquotedblleft stochastic expected fixed
point" result, i.e., by appealing to the corresponding \textquotedblleft
outgoing" theorems of Foster and Hart (2021)---see Appendix \ref%
{s-a:outgoing} for details---and thereby obtain the classic calibration
results (see Theorem 11(S) and (AD) in Foster and Hart 2021).\footnote{%
While the proof here may look different from the one in Foster and Hart
(2021), the two proofs are in fact identical. The approach here with the
online refinement score makes the proof more transparent.}

\begin{theorem}
\label{th:calibration}Let $\delta >0$ and let $D\subset C$ be a finite $%
\delta $-grid of $C$. Then there exists a stochastic $D$-forecasting
procedure $\sigma $ that is $\delta $-calibrated; specifically,\footnote{%
Since we are dealing here with only one forecasting sequence $\mathbf{c,}$
we will drop the superscript $\mathbf{c}$ from $\mathcal{K}$ and $\bar{a}.$} 
\begin{equation*}
\mathbb{E}\left[ \mathcal{K}_{t}\right] \leq \delta ^{2}+\gamma ^{2}|D|\frac{%
\ln t+1}{t}
\end{equation*}%
for all $t\geq 1$ and all sequences $\mathbf{a}_{t}\in A^{t}$. Moreover, $%
\sigma $ may be taken to be $\delta $-almost deterministic (i.e., all
randomizations are $\delta $-local).
\end{theorem}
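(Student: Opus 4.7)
The plan is to reduce the calibration problem to the self-calibeating framework of Section~\ref{s:self} and then overcome the absence of a deterministic fixed point of $\bar{a}_{t-1}(\cdot)$ by invoking the stochastic outgoing fixed-point tool of Foster and Hart (2021).

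First, because $\mathcal{K}_{t} = \mathcal{B}_{t} - \mathcal{R}_{t}$, and Proposition~\ref{p:online-R} applied to the sequence $\mathbf{c}$ gives $\mathcal{R}_{t} \geq \widetilde{\mathcal{R}}_{t} - \gamma^{2}|D|(\ln t+1)/t$, it suffices to bound the online expression
$$\mathbb{E}\bigl[\mathcal{B}_{t} - \widetilde{\mathcal{R}}_{t}\bigr] \;=\; \frac{1}{t}\sum_{s=1}^{t}\mathbb{E}\bigl[\|a_{s}-c_{s}\|^{2} - \|a_{s}-\bar{a}_{s-1}(c_{s})\|^{2}\bigr]$$
by $\delta^{2}$; the remaining $\gamma^{2}|D|(\ln t+1)/t$ term is then carried over verbatim from Proposition~\ref{p:online-R}.

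Second, I would reduce this to a \emph{per-period} condition: at each time $s$, given the history and hence the function $f_{s-1} := \bar{a}_{s-1}(\cdot)\colon D\to C$, choose a random $c_{s}\in D$ drawn from some distribution $\mu_{s}$ so that, uniformly in $a\in A$,
$$\mathbb{E}_{c_{s}\sim\mu_{s}}\bigl[\|a-c_{s}\|^{2} - \|a-f_{s-1}(c_{s})\|^{2}\bigr] \;\leq\; \delta^{2}.$$
Using the polarization identity $\|a-c\|^{2}-\|a-f(c)\|^{2} = (c-f(c))\cdot(c+f(c)-2a)$, this decomposes into a term linear in $a$, which one would like to make vanish in expectation, plus a residual $\mathbb{E}_{\mu_{s}}[(c-f(c))\cdot(c+f(c))]$ that one would like to bound by $\delta^{2}$. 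Producing such a $\mu_{s}$ is exactly the content of the stochastic expected outgoing fixed-point theorem of Foster and Hart (2021): it yields a distribution on $D$, supported within $\delta$ of a single point (which gives the $\delta$-almost deterministic property), that simultaneously ``zeroes out'' the linear part in expectation and controls the quadratic residue up to $\delta^{2}$ coming from the grid mesh.

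Third, averaging the per-period inequality over $s=1,\ldots,t$ and combining with the online-refinement estimate (\ref{eq:finite D}) yields
$$\mathbb{E}[\mathcal{K}_{t}] \;\leq\; \mathbb{E}\bigl[\mathcal{B}_{t}-\widetilde{\mathcal{R}}_{t}\bigr] + \gamma^{2}|D|\frac{\ln t+1}{t} \;\leq\; \delta^{2} + \gamma^{2}|D|\frac{\ln t+1}{t},$$
which is the announced bound. The main obstacle is the second step: extracting the right per-period statement and matching it to the precise form of the stochastic outgoing theorem, in particular ensuring that the distribution $\mu_{s}$ can be taken $\delta$-local so that the residual quadratic term truly absorbs into $\delta^{2}$. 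This is also where the stochasticity is unavoidable (cf.\ Dawid 1982; Oakes 1985; Foster--Vohra 1998), since a deterministic fixed point of $f_{s-1}$ on $D$ need not exist.
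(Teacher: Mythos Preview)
Your proposal is correct and follows essentially the same route as the paper: apply the stochastic outgoing result (Theorem~\ref{th:outgoing}(S), or (AD) for the $\delta$-almost-deterministic ``moreover'') to $\bar{a}_{t-1}(\cdot)$ to obtain the per-period bound $\mathbb{E}_{t-1}[\|a_t-c_t\|^2-\|a_t-\bar{a}_{t-1}(c_t)\|^2]\le\delta^2$, average to get $\mathbb{E}[\mathcal{B}_t-\widetilde{\mathcal{R}}_t]\le\delta^2$, and finish with Proposition~\ref{p:online-R}. One small caveat: the $\delta$-local support comes only from part (AD), not from the basic minimax version (S), so keep those two invocations separate rather than merged as in your second step.
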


\begin{proof}
For every $t$ and history $h_{t-1}=(\mathbf{a}_{t-1},\mathbf{c}_{t-1})$, the
outgoing Theorem \ref{th:outgoing} (S) of Appendix \ref{s-a:outgoing}
applied to the function $\bar{a}_{t-1}(\cdot )$ yields a distribution $\eta
_{t}$ on $D$ such that, using it as the distribution $\sigma (h_{t-1})$ of
the forecast $c_{t}$, we have 
\begin{equation}
\mathbb{E}_{t-1}\left[ \left\Vert a_{t}-c_{t}\right\Vert ^{2}-\left\Vert
a_{t}-\bar{a}_{t-1}(c_{t})\right\Vert ^{2}\right] \leq \delta ^{2}
\label{eq:Ediff}
\end{equation}%
for every $a_{t}\in A$, where $\mathbb{E}_{t-1}$ denotes expectation with
respect to $\sigma (h_{t-1})$. Taking overall expectation and averaging over 
$t=1,2,..$. yields $\mathbb{E}\left[ \mathcal{B}_{t}-\widetilde{\mathcal{R}}%
_{t}\right] \leq \delta ^{2};$ Proposition \ref{p:online-R} completes the
proof. For the \textquotedblleft moreover" part, use part (AD) of Theorem %
\ref{th:outgoing}.
\end{proof}

\bigskip

The proof is quite instructive: what we would like to get is $\lambda
_{t}:=\left\Vert a_{t}-c_{t}\right\Vert ^{2}-\left\Vert a_{t}-\bar{a}%
_{t-1}(c_{t})\right\Vert ^{2}\leq 0$ no matter what $a_{t}$ will be, which
can be guaranteed only by choosing $c_{t}=\bar{a}_{t-1}(c_{t})$. This means
that $c_{t}$ should be a fixed point of the function $\bar{a}_{t-1}(\cdot )$%
, a function that is defined only on the finite set $\{c_{s}:1\leq s\leq t\}$
and is far from being continuous, and so need not in general have a fixed
point. We thus use a distribution $\eta _{t}$ instead---obtained by the
minimax theorem---that guarantees that, in expectation, $\lambda _{t}$
cannot exceed $0$ by much (as in the simple illustration in Section 1.2 in
Foster and Hart 2021).

\bigskip

\noindent \textbf{Remarks. }\emph{(a) }From inequality (\ref{eq:Ediff}) for
every history we get, by the Strong Law of Large Numbers for Dependent
Random Variables (Lo\'{e}ve 1978, Theorem 32.1.E), that $\overline{\lim }%
_{t\rightarrow \infty }\left( \mathcal{B}_{t}-\widetilde{\mathcal{R}}%
_{t}\right) \leq \delta ^{2}$ (a.s.), and thus $\overline{\lim }%
_{t\rightarrow \infty }\mathcal{K}_{t}\leq \delta ^{2}$ (a.s.); see Appendix
A5 in Foster and Hart (2021).

\emph{(b)} Let $D_{t}$ be an increasing sequence (i.e., $D_{t}\subseteq
D_{t+1})$ of $\delta _{t}$-grids of $C$ such that $\delta _{t}\rightarrow 0$
and $|D_{t}|/t\rightarrow 0$ as $t\rightarrow \infty ;$ using $D_{t}$ at
time $t$ guarantees that $\mathbb{E}\left[ \mathcal{K}_{t}\right] =\mathbb{E}%
\left[ \mathcal{B}_{t}-\mathcal{R}_{t}\right] =\mathbb{E}\left[ \mathcal{B}%
_{t}-\widetilde{\mathcal{R}}_{t}\right] +\mathbb{E}\left[ \widetilde{%
\mathcal{R}}_{t}-\mathcal{R}_{t}\right] \leq \delta _{t}^{2}+O\left(
(|D_{t}|/t)\ln \left( t/|D_{t}|\right) \right) \rightarrow 0$ (by
Proposition \ref{p:online-R}), and thus we obtain $0$-calibration.

\section{Calibeating by a Calibrated Forecast\label{s:beat-by-calibrated}}

While the procedure $\zeta $ of Section \ref{s:simple} calibeats any $B$%
-forecasting procedure, $\zeta $ itself need not yield calibrated forecasts
(for example, if all its forecasts $c_{t}=\bar{a}_{t-1}^{\mathbf{b}}(b_{t})$
are distinct, then all its bins are singletons and its calibration score is
high), and so $\zeta $ itself may be calibeaten by yet another procedure.
This suggests requiring our calibeating procedure to be calibrated, which is
what we provide in this section.

Given two sequences $\mathbf{b}^{1}=(b_{t}^{1})_{t\geq 1}$ and $\mathbf{b}%
^{2}=(b_{t}^{2})_{t\geq 1}$ with values in sets $B^{1}$ and $B^{2}$,
respectively, the resulting \emph{joint binning} has $U=B^{1}\times B^{2}$
as the set of bins; i.e., there is a $(b^{1},b^{2})$-bin for each pair $%
(b^{1},b^{2})\in B^{1}\times B^{2}=U$, and $a_{t}$ is assigned to the $u_{t}$%
-bin where $u_{t}=(b_{t}^{1},b_{t}^{2})$. The bin averages are%
\begin{equation*}
\bar{a}_{t}^{\mathbf{u}}(u)\equiv \bar{a}_{t}^{\mathbf{b}^{1},\mathbf{b}%
^{2}}(b^{1},b^{2}):=\frac{\sum_{1\leq s\leq t:u_{s}=x}a_{s}}{\left\vert
\{1\leq s\leq t:u_{s}=u\}\right\vert }
\end{equation*}%
for every $u\in U$, and the refinement score is $\mathcal{R}_{t}^{\mathbf{u}%
}\equiv \mathcal{R}_{t}^{\mathbf{b}^{1}\mathbf{,b}^{2}}=(1/t)\sum_{s=1}^{t}%
\left\Vert a_{s}-\bar{a}_{t}^{\mathbf{b}^{1},\mathbf{b}%
^{2}}(b_{t}^{1},b_{t}^{2})\right\Vert ^{2}\equiv
(1/t)\sum_{s=1}^{t}\left\Vert a_{s}-\bar{a}_{t}^{\mathbf{u}%
}(v_{t})\right\Vert ^{2}$. Since $\mathcal{R}_{t}$ is the average internal
variance of the bins, refining a binning---i.e., splitting bins into several
new bins---can only decrease the refinement score; see Appendix \ref%
{s-a:refined} for a formal proof (informally, consider splitting a bin $b$
with average $\bar{x}$ into two new bins $b^{\prime }$ and $b^{\prime \prime
}$, with averages $\bar{x}^{\prime }$ and $\bar{x}^{\prime \prime }$,
respectively; writing $\sum^{\prime }$ and $\sum^{\prime \prime }$ for the
sums over $b^{\prime }$ and $b^{\prime \prime }$, respectively, we have $%
\sum^{\prime }(x_{j}-\bar{x}^{\prime })^{2}\leq \sum^{\prime }(x_{j}-\bar{x}%
)^{2}$ [this holds for any $y$ in place of $\bar{x}$], and similarly for $%
\sum^{\prime \prime }$, which added together yields $\sum^{\prime }(x_{j}-%
\bar{x}^{\prime })^{2}+\sum^{\prime \prime }(x_{j}-\bar{x}^{\prime \prime
})^{2}\leq \sum (x_{j}-\bar{x})^{2}$). Therefore%
\begin{equation}
\mathcal{R}_{t}^{\mathbf{b}^{1}\mathbf{,b}^{2}}\leq \mathcal{R}_{t}^{\mathbf{%
b}^{1}}\text{\ \ and\ \ }\mathcal{R}_{t}^{\mathbf{b}^{1}\mathbf{,b}^{2}}\leq 
\mathcal{R}_{t}^{\mathbf{b}^{2}}.  \label{eq:join-R}
\end{equation}

By using the joint binning of the given sequence $\mathbf{b}$ together with
our forecast $\mathbf{c}$, and appealing to the stochastic outgoing result,
we obtain:

\begin{theorem}
\label{th:beat-by-calib}Let $B$ be a finite set, and let $D\subset C$ be a
finite $\delta $-grid of $C$ for some $\delta >0$. Then there exists a
stochastic $\mathbf{b}$-based $D$-forecasting procedure $\zeta $ that is $%
(\delta ,B)$-calibeating and $\delta $-calibrated; specifically,%
\begin{equation*}
\mathbb{E}\left[ \mathcal{B}_{t}^{\mathbf{c}}-\mathcal{R}_{t}^{\mathbf{b,c}}%
\right] \leq \delta ^{2}+\gamma ^{2}|B|\,|D|\frac{\ln t+1}{t},
\end{equation*}%
and thus, by (\ref{eq:join-R}),%
\begin{eqnarray*}
\mathbb{E}\left[ \mathcal{B}_{t}^{\mathbf{c}}-\mathcal{R}_{t}^{\mathbf{b}}%
\right] &\leq &\delta ^{2}+\gamma ^{2}|B|\,|D|\frac{\ln t+1}{t}\text{\ \ and}
\\
\mathbb{E}\left[ \mathcal{K}_{t}^{\mathbf{c}}\right] &\leq &\delta
^{2}+\gamma ^{2}|B|\,|D|\frac{\ln t+1}{t}
\end{eqnarray*}%
for all $t\geq 1$ and all sequences $\mathbf{a}_{t}\in A^{t}$ and $\mathbf{b}%
_{t}\in B^{t}$. Moreover, $\zeta $ may be taken to be $\delta $-almost
deterministic.
\end{theorem}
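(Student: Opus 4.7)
\textbf{Proof plan for Theorem \ref{th:beat-by-calib}.}

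The plan is to mimic the proof of Theorem \ref{th:calibration} (self-calibeating via the stochastic outgoing fixed point), but to do so with respect to the \emph{joint binning} of $\mathbf{b}$ and $\mathbf{c}$ rather than the binning of $\mathbf{c}$ alone. The advantage of the joint binning is that by (\ref{eq:join-R}) its refinement score is simultaneously bounded above by $\mathcal{R}_t^{\mathbf{b}}$ and $\mathcal{R}_t^{\mathbf{c}}$, so a single bound on $\mathcal{B}_t^{\mathbf{c}}-\mathcal{R}_t^{\mathbf{b},\mathbf{c}}$ will yield both the calibeating conclusion and the calibration conclusion (the latter because $\mathcal{K}_t^{\mathbf{c}}=\mathcal{B}_t^{\mathbf{c}}-\mathcal{R}_t^{\mathbf{c}}$).

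At each time $t$, after the history $h_{t-1}=(\mathbf{a}_{t-1},\mathbf{c}_{t-1},\mathbf{b}_{t-1})$ and the current forecast $b_t\in B$ have been revealed, consider the function $g_t\colon D\to C$ defined by
\begin{equation*}
g_t(x)\;:=\;\bar{a}_{t-1}^{\mathbf{b},\mathbf{c}}(b_t,x)
\end{equation*}
(and set $g_t(x)$ to an arbitrary element of $C$ when the $(b_t,x)$-bin is empty). I would then apply the stochastic outgoing Theorem \ref{th:outgoing}(S) of Appendix \ref{s-a:outgoing} to $g_t$ on the $\delta$-grid $D$: this produces a distribution $\eta_t$ on $D$ such that, using it as the distribution $\zeta(h_{t-1},b_t)$ of $c_t$,
\begin{equation*}
\mathbb{E}_{t-1}\!\left[\left\Vert a_t-c_t\right\Vert ^2-\left\Vert a_t-\bar{a}_{t-1}^{\mathbf{b},\mathbf{c}}(b_t,c_t)\right\Vert ^2\right]\leq\delta^2
\end{equation*}
for every $a_t\in A$. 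This is the key step and the main obstacle: one must verify that the outgoing theorem applies here, namely that the target function $g_t$ takes values in $C$ (it does, since bin averages are convex combinations of actions in $A\subseteq C$), and that choosing $\eta_t$ this way is consistent with $\zeta$ being a well-defined $\mathbf{b}$-based $D$-forecasting procedure.

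Taking overall expectation, averaging over $s=1,\dots,t$, and recognising that the resulting quantity on the right is exactly the online refinement score of the joint binning yields
\begin{equation*}
\mathbb{E}\!\left[\mathcal{B}_t^{\mathbf{c}}-\widetilde{\mathcal{R}}_t^{\mathbf{b},\mathbf{c}}\right]\leq\delta^2.
\end{equation*}
The joint binning uses at most $|B|\,|D|$ distinct labels, so Proposition \ref{p:online-R} applied to it (as in the bound (\ref{eq:finite D})) gives
\begin{equation*}
0\leq\widetilde{\mathcal{R}}_t^{\mathbf{b},\mathbf{c}}-\mathcal{R}_t^{\mathbf{b},\mathbf{c}}\leq\gamma^2|B|\,|D|\,\frac{\ln t+1}{t}.
\end{equation*}
Adding the two inequalities yields the first displayed bound of the theorem. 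The two subsequent bounds follow immediately: $\mathcal{R}_t^{\mathbf{b},\mathbf{c}}\leq\mathcal{R}_t^{\mathbf{b}}$ gives the calibeating statement, and $\mathcal{R}_t^{\mathbf{b},\mathbf{c}}\leq\mathcal{R}_t^{\mathbf{c}}$ together with $\mathcal{K}_t^{\mathbf{c}}=\mathcal{B}_t^{\mathbf{c}}-\mathcal{R}_t^{\mathbf{c}}$ gives the calibration statement. For the ``moreover'' clause about $\delta$-almost determinism, I would invoke part (AD) of the outgoing Theorem \ref{th:outgoing} in place of part (S), exactly as at the end of the proof of Theorem \ref{th:calibration}.
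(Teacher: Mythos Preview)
Your proposal is correct and follows essentially the same route as the paper's own proof: apply the stochastic outgoing Theorem~\ref{th:outgoing}(S) (respectively (AD) for the ``moreover'' part) to the function $c\mapsto \bar{a}_{t-1}^{\mathbf{b},\mathbf{c}}(b_t,c)$ on the grid $D$, average to get $\mathbb{E}[\mathcal{B}_t^{\mathbf{c}}-\widetilde{\mathcal{R}}_t^{\mathbf{b},\mathbf{c}}]\leq\delta^2$, and then invoke Proposition~\ref{p:online-R} on the joint binning with at most $|B|\,|D|$ labels. Your explicit spelling-out of how the two corollary bounds follow from $\mathcal{R}_t^{\mathbf{b},\mathbf{c}}\leq\mathcal{R}_t^{\mathbf{b}}$ and $\mathcal{R}_t^{\mathbf{b},\mathbf{c}}\leq\mathcal{R}_t^{\mathbf{c}}$ matches what the paper leaves to the reader via~(\ref{eq:join-R}).
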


Thus, if we ignore the $\delta ^{2}$ term, in the long run the refinement
score of $\zeta $ is no worse than that of any $B$-forecasting procedure,
and its calibration score is zero. When $|B|=1$ (and thus $B$-forecasting
has no content), it reduces to the calibration result, Theorem \ref%
{th:calibration}, of Section \ref{s:self}.

\begin{proof}
At time $t$, given the history $(\mathbf{a}_{t-1},\mathbf{c}_{t-1},\mathbf{b}%
_{t-1})$ together with $b_{t}$, apply Theorem \ref{th:outgoing} (S),
respectively (AD), to the function $c\longmapsto \bar{a}_{t-1}^{\mathbf{b,c}%
}(b_{t},c)$ (for $c\in D$) to get $\eta _{t}\in \Delta (D)$ such that, by
using it as the distribution of $c_{t}$ given $(\mathbf{a}_{t-1},\mathbf{c}%
_{t-1},\mathbf{b}_{t})$ (which makes it a $\mathbf{b}$-based procedure), we
have 
\begin{equation*}
\mathbb{E}_{t-1}\left[ \left\Vert a_{t}-c_{t}\right\Vert ^{2}-\left\Vert
a_{t}-\bar{a}_{t-1}^{\mathbf{b,c}}(c_{t},b_{t})\right\Vert ^{2}\right] \leq
\delta ^{2}
\end{equation*}%
for every $a_{t}\in A$, where $\mathbb{E}_{t-1}$ denotes the expectation
conditional on $(\mathbf{a}_{t-1},\mathbf{c}_{t-1},\mathbf{b}_{t})$. Taking
overall expectation and averaging over $t$ yields 
\begin{equation*}
\mathbb{E}\left[ \mathcal{B}_{t}-\widetilde{\mathcal{R}}_{t}^{\mathbf{b,c}}%
\right] \leq \delta ^{2}.
\end{equation*}%
Proposition \ref{p:online-R} completes the proof.
\end{proof}

\bigskip

\noindent \textbf{Remarks. }\emph{(a) }Again, we can allow $N_{t}^{\mathbf{b}%
}$, the number of distinct forecasts used up to time $t$, to increase with $%
t $, provided that $N_{t}^{\mathbf{b}}/t\rightarrow 0;$ see Remark (c) in
Section \ref{s:simple} and Remark (b) in Section \ref{s:self}.

\emph{(b)} The procedure in the above proof amounts to using a stochastic
forecast-hedging calibration procedure separately for each bin in $B.$

\emph{(c) }If the calibeating procedure of Theorem \ref{th:beat-1} is not
calibrated, then one can construct another procedure that calibeats it, and
then another one that calibeats that, and so on. A calibeating procedure
that is calibrated, as obtained here, stops this infinite regress, which may
well be quickly overwhelmed by the accumulating errors of calibeating, as
well as those due to rounding up to a finite grid (see Remark (a) in
Appendix \ref{susus-a:simple-additional}).

\emph{(d) }A proof that is directly based on the minimax theorem is provided
in Appendix \ref{s-a:minimax}.

\bigskip

Calibration, and thus calibeating by a calibrated forecast, requires the
procedure to be stochastic. However, if we replace calibration with \emph{%
continuous calibration}, a weakening defined in Foster and Hart
(2021)---useful, in particular, for game dynamics that yield Nash
equilibria---we get a \emph{deterministic} procedure instead.

\begin{theorem}
\label{th:cont}Let $B$ be a finite set. Then there exists a \emph{%
deterministic} $\mathbf{b}$-based forecasting procedure $\zeta $ that is $B$%
-calibeating and is continuously calibrated.
\end{theorem}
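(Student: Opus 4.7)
The plan is to mirror the proof of Theorem \ref{th:beat-by-calib} but replace the stochastic outgoing minimax tool with the deterministic ``outgoing fixed point'' tool of Foster and Hart (2021) mentioned in the Introduction. The price to pay is that we must work with \emph{fractional} binning over a $\delta$-grid rather than hard assignment, since continuous calibration is precisely the weakening of calibration that permits deterministic procedures (by resolving the Dawid/Oakes obstruction via continuity of the bin-assignment map).

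Concretely, I would fix a finite $\delta$-grid $D \subset C$ together with a continuous partition of unity $\{\lambda_d(c)\}_{d\in D}$ on $C$, where $\lambda_d(c)\ge 0$, $\sum_{d\in D}\lambda_d(c)=1$, each $\lambda_d$ is Lipschitz, and $\lambda_d(c)>0$ implies $\|d-c\|<\delta$ (a standard tent-function construction on a triangulation subordinate to $D$). This replaces hard membership in a bin by soft weights. The joint binning is indexed by $U=B\times D$: at each time $s$, the action $a_s$ is assigned fractional mass $\lambda_d(c_s)$ to the bin $(b_s,d)$ for each $d\in D$, and the fractional bin averages $\bar a^{\mathbf{b,c}}_{t}(b,d)$ together with the fractional refinement score are defined exactly as in Appendix \ref{s-a:decompose}.

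At period $t$, given $h_{t-1}$ and $b_t$, I define the map $\Phi_t : C \to C$ by
\[
\Phi_t(c) \;:=\; \sum_{d\in D} \lambda_d(c)\, \bar a^{\mathbf{b,c}}_{t-1}(b_t,d),
\]
which is continuous in $c$ (the $\lambda_d$ are continuous and the finitely many bin averages are fixed). The deterministic outgoing fixed point theorem of Foster and Hart (2021), applied to $\Phi_t$, yields a $c_t\in C$ such that
\[
\|a_t-c_t\|^2 \;-\; \sum_{d\in D}\lambda_d(c_t)\,\|a_t - \bar a^{\mathbf{b,c}}_{t-1}(b_t,d)\|^2 \;\le\; O(\delta^2)
\]
for every $a_t\in A$; that is, the per-step ``Brier minus fractional online refinement'' gap is at most $O(\delta^2)$. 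Summing these inequalities over $s=1,\dots,t$, dividing by $t$, and invoking the analogue of Proposition \ref{p:online-R} for the fractional joint binning on $B\times D$ (which still gives a $|B|\,|D|\log t/t$ slack between online and offline refinement scores), I get
\[
\mathcal{B}^{\mathbf c}_t \;-\; \mathcal{R}^{\mathbf{b,c}}_t \;\le\; O(\delta^2) + O\!\left(\frac{|B|\,|D|\log t}{t}\right),
\]
and then the monotonicity $\mathcal{R}^{\mathbf{b,c}}_t\le \mathcal{R}^{\mathbf b}_t$ and $\mathcal{R}^{\mathbf{b,c}}_t\le \mathcal{R}^{\mathbf c}_t$ of the joint binning (inequality \eqref{eq:join-R}, extended to fractional bins) yields both $B$-calibeating and continuous calibration, after letting $\delta\to 0$ along a schedule $\delta_t\to 0$ with $|D_t|/t\to 0$ as in Remark (b) of Section \ref{s:self}.

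The main obstacle is the first step: verifying that the outgoing \emph{fixed point} theorem genuinely applies to $\Phi_t$ on $C$ and yields the displayed deterministic quadratic inequality with the correct $\delta^2$ constant. The stochastic outgoing theorem used in Theorem \ref{th:beat-by-calib} works via a standard minimax argument on finite sets, but the deterministic version demands continuity of the averaging map and a genuine fixed-point/geometric selection (this is why the partition-of-unity and the soft assignment are essential, and why the resulting calibration is only \emph{continuous}, not ordinary, calibration). Once that tool is in hand, the rest of the argument is the same accounting as in Theorems \ref{th:calibration} and \ref{th:beat-by-calib}, and the full continuously calibrated construction can simply cite Theorem \ref{th:cont-calib} of Appendix \ref{s:cont-calib} applied to the joint-binning function $\bar a^{\mathbf{b,\cdot}}_{t-1}(b_t,\cdot)$.
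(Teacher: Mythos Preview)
Your overall architecture is the same as the paper's: apply the deterministic outgoing fixed point (Theorem~\ref{th:outgoing}(D)) to a continuous function built from fractional bin averages over the joint $B\times(\text{continuous binning})$ system, sum to get $\mathcal{B}_t^{\mathbf c}\le\widetilde{\mathcal R}_t^{\mathbf b,\Pi}$, pass from online to offline refinement, and finish with refinement monotonicity. That is exactly the proof of Theorem~\ref{th:cont-calib}. Two details in your sketch deviate from the paper and one of them is a real gap.

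First, a harmless point: part (D) of the outgoing theorem gives the \emph{exact} inequality $\|a_t-c_t\|^2\le\|a_t-\Phi_t(c_t)\|^2$, with no $O(\delta^2)$ slack; the $\delta^2$ error appears only in parts (S) and (AD). So your per-step bound is actually sharper than you state, and there is no need for a $\delta$-schedule on that account.

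Second, and this is the genuine gap: your tent-function partition of unity $(\lambda_d)_{d\in D}$ is \emph{a} continuous binning, but it is not the specific binning $\Pi_0$ of Proposition~3 in Foster and Hart (2021). Continuous calibration means $\mathcal{K}_t^{\Pi}\to 0$ for \emph{every} continuous binning $\Pi$; what your argument yields is only $\mathcal{K}_t^{(\lambda_d)}\to 0$ for your particular one. The paper's proof works precisely because $\Pi_0$ is universal: $\mathcal{K}_t^{\Pi_0}\to 0$ implies $\mathcal{K}_t^{\Pi}\to 0$ for all continuous $\Pi$. Your proposed fix of letting $\delta_t\to 0$ along a schedule does not resolve this: it makes the binning time-varying, while continuous calibration is a statement about each \emph{fixed} continuous $\Pi$ as $t\to\infty$, and your argument gives no control over $\mathcal{K}_t^{\Pi}$ for such a $\Pi$. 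The clean fix is simply to take your partition of unity to be $\Pi_0$ itself (which may be countably infinite; that is why the paper proves Proposition~\ref{p:Rtilde-R-cont} rather than just citing Proposition~\ref{p:online-R}).

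Finally, a smaller discrepancy: you form $\Phi_t(c)=\sum_d\lambda_d(c)\,\bar a_{t-1}^{\mathbf{b,c}}(b_t,d)$ from bin averages of the \emph{actions} $a_s$, whereas the paper applies (D) to $g(c)=c+\sum_i w_i(c)\,e_{t-1}(b_t,i)$ built from bin averages of the \emph{differences} $a_s-c_s$. Both choices can be made to work, but the paper's choice sums directly to the fractional online refinement $\widetilde{\mathcal R}_t^{\mathbf b,\Pi_0}$ as defined (with $z_s=a_s-c_s$), which is exactly the quantity that fits into the Brier decomposition $\mathcal{B}_t=\mathcal{R}_t^{\Pi}+\mathcal{K}_t^{\Pi}$ of Appendix~\ref{s-a:decompose}, and hence makes both the $B$-calibeating inequality $\mathcal{R}_t^{\mathbf b,\Pi_0}\le\mathcal{R}_t^{\mathbf b}$ and the continuous-calibration inequality $\mathcal{R}_t^{\mathbf b,\Pi_0}\le\mathcal{R}_t^{\Pi_0}$ instances of the same monotonicity lemma with the same random variable $Z$.
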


We relegate the details to Appendix \ref{s:cont-calib}.

\section{Multi-calibeating\label{s:multi-beat}}

Suppose that there are $N\geq 1$ forecasting sequences, $\mathbf{b}%
^{n}=(b_{t}^{n})_{t\geq 1}$ for $n=1,2,...,N$. We assume that each $\mathbf{b%
}^{n}$ uses only finitely many forecasts: there is a finite set $B^{n}$ such
that $b_{t}^{n}\in B^{n}$ for all $t\geq 1$ (and, as in Section \ref%
{sus:general B}, while $B^{n}$ could be a subset of $C$, it may well be an
arbitrary set). Put $\mathbf{b=}(\mathbf{b}^{1},...,\mathbf{b}^{N});$ we are
looking for a $\mathbf{b}$-based forecasting procedure---i.e., $c_{t}$ is
determined after all the $b_{t}^{1},...,b_{t}^{N}$ are announced (and hence
is a function of $\mathbf{a}_{t-1},\mathbf{c}_{t-1},\mathbf{b}_{t}^{1},...,%
\mathbf{b}_{t}^{N}$---that simultaneously calibeats all the $\mathbf{b}^{n}$
sequences. We have:

\begin{theorem}
\label{th:multi}\emph{\textbf{(i)}} There exists a simple deterministic $(%
\mathbf{b}^{1},...,\mathbf{b}^{N})$-based forecasting procedure $\zeta $
that is $B^{n}$-calibeating for all $n=1,...,N;$ specifically, the forecast
of $\zeta $ in period $t$ is $c_{t}=\bar{a}_{t-1}^{\mathbf{b}^{1},...,%
\mathbf{b}^{N}}(b_{t}^{1},...,b_{t}^{N})$, the average of the actions in all
past periods $s\leq t-1$ where the combination $(b_{t}^{1},...,b_{t}^{N})$
was used (if $t$ is the first period in which $(b_{t}^{1},...,b_{t}^{N})$ is
used, take $c_{t}\in C$ to be arbitrary).

\emph{\textbf{(ii)}} For every finite $\delta $-grid $D$ of $C$ there exists
a stochastic $(\mathbf{b}^{1},...,\mathbf{b}^{N})$-based $D$-forecasting
procedure $\zeta $ that is $(\delta ,B^{n})$-calibeating for all $n=1,...,N$
and is $\delta $-calibrated. Moreover, $\zeta $ may be taken to be $\delta $%
-almost deterministic.

\emph{\textbf{(iii)}} There exists a deterministic $(\mathbf{b}^{1},...,%
\mathbf{b}^{N})$-based $C$-forecasting procedure $\zeta $ that is $B^{n}$%
-calibeating for all $n=1,...,N$ and is continuously calibrated.
\end{theorem}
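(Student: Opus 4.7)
The unifying idea is that the $N$ sequences $\mathbf{b}^1,\ldots,\mathbf{b}^N$ together generate a single joint binning with label set $B:=B^1\times\cdots\times B^N$, via $u_t:=(b_t^1,\ldots,b_t^N)\in B$. Since $B$ is finite (being the product of $N$ finite sets), each of the three statements reduces to the corresponding single-sequence result applied to the combined sequence $\mathbf{u}=(u_t)_{t\geq 1}$, together with the iterated refinement inequality $\mathcal{R}_t^{\mathbf{b}^1,\ldots,\mathbf{b}^N}\leq \mathcal{R}_t^{\mathbf{b}^n}$ for each $n$, which follows from applying (\ref{eq:join-R}) repeatedly (splitting refines the binning, which only reduces the within-bin variance).

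For \textbf{(i)}, I would apply Theorem \ref{th:beat-1} to the $\mathbf{u}$-based sequence with $B=B^1\times\cdots\times B^N$: the deterministic rule $c_t=\bar{a}_{t-1}^{\mathbf{u}}(u_t)$, which is exactly $\bar{a}_{t-1}^{\mathbf{b}^1,\ldots,\mathbf{b}^N}(b_t^1,\ldots,b_t^N)$, yields
\begin{equation*}
0\leq \mathcal{B}_t^{\mathbf{c}}-\mathcal{R}_t^{\mathbf{u}}\leq \gamma^2\,|B^1|\cdots|B^N|\,\frac{\ln t+1}{t},
\end{equation*}
and combining with $\mathcal{R}_t^{\mathbf{u}}\leq\mathcal{R}_t^{\mathbf{b}^n}$ gives $B^n$-calibeating for every $n$.

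For \textbf{(ii)}, I would run the same argument as in the proof of Theorem \ref{th:beat-by-calib}, but with the joint binning $(\mathbf{b}^1,\ldots,\mathbf{b}^N,\mathbf{c})$ taking values in $B\times D$. At time $t$, given the history and the just-announced $(b_t^1,\ldots,b_t^N)$, apply Theorem \ref{th:outgoing}(S) (resp.\ (AD)) to the function $c\mapsto \bar{a}_{t-1}^{\mathbf{b}^1,\ldots,\mathbf{b}^N,\mathbf{c}}(b_t^1,\ldots,b_t^N,c)$ defined on the finite grid $D$, to obtain a distribution $\eta_t\in\Delta(D)$ (resp.\ $\delta$-local one) with $\mathbb{E}_{t-1}[\|a_t-c_t\|^2-\|a_t-\bar{a}_{t-1}^{\mathbf{b}^1,\ldots,\mathbf{b}^N,\mathbf{c}}(b_t^1,\ldots,b_t^N,c_t)\|^2]\leq\delta^2$ for every $a_t\in A$. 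Averaging over $t$ and invoking Proposition \ref{p:online-R} for the joint binning (whose label set has size $|B^1|\cdots|B^N|\,|D|$) gives $\mathbb{E}[\mathcal{B}_t^{\mathbf{c}}-\mathcal{R}_t^{\mathbf{b}^1,\ldots,\mathbf{b}^N,\mathbf{c}}]\leq\delta^2+o(1)$. The three listed bounds—$(\delta,B^n)$-calibeating for each $n$, and $\delta$-calibration of $\mathbf{c}$ itself—then follow from (\ref{eq:join-R}) applied to the joint binning, by dropping all components except $\mathbf{b}^n$ or except $\mathbf{c}$.

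For \textbf{(iii)}, I would mirror the proof of Theorem \ref{th:cont}: replace the stochastic outgoing minimax tool used in (ii) with the deterministic outgoing fixed-point tool of Foster and Hart (2021) (as invoked in Appendix \ref{s:cont-calib}) applied to the same function $c\mapsto \bar{a}_{t-1}^{\mathbf{b}^1,\ldots,\mathbf{b}^N,\mathbf{c}}(b_t^1,\ldots,b_t^N,c)$, now over an appropriate fractional-binning extension to the full set $C$. The finiteness of each $B^n$, and hence of $B=B^1\times\cdots\times B^N$, ensures the fixed-point setup is identical to the single-sequence case. I do not expect any real obstacle: the one thing to be careful about is that the joint label set grows multiplicatively in $N$, so the $o(1)$ error terms pick up a factor $|B^1|\cdots|B^N|$, which is harmless for fixed $N$ as $t\to\infty$ but would be the quantity to track if one wanted to let $N$ grow with $t$.
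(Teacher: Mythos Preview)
Your proposal is correct and follows essentially the same approach as the paper: reduce to the single-sequence results (Theorems \ref{th:beat-1}, \ref{th:beat-by-calib}, and \ref{th:cont}) by taking $\mathbf{b}=(\mathbf{b}^1,\ldots,\mathbf{b}^N)$ with joint label set $B^1\times\cdots\times B^N$, and then use the refinement inequality $\mathcal{R}_t^{\mathbf{b}^1,\ldots,\mathbf{b}^N}\leq\mathcal{R}_t^{\mathbf{b}^n}$ from Appendix \ref{s-a:refined}. The paper's proof is a single sentence to this effect, and your elaboration (including the explicit error bound and the remark about the multiplicative dependence on $N$) matches the paper's subsequent Remarks (a) and (b).
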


\begin{proof}
This is immediate from the results of the previous sections by taking $(%
\mathbf{b}^{1},...,\mathbf{b}^{N})$ as $\mathbf{b}$ and using inequalities
such as $\mathcal{R}_{t}^{\mathbf{b}^{1},...,\mathbf{b}^{N}}\leq \mathcal{R}%
_{t}^{\mathbf{b}^{n}}$ for each $n$ by Appendix \ref{s-a:refined}.
\end{proof}

\bigskip

\textbf{\noindent Remarks. }\emph{(a) }The error term is%
\begin{equation*}
\gamma ^{2}\prod_{n=1}^{N}\left\vert B^{n}\right\vert \frac{\ln t+1}{t};
\end{equation*}%
thus, in (i) we have%
\begin{equation}
\mathcal{B}_{t}^{\mathbf{c}}\leq \mathcal{R}_{t}^{\mathbf{b}^{n}}+\gamma
^{2}\prod_{n=1}^{N}\left\vert B^{n}\right\vert \frac{\ln t+1}{t},
\label{eq:beat-n}
\end{equation}%
and in (ii) we have 
\begin{eqnarray*}
\mathbb{E}\left[ \mathcal{K}_{t}^{\mathbf{c}}\right] &\leq &\delta
^{2}+\gamma ^{2}\left\vert D\right\vert \prod_{n=1}^{N}\left\vert
B^{n}\right\vert \frac{\ln t+1}{t}\text{\ \ \ and} \\
\mathbb{E}\left[ \mathcal{B}_{t}^{\mathbf{c}}\right] &\leq &\mathbb{E}\left[ 
\mathcal{R}_{t}^{\mathbf{b}^{j}}\right] +\delta ^{2}+\gamma ^{2}\left\vert
D\right\vert \prod_{n=1}^{N}\left\vert B^{n}\right\vert \frac{\ln t+1}{t}
\end{eqnarray*}%
for all $n=1,...,N$, all $t\geq 1$, and all sequences $\mathbf{a,b}^{1},...,%
\mathbf{b}^{N}\mathbf{.}$

\emph{(b) }Since the constant $\prod_{n=1}^{N}\left\vert B^{n}\right\vert $
in the above error terms increases exponentially with $N$, we provide in
Appendix \ref{s-a:multi} multi-calibeating procedures that are more complex
but yield smaller error terms.

\emph{(c)} One may again allow the $B^{n}$ to be infinite, provided that $%
N_{t}^{\mathbf{b}^{n}}/t\rightarrow 0$ as $t\rightarrow \infty .$

\emph{(d)} The result in (ii) holds with probability one, and not only in
expectation; see Remark (a) in Section \ref{s:beat-by-calibrated} and
Appendix A5 in Foster and Hart (2021).\emph{\ }

\appendix{}

\section{Appendix}

The appendix contains additional results, proofs, and remarks.

\subsection{A Simple Way to Calibeat\label{sus-a:simple}}

We consider here the minimal calibeating error that can be guaranteed.
First, we show in Appendix \ref{susus-a:lower bound} that it must be at
least of the order of $\log t/t$, the same order obtained by Theorem \ref%
{th:beat-1}; second, we pin down the constant in Appendix \ref%
{susus-a:constant}: it is within a factor between $2$ and $4$ (depending on
the dimension $m$ and the geometric shape of the set $C)$ of the constant of
Theorem \ref{th:beat-1}. Additional comments on Section \ref{s:simple} are
provided in \ref{susus-a:simple-additional}.

\subsubsection{The Calibeating Error\label{susus-a:lower bound}}

We prove here that one cannot guarantee a calibeating error of an order of
magnitude lower than $\log t/t$. We show that this is so already in the
simplest one-dimensional case; see Remark (a) below for the extension to the
multidimensional case.

\begin{proposition}
\label{p:lb=1/4}Let $A=\{0,1\}$ and $C=[0,1]$, and let $\mathbf{b}$ be a
constant sequence (e.g., $b_{t}=1/2$ for all $t$). Then for every $\mathbf{b}
$-based forecasting procedure $\zeta $ we have\footnote{%
For a constant sequence $\mathbf{b,}$ a $\mathbf{b}$-based forecasting
procedure is simply a forecasting procedure. The expectation in (\ref%
{eq:ge-logt/t}) is over the stochastic choices of $\zeta $ (and it applies
only to $\mathcal{B}_{t}^{\mathbf{c}},$ since $\mathcal{R}_{t}^{\mathbf{b}}$
is determined by $\mathbf{a}_{t}$ alone when $\mathbf{b}_{t}$ is a constant
sequence).}%
\begin{equation}
\sup_{\mathbf{a}_{t}\in A^{t}}\mathbb{E}\left[ \mathcal{B}_{t}^{\mathbf{c}}%
\mathbf{-}\mathcal{R}_{t}^{\mathbf{b}}\right] \geq \left( \frac{1}{4}%
-o(1)\right) \frac{\ln t}{t}  \label{eq:ge-logt/t}
\end{equation}%
as $t\rightarrow \infty .$
\end{proposition}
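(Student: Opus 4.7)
The plan is Yao's minimax principle: to lower bound $\sup_{\mathbf{a}_{t}}\mathbb{E}_{\zeta}[\mathcal{B}_{t}^{\mathbf{c}} - \mathcal{R}_{t}^{\mathbf{b}}]$ it suffices to exhibit some distribution $\mu$ on action sequences and lower bound $\mathbb{E}_{\mathbf{a}\sim\mu,\,\zeta}[\mathcal{B}_{t}^{\mathbf{c}} - \mathcal{R}_{t}^{\mathbf{b}}]$. I would take $\mu$ of the form: draw $P$ from a smooth prior $\pi$ on $[0,1]$ concentrated near $1/2$, and then sample $a_{1},\ldots,a_{t}$ i.i.d.\ Bernoulli$(P)$. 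Because $\mathbf{b}$ is constant there is a single bin, so $\mathcal{R}_{t}^{\mathbf{b}} = \bar{a}_{t}(1-\bar{a}_{t})$ (using $a_{s}^{2}=a_{s}$), and each $c_{s}$ is a randomized function of $a_{1},\ldots,a_{s-1}$ alone, hence independent of $a_{s}$ given $P$.

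Conditioning on $P=p$, the independence of $a_{s}$ and $c_{s}$ yields $\mathbb{E}_{p}[(a_{s}-c_{s})^{2}] = p(1-p) + \mathbb{E}_{p}[(c_{s}-p)^{2}]$, and a direct computation gives $\mathbb{E}_{p}[\mathcal{R}_{t}^{\mathbf{b}}] = p(1-p)(1-1/t)$. Subtracting,
\begin{equation*}
\mathbb{E}_{p}[\mathcal{B}_{t}^{\mathbf{c}} - \mathcal{R}_{t}^{\mathbf{b}}] = \frac{p(1-p)}{t} + \frac{1}{t}\sum_{s=1}^{t}\mathbb{E}_{p}[(c_{s}-p)^{2}].
\end{equation*}
Averaging over $\pi$ reduces the proposition to lower bounding the cumulative Bayes squared-error risk $\sum_{s=1}^{t}\int \pi(p)\,\mathbb{E}_{p}[(c_{s}-p)^{2}]\,dp$ of $\zeta$ viewed as a sequential Bernoulli mean estimator.

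For this I would invoke the van Trees (Bayesian Cram\'{e}r--Rao) inequality, which for a smooth prior $\pi$ on $[0,1]$ and Fisher information $I_{s-1}(p) = (s-1)/(p(1-p))$ from $s-1$ i.i.d.\ Bernoulli$(p)$ observations gives
\begin{equation*}
\int \pi(p)\,\mathbb{E}_{p}[(c_{s}-p)^{2}]\,dp \geq \frac{1}{(s-1)\,\mathbb{E}_{\pi}[1/(p(1-p))] + J(\pi)},
\end{equation*}
where $J(\pi) = \int (\pi')^{2}/\pi$. Choosing $\pi$ to be a smooth bump supported on $[1/2-\Delta_{t},\,1/2+\Delta_{t}]$ with $\Delta_{t}\to 0$ slowly (e.g.\ $\Delta_{t} = 1/\ln\ln t$) makes $\mathbb{E}_{\pi}[1/(p(1-p))]\to 4$ and $J(\pi) = O(1/\Delta_{t}^{2}) = o(t/\ln t)$. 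Summing over $s$: the terms with $s\leq 1/\Delta_{t}^{2}$ together contribute only $O(1) = o(\ln t)$, while on the remaining range the bound is $(1+o(1))/(4(s-1))$, whose harmonic sum is $(1/4 - o(1))\ln t$. Dividing by $t$ yields the proposition.

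The main obstacle is pinning down the constant $1/4$: it forces the prior $\pi$ to concentrate near $p=1/2$, where the Fisher information $1/(p(1-p))$ attains its minimum value $4$; but $\pi$ must simultaneously stay wide enough that $J(\pi)$ remains negligible compared with $(s-1)\cdot 4$ for the bulk of $s\leq t$. The slow-shrinking bump $\Delta_{t}\to 0$ resolves this trade-off, after which the remaining steps are routine harmonic-sum bookkeeping.
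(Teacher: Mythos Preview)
Your argument is correct, and the overall architecture is the same as the paper's: plant a prior on a Bernoulli parameter, compute $\mathbb{E}[\mathcal{R}_{t}^{\mathbf{b}}]$ explicitly for the single bin, lower bound $\mathbb{E}[\mathcal{B}_{t}^{\mathbf{c}}]$ by an estimation-theoretic inequality, and push the prior toward $p=1/2$ to squeeze the constant up to $1/4$. The difference is in the lower-bound tool and the choice of prior. The paper takes the conjugate prior $\theta\sim\mathrm{Beta}(\alpha,\alpha)$, conditions on the \emph{history} $h_{s-1}$ (rather than on $p$), and uses $\mathbb{E}[(a_{s}-c_{s})^{2}\mid h_{s-1}]\geq \mathbb{V}\mathrm{ar}[a_{s}\mid h_{s-1}]=\hat\theta_{s-1}(1-\hat\theta_{s-1})$; because the posterior is again Beta, $\mathbb{E}[\hat\theta_{s-1}(1-\hat\theta_{s-1})]$ has a closed form, and the resulting bound is $\lambda\,\ln t/t$ with $\lambda=\alpha/(2(2\alpha+1))\to 1/4$ as $\alpha\to\infty$. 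You instead condition on $p$, isolate the Bayes risk $\mathbb{E}_{\pi}\mathbb{E}_{p}[(c_{s}-p)^{2}]$, and bound it below by van~Trees, choosing a shrinking bump prior to drive $\mathbb{E}_{\pi}[1/(p(1-p))]\to 4$.

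The trade-offs are what one would expect. The paper's route is entirely elementary and self-contained (only moments of the beta-binomial are needed), at the cost of relying on conjugacy. Your route imports a heavier tool but is more conceptual and portable: nothing hinges on an explicit posterior, so the same template would give analogous $\log t/t$ lower bounds in other one-parameter families by reading off the minimum Fisher information. Both approaches face the same tension---concentrate the prior to hit the extremal Fisher information, but not so fast that early terms are lost---and both resolve it by letting the concentration parameter depend on $t$ (large $\alpha$ there, small $\Delta_{t}$ here).
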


\begin{proof}
Consider the game between the \textquotedblleft action
player\textquotedblright\ who chooses the actions $a_{t}$ and the
\textquotedblleft calibrating player" who chooses the sequence of forecasts $%
c_{t}$, with payoff $\mathcal{B}_{t}^{\mathbf{c}}\mathbf{-}\mathcal{R}_{t}^{%
\mathbf{b}}$ (cf. the \textquotedblleft calibration game" in Foster and Hart
2018, where the payoff was $\mathcal{K}_{t}^{\mathbf{c}}$). We will provide
a mixed strategy of the action player that guarantees that 
\begin{equation}
\inf_{\zeta }\mathbb{E}\left[ \mathcal{B}_{t}^{\mathbf{c}}\mathbf{-}\mathcal{%
R}_{t}^{\mathbf{b}}\right] \geq \left( \frac{1}{4}-o(1)\right) \frac{\ln t}{t%
},  \label{eq:ge lnt/t}
\end{equation}%
where the infimum is taken over all forecasting procedures $\zeta $ (and the
expectation is over the randomizations of both actions and forecasts). This
implies that for every such $\zeta $ there is for each $t\geq 1$ at least
one sequence $\mathbf{a}_{t}$ in $A^{t}$ for which the same inequality
holds; this is (\ref{eq:ge-logt/t}).

The mixed strategy of the action player that we provide consists of
conditionally i.i.d. actions, specifically, $a_{t}|\theta \sim \mathrm{%
Bernoulli}(\theta )$ where $\theta \sim \mathrm{Beta}(\alpha ,\alpha )$ for
a fixed $\alpha >0$ (this is the so-called \textquotedblleft beta-binomial"
distribution with parameters $\alpha =\beta $). The following formulas are
well known, and easy to see (e.g., Johnson, Kemp, and Kotz 2005): 
\begin{equation}
\mathbb{E}\left[ \bar{a}_{t}\right] =\frac{1}{2}\text{\ \ \ and\ \ \ }%
\mathbb{V}ar\left[ \bar{a}_{t}\right] =\mathbb{V}ar\left[ \bar{a}_{t}\right]
=\frac{t+2\alpha }{4(2\alpha +1)t};  \label{eq:EV(a-bar)}
\end{equation}%
the Bayesian estimate of $\theta $ given the history $h_{t}$ is%
\begin{equation*}
\hat{\theta}_{t}%
{\;:=\;}%
\mathbb{E}\left[ \theta |h_{t}\right] =\frac{t\bar{a}_{t}+\alpha }{t+2\alpha 
};
\end{equation*}%
thus, by (\ref{eq:EV(a-bar)}),%
\begin{equation}
\mathbb{E}\left[ \hat{\theta}_{t}\right] =\frac{1}{2}\text{\ \ \ and\ \ \ }%
\mathbb{V}ar\left[ \hat{\theta}_{t}\right] =\frac{t}{4(2\alpha +1)(t+2\alpha
)}.  \label{eq:EV(theta-hat)}
\end{equation}

The sequence $\mathbf{b}$ yields a single bin, and so $\mathcal{R}_{t}^{%
\mathbf{b}}$ is the variance of $a_{1},...,a_{t}$, i.e., $\bar{a}_{t}(1-\bar{%
a}_{t})$, which, by (\ref{eq:EV(a-bar)}), gives%
\begin{eqnarray}
\mathbb{E}\left[ \mathcal{R}_{t}^{\mathbf{b}}\right] &=&\mathbb{E}\left[ 
\bar{a}_{t}(1-\bar{a}_{t})\right] =\mathbb{E}\left[ \bar{a}_{t}\right] -%
\mathbb{E}^{2}\left[ \bar{a}_{t}\right] -\mathbb{V}ar\left[ \bar{a}_{t}%
\right]  \notag \\
&=&\frac{1}{2}-\frac{1}{4}-\frac{t+2\alpha }{4(2\alpha +1)t}=\lambda -\frac{%
\lambda }{t},  \label{eq:E(R)}
\end{eqnarray}%
where we used $\mathbb{E}\left[ \bar{a}_{t}^{2}\right] =\mathbb{E}^{2}\left[ 
\bar{a}_{t}\right] +\mathbb{V}ar\left[ \bar{a}_{t}\right] $, and put 
\begin{equation*}
\lambda 
{\;:=\;}%
\frac{\alpha }{2(2\alpha +1)}.
\end{equation*}%
Next, we have%
\begin{eqnarray*}
\mathbb{E}\left[ \mathcal{B}_{t}^{\mathbf{c}}\right] &=&\frac{1}{t}%
\sum_{s=1}^{t}\mathbb{E}\left[ (a_{s}-c_{s})^{2}\right] =\frac{1}{t}%
\sum_{s=1}^{t}\mathbb{E}\left[ \mathbb{E}\left[ (a_{s}-c_{s})^{2}|h_{s-1}%
\right] \right] \\
&\geq &\frac{1}{t}\sum_{s=1}^{t}\mathbb{E}\left[ \mathbb{V}ar\left[
a_{s}|h_{s-1}\right] \right] =\frac{1}{t}\sum_{s=1}^{t}\mathbb{E}\left[ \hat{%
\theta}_{s-1}(1-\hat{\theta}_{s-1})\right] ,
\end{eqnarray*}%
where the inequality is by $\mathbb{E}\left[ (X-Y)^{2}\right] \geq \mathbb{V}%
ar[X]$ for any $Y$ that is independent of\footnote{%
Use $\mathbb{E}\left[ (X-y)^{2}\right] \geq \mathbb{V}ar\left[ X\right] $
for each value $y$ of $Y.$ The inequality holds more generally for
nonpositively correlated $X$ and $Y,$ since $\mathbb{E}\left[ (X-Y)^{2}%
\right] \geq \mathbb{V}ar\left[ X-Y\right] =\mathbb{V}ar\left[ X\right] -2\,%
\mathbb{C}ov\left[ X,Y\right] +\mathbb{V}ar\left[ Y\right] $, which is $\geq 
\mathbb{V}ar\left[ X\right] $ when $\mathbb{C}ov\left[ X,Y\right] \leq 0$.} $%
X$, and the equality following it is by $a_{s}|h_{s-1}\sim \mathrm{Bernoulli}%
(\theta |h_{s-1})=\mathrm{Bernoulli}(\hat{\theta}_{s-1})$. Using (\ref%
{eq:EV(theta-hat)}) we get%
\begin{eqnarray*}
\mathbb{E}\left[ \hat{\theta}_{s-1}(1-\hat{\theta}_{s-1})\right] &=&\mathbb{E%
}\left[ \hat{\theta}_{s-1}\right] -\mathbb{E}^{2}\left[ \hat{\theta}_{s-1}%
\right] -\mathbb{V}ar\left[ \hat{\theta}_{s-1}\right] \\
&=&\frac{1}{2}-\frac{1}{4}-\frac{s-1}{4(2\alpha +1)(s-1+2\alpha )}=\lambda +%
\frac{\lambda }{s+2\alpha -1},
\end{eqnarray*}%
and thus%
\begin{equation*}
\mathbb{E}\left[ \mathcal{B}_{t}^{\mathbf{c}}\right] \geq \lambda +\frac{%
\lambda }{t}\sum_{s=1}^{t}\frac{1}{s+2\alpha -1}.
\end{equation*}%
Together with (\ref{eq:E(R)}) this yields%
\begin{equation*}
\mathbb{E}\left[ \mathcal{B}_{t}^{\mathbf{c}}-\mathcal{R}_{t}^{\mathbf{b}}%
\right] \geq \frac{\lambda }{t}\sum_{s=1}^{t}\frac{1}{s+2\alpha -1}+\frac{%
\lambda }{t}\sim \lambda \frac{\ln t}{t}
\end{equation*}%
as $t\rightarrow \infty $. Since $\lambda $ can be made arbitrarily close to%
\footnote{%
As $\alpha \rightarrow \infty $ the beta-binomial distribution converges to
the binomial distribution with $\theta =1/2,$ for which $\mathcal{R}%
_{t}\approx 1/4.$ We cannot however use this limit distribution, since $%
\theta $ being fixed yields a much smaller error, of the order of $1/t$
instead of $\log t/t.$} $1/4$ by taking large enough $\alpha $ we get (\ref%
{eq:ge lnt/t}), which completes the proof.
\end{proof}

\bigskip

\noindent \textbf{Remarks. }\emph{(a) }In the multidimensional case with $%
A=\{0,1\}^{m}$ and $C=[0,1]^{m}$ (for any $m\geq 1$), let $\mathbf{b}$ be a
constant sequence; applying the above result to each one of the $m$
coordinates separately and then summing up yields%
\begin{equation}
\sup_{\mathbf{a}_{t}\in A^{t}}\mathbb{E}\left[ \mathcal{B}_{t}^{\mathbf{c}}%
\mathbf{-}\mathcal{R}_{t}^{\mathbf{b}}\right] \geq \left( \frac{m}{4}%
-o(1)\right) \frac{\ln t}{t}  \label{eq:m/4}
\end{equation}%
as $t\rightarrow \infty .$

\emph{(b)} Given a finite set $B$, let the sequence $\mathbf{b}$ use all $b$
in $B$ with equal frequencies (for example, let the $b_{t}$ alternate in a
round-robin manner between the elements of $B$); applying Proposition \ref%
{p:lb=1/4} to the subsequence where $b_{t}=b$ for each $b\in B$ separately
and then summing up yields%
\begin{equation*}
\sup_{\mathbf{a}_{t}\in A^{t}}\mathbb{E}\left[ \mathcal{B}_{t}^{\mathbf{c}}%
\mathbf{-}\mathcal{R}_{t}^{\mathbf{b}}\right] \geq \left( \frac{1}{4}%
-o(1)\right) \left\vert B\right\vert \frac{\ln (t/\left\vert B\right\vert )}{%
t}.
\end{equation*}

\subsubsection{Improving the Constant\label{susus-a:constant}}

Here we will show that one can improve the calibeating error of Theorem \ref%
{th:beat-1} by a factor between $2$ and $4$ (depending on the dimension and
the shape of the set $C$), and this essentially matches the lower bound of
the previous Section \ref{susus-a:lower bound}.

Assuming that $C$ is a full-dimensional set in\footnote{%
If the affine space spanned by $C\subset \mathbb{R}^{m}$ has a lower
dimension $m^{\prime }<m,$ project everything to $\mathbb{R}^{m^{\prime }}.$}
$\mathbb{R}^{m}$, let $r$ be the \emph{radius of the minimal bounding sphere
of} $C;$ thus, $r$ is minimal such that $C\subseteq \bar{B}(c^{0},r)$ for
some $c^{0}\in C$. The relation of $r$ to the diameter $\gamma $ of $C$ is,
by Jung's (1901) theorem, 
\begin{equation}
r^{2}\leq \gamma ^{2}\frac{m}{2(m+1)}  \label{eq:Jung}
\end{equation}%
(and, of course, $\gamma \leq 2r).$

\begin{proposition}
\label{p:zeta-prime}Let $B$ be a finite set, and let $\zeta ^{\prime }$ be
the deterministic $\mathbf{b}$-based forecasting procedure given by 
\begin{equation*}
c_{t}^{\prime }%
{\;:=\;}%
\left( 1-\frac{1}{n_{t}^{\mathbf{b}}(b_{t})}\right) \bar{a}_{t-1}^{\mathbf{b}%
}(b_{t})+\frac{1}{n_{t}^{\mathbf{b}}(b_{t})}c^{0}
\end{equation*}%
for every time $t\geq 1$. Then $\zeta $ is $B$-calibeating, and%
\begin{equation}
\mathcal{B}_{t}^{\mathbf{c}^{\prime }}-\mathcal{R}_{t}^{\mathbf{b}}\leq
r^{2}|B|\frac{\ln t+1}{t}  \label{eq:Bc'-R}
\end{equation}%
for all $t\geq 1$ and all sequences $\mathbf{a}_{t}\in A^{t}$ and $\mathbf{b}%
_{t}\in B^{t}.$
\end{proposition}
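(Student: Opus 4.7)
My plan is to mirror the argument for Theorem \ref{th:beat-1} bin by bin, replacing the crude factor $\gamma^2$ by $r^2$ via a sharper per-step identity that exploits the center $c^0$ of the minimal bounding sphere. Fix a bin $b\in B$ and let $x_1,\ldots,x_k$ be the actions that fall into it, in order of occurrence, with running averages $\bar{x}_n:=(1/n)\sum_{i\le n}x_i$; the $n$-th forecast in this bin is $c_n'=(1-1/n)\bar{x}_{n-1}+(1/n)c^0$, so that $x_n-c_n'=(x_n-\bar{x}_{n-1})+(1/n)(\bar{x}_{n-1}-c^0)$.

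The crux is to establish the per-step identity
\[
\|x_n-c_n'\|^2-\Bigl(1-\tfrac{1}{n}\Bigr)\|x_n-\bar{x}_{n-1}\|^2=\frac{1}{n}\|x_n-c^0\|^2-\frac{1}{n}\Bigl(1-\tfrac{1}{n}\Bigr)\|\bar{x}_{n-1}-c^0\|^2,
\]
which I would derive by setting $u:=x_n-\bar{x}_{n-1}$ and $v:=\bar{x}_{n-1}-c^0$, expanding $\|u+v/n\|^2$, and using the fact that $u+v=x_n-c^0$. Since $x_n\in A\subseteq C\subseteq\bar{B}(c^0,r)$, the right-hand side is at most $r^2/n$. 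Summing this bound over $n=1,\ldots,k$ and invoking Proposition \ref{p:var} in the form $\sum_{n=1}^k(1-1/n)\|x_n-\bar{x}_{n-1}\|^2=\sum_{n=1}^k\|x_n-\bar{x}_k\|^2$ yields the single-bin bound
\[
\sum_{n=1}^{k}\|x_n-c_n'\|^2\;\le\;\sum_{n=1}^{k}\|x_n-\bar{x}_k\|^2+r^2(\ln k+1).
\]

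To finish, I would sum this inequality across the non-empty bins $b\in B$ with $k_b:=n_t^{\mathbf{b}}(b)$, divide by $t$, and bound $\sum_b(\ln k_b+1)\le|B|(\ln t+1)$ using concavity of $\ln$ together with $\sum_b k_b=t$. This delivers the claimed inequality (\ref{eq:Bc'-R}), and $B$-calibeating follows at once because the right-hand side tends to zero as $t\to\infty$. The only real obstacle is spotting the algebraic cancellation behind the per-step identity above; once that is in hand, everything else is a routine bin-by-bin adaptation of Proposition \ref{p:online-R} and Theorem \ref{th:beat-1}, and the improvement from $\gamma^2$ to $r^2$ is exactly what allows the constant to match, up to Jung's bound (\ref{eq:Jung}), the lower bound of Proposition \ref{p:lb=1/4}.
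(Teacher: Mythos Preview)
Your proposal is correct and follows essentially the same route as the paper. The per-step identity you derive via the substitution $u=x_n-\bar{x}_{n-1}$, $v=\bar{x}_{n-1}-c^0$ is exactly the paper's identity $\|x-(1-\nu)y\|^2-(1-\nu)\|x-y\|^2=\nu\|x\|^2-\nu(1-\nu)\|y\|^2$ specialized to $x=x_n-c^0$, $y=\bar{x}_{n-1}-c^0$, $\nu=1/n$; the paper then notes that summing the left-hand sides over $s=1,\ldots,t$ and dividing by $t$ gives $\mathcal{B}_t^{\mathbf{c}'}-\mathcal{R}_t^{\mathbf{b}}$ directly (which is your bin-by-bin use of Proposition~\ref{p:var}), and bounds the right-hand side by the same harmonic sum $\sum_b\sum_{i=1}^{k_b} r^2/i\le r^2|B|(\ln t+1)$ (for which the trivial bound $k_b\le t$ already suffices; the concavity of $\ln$ is not needed here).
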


Thus, the forecast $c_{t}^{\prime }$ of $\zeta ^{\prime }$ is an
appropriately weighted average of the forecast $c_{t}=\bar{a}_{t-1}^{\mathbf{%
b}}(b_{t})$ of the procedure $\zeta $ of Theorem \ref{th:beat-1} and the
fixed \textquotedblleft center" point $c^{0}$ of $C$. Compared with (\ref%
{eq:r-b}), the upper bound of (\ref{eq:Bc'-R}) on the calibeating error has $%
r^{2}$ instead of $\gamma ^{2}$, which, by (\ref{eq:Jung}), is an
improvement by a factor of at least $2;$ when $m=1$, by a factor of $4$. Of
course, $\zeta ^{\prime }$ gives up somewhat on the extreme simplicity of $%
\zeta $, i.e., (\ref{eq:c=a-bar}).

\bigskip

\begin{proof}
For any vectors $x,y\in \mathbb{R}^{m}$ and any scalar $\nu \in \lbrack 0,1]$%
, we have $\left\Vert x-(1-\nu )y\right\Vert ^{2}-(1-\nu )\left\Vert
x-y\right\Vert ^{2}=\nu \left\Vert x\right\Vert ^{2}-$ $\nu (1-\nu
)\left\Vert y\right\Vert ^{2}\leq \nu \left\Vert x\right\Vert ^{2}$.
Applying this to $x=a_{s}-c^{0}$, $y=\bar{a}_{s-1}^{\mathbf{b}}(b_{s})-c^{0}$%
, and $\nu =1/n_{s}^{\mathbf{b}}(b_{s})$ yields%
\begin{equation*}
\left\Vert a_{s}-c_{s}^{\prime }\right\Vert ^{2}-\left( 1-\frac{1}{n_{s}^{%
\mathbf{b}}(b_{s})}\right) \left\Vert a_{s}-\bar{a}_{s-1}^{\mathbf{b}%
}(b_{s})\right\Vert ^{2}\leq \frac{1}{n_{s}^{\mathbf{b}}(b_{s})}\left\Vert
a_{s}-c^{0}\right\Vert ^{2}\leq \frac{r^{2}}{n_{s}^{\mathbf{b}}(b_{s})}.
\end{equation*}%
Averaging the left-hand side for $s=1,...,t$ yields $\mathcal{B}_{t}^{%
\mathbf{c}^{\prime }}-\mathcal{R}_{t}^{\mathbf{b}}$, and so, putting $%
B_{t}:=\{b\in B:n_{t}^{\mathbf{b}}(b)>0\}$, we get%
\begin{equation*}
\mathcal{B}_{t}^{\mathbf{c}^{\prime }}-\mathcal{R}_{t}^{\mathbf{b}}\leq 
\frac{1}{t}\sum_{b\in B_{t}}\sum_{i=1}^{n_{t}^{\mathbf{b}}(b)}\frac{r^{2}}{i}%
\leq \frac{1}{t}\left\vert B\right\vert r^{2}(\ln t+1)
\end{equation*}%
(because $\left\vert B_{t}\right\vert \leq \left\vert B\right\vert $ and $%
n_{t}^{\mathbf{b}}(b)\leq t$), which is (\ref{eq:Bc'-R}).
\end{proof}

\bigskip

\noindent \textbf{Remark. }When $C=[0,1]^{m}$ for some $m\geq 1$, we have $%
r^{2}=m/4$ (take $c^{0}=(1/2,...,1/2)$), and thus%
\begin{equation*}
\mathcal{B}_{t}^{\mathbf{c}^{\prime }}-\mathcal{R}_{t}^{\mathbf{b}}\leq 
\frac{m|B|(\ln t+1)}{4t}.
\end{equation*}%
When $B$ is a singleton and $\mathbf{b}$ is a constant sequence, this upper
bound is $(m/4+o(1))\ln t/t$, which is asymptotically the same as the lower
bound of (\ref{eq:m/4}).

\subsubsection{Additional Comments\label{susus-a:simple-additional}}

We provide here a number of additional remarks on the result of Theorem \ref%
{th:beat-1} on simple calibeating.

\bigskip

\noindent \textbf{Remarks. }\emph{(a) }If we want our forecasts $c_{t}$ to
lie in a given $\delta $-grid $D\subset C$ (which may or may not be the same
as the grid $B$ used for $\mathbf{b}$), then taking $c_{t}\in D$ to be
within $\delta $ of $\bar{a}_{t-1}^{\mathbf{b}}(b_{t})$ introduces an
additional error of $2\gamma \delta $ in (\ref{eq:r-b}) (because $\left\vert
\left\Vert a-c\right\Vert ^{2}-\left\Vert a-d\right\Vert ^{2}\right\vert
\leq 2\gamma \left\Vert d-c\right\Vert $ for all $a,c,d\in C$), and thus it
yields $(\sqrt{2\gamma \delta },B)$-calibeating.

\emph{(b) }Let the $\mathbf{b}$-forecasts be generated by a forecasting
procedure $\sigma ;$ the procedure $\sigma ^{\prime }$ of Remark (e) in
Section \ref{s:simple}, whereby each $b_{t}$ is replaced by the
corresponding $\bar{a}_{t-1}^{\mathbf{b}}(b_{t})$, and which guarantees a
lower Brier score than $\sigma $ in the long run, is implemented as follows.
In each period $t$ one computes the forecast $b_{t}$ according to $\sigma $,
and then announces $c_{t}=\bar{a}_{t-1}^{\mathbf{b}}(b_{t})$ (the $b_{t}$ is
not announced). To carry this out one needs to recall the history $\mathbf{b}%
_{t-1}$, which in general need not be deducible from the history $(\mathbf{a}%
_{t-1},\mathbf{c}_{t-1})$ of $\sigma ^{\prime }$ (because different $b$-bins
may have had the same average, and so different $b_{s}$ may have yielded the
same $c_{s}=\bar{a}_{s-1}^{\mathbf{b}}(b_{s})$). In game-theoretic terms,
the resulting $\sigma ^{\prime }$ is \emph{not} a \emph{behavior} strategy
(which is what we have defined a forecasting procedure to be, in Section \ref%
{sus:calibration}), but rather a \emph{mixed} strategy (i.e., a
probabilistic mixture of pure, deterministic, strategies). However, since
the game between the \textquotedblleft forecasting player" and the
\textquotedblleft action player" is a game of perfect recall, by Kuhn's
(1953) theorem the mixed strategy $\sigma ^{\prime }$ induces an equivalent
behavior strategy $\sigma ^{\prime \prime }$, which is thus a forecasting
procedure (this \textquotedblleft equivalence" means that no matter what the
action player does, the probability of any outcome is the same under the
mixed strategy and the induced behavior strategy). The construction of $%
\sigma ^{\prime \prime }$ is straightforward (see, e.g., Hart 1992): for
every $t\geq 1$, history $(\mathbf{a}_{t-1},\mathbf{c}_{t-1})\in
A^{t-1}\times C^{t-1}$, and forecast $c_{t}\in C$, let $\Gamma _{t-1}:=\{%
\mathbf{b}_{t-1}:\bar{a}_{s-1}^{\mathbf{b}}(b_{s})=c_{s}$ for every $1\leq
s\leq t-1\}$ and $\Gamma _{t}:=\{\mathbf{b}_{t}:\bar{a}_{s-1}^{\mathbf{b}%
}(b_{s})=c_{s}$ for every $1\leq s\leq t\}$ be the sets of $\mathbf{b}_{t-1}$
and $\mathbf{b}_{t}$ that, together with the given $\mathbf{a}_{t-1}$, yield 
$\mathbf{c}_{t-1}$ and $\mathbf{c}_{t}$, respectively; then the probability
that $\sigma ^{\prime \prime }$ forecasts $c_{t}$ after $(\mathbf{a}_{t-1},%
\mathbf{c}_{t-1})$ is given by%
\begin{equation}
\sigma ^{\prime \prime }(\mathbf{a}_{t-1},\mathbf{c}_{t-1})(c_{t})%
{\;:=\;}%
\frac{\sum_{\mathbf{b}_{t}\in \Gamma _{t}}\prod_{s=1}^{t}\sigma (\mathbf{a}%
_{s-1},\mathbf{b}_{s-1})(b_{s})}{\sum_{\mathbf{b}_{t-1}\in \Gamma
_{t-1}}\prod_{s=1}^{t-1}\sigma (\mathbf{a}_{s-1},\mathbf{b}_{s-1})(b_{s})}.
\label{eq:kuhn}
\end{equation}

\emph{(c) }Can one do better than by choosing $c_{t}=\bar{a}_{t-1}^{\mathbf{b%
}}(b_{t})$ at each time $t$? Since $\widetilde{\mathcal{R}}_{t}^{\mathbf{b}}-%
\mathcal{R}_{t}^{\mathbf{b}}=O(\log t/t)\rightarrow 0$, consider the game
where our $\mathbf{c}$-forecaster wants to minimize $\mathcal{B}_{t}^{%
\mathbf{c}}-\widetilde{\mathcal{R}}_{t}^{\mathbf{b}}$ (instead of $\mathcal{B%
}_{t}^{\mathbf{c}}-\mathcal{R}_{t}^{\mathbf{b}})$ against an opponent that
controls the sequences $\mathbf{a}_{t}$ and $\mathbf{b}_{t};$ alternatively,
the opponent controls the sequence $\mathbf{a}_{t}$, whereas the sequence $%
\mathbf{b}_{t}$ is exogenous or is determined by history (i.e., by a
forecasting procedure). We claim that the strategy $\zeta $ of Theorem \ref%
{th:beat-1} is the \emph{unique} subgame-perfect optimal strategy of our
forecaster. To see this, consider 
\begin{equation}
\mathcal{B}_{t}^{\mathbf{c}}-\widetilde{\mathcal{R}}_{t}^{\mathbf{b}}=\frac{1%
}{t}\sum_{s=1}^{t}\left[ ||a_{s}-c_{s}||^{2}-\left\Vert a_{s}-\bar{a}_{s-1}^{%
\mathbf{b}}(b_{s})\right\Vert ^{2}\right] .  \label{eq:B-Rtilde}
\end{equation}%
Suppose that we are in a period $r\leq t$, and so the terms $s<r$ of the sum
(\ref{eq:B-Rtilde}) are all given. To \emph{guarantee}---no matter what the
future $a_{s}$ and $b_{s}$ will be---that the sum of the remaining terms,
i.e., $s\geq r$, is as small as possible, one \emph{must} now choose $c_{r}=%
\bar{a}_{r-1}^{\mathbf{b}}(b_{r})$. This follows since for every $\bar{a}\in 
\mathrm{conv}A$ and $c\neq \bar{a}$ we have\footnote{%
One way to see this is as follows. Let $\bar{a}=\sum_{i}\lambda _{i}a_{i}$
be a convex combination of elements $a_{i}$ in $A;$ then $\sum_{i}\lambda
_{i}\left\Vert a_{i}-c\right\Vert ^{2}=\sum_{i}\lambda _{i}\left\Vert a_{i}-%
\bar{a}\right\Vert ^{2}+\left\Vert c-\bar{a}\right\Vert ^{2}$ (because $\bar{%
a}$ is the weighted average of the $a_{i}),$ and so for some $i$ we must
have $\left\Vert a_{i}-c\right\Vert ^{2}\geq \left\Vert a_{i}-\bar{a}%
\right\Vert ^{2}+\left\Vert c-\bar{a}\right\Vert ^{2}.$} 
\begin{equation*}
\sup_{a\in A}[\left\Vert a-c\right\Vert ^{2}-\left\Vert a-\bar{a}\right\Vert
^{2}]\geq \left\Vert c-\bar{a}\right\Vert ^{2}>0,
\end{equation*}%
whereas $\left\Vert a-c\right\Vert ^{2}-\left\Vert a-\bar{a}\right\Vert
^{2}=0$ for every $a$ when $c=\bar{a}$. Thus 
\begin{equation*}
\min_{c\in C}\,\sup_{a\in A}[\left\Vert a-c\right\Vert ^{2}-\left\Vert a-%
\bar{a}\right\Vert ^{2}]=0,
\end{equation*}%
with a \emph{unique} minimizer at $c=\bar{a}$. Therefore, for any sequence $%
\mathbf{b}_{t}$ we have%
\begin{equation*}
\min_{c_{r},...,c_{t}\in C}\,\sup_{a_{r},...,a_{t}\in A}\sum_{s=r}^{t}\left[
||a_{s}-c_{s}||^{2}-\left\Vert a_{s}-\bar{a}_{s-1}^{\mathbf{b}%
}(b_{s})\right\Vert ^{2}\right] =0,
\end{equation*}%
with the minimum \emph{uniquely }attained by choosing $c_{s}=\bar{a}_{s-1}^{%
\mathbf{b}}(b_{s})$ for each $s=r,...,t.$

\subsection{A Minimax Proof of Calibeating by a Calibrated Forecast\label%
{s-a:minimax}}

The simplest proof of the existence of forecasts that are guaranteed to be
calibrated consists of an application of the von Neumann's (1928) minimax
theorem for finite two-person zero-sum games. See Hart (2021), which is a
writeup of a proof provided in 1995 (see Section 4, \textquotedblleft An
Argument of Sergiu Hart," in Foster and Vohra 1998); for generalizations,
see Sandroni (2003), Olszewski and Sandroni (2008), and Shmaya (2008). We
provide here a minimax proof of calibeating as well. Note that these minimax
proofs do not yield actual constructions of the corresponding procedures;
they just give simple proofs of their existence.

The \emph{calibeating game} consists of two players, which we call the
\textquotedblleft $AB$-player" and the \textquotedblleft $C$-player." In
each period $s=1,2,...,t$ the $AB$-player chooses a pair $(a_{s},b_{s})\in
A\times B$, then the chosen $b_{s}$ is revealed, and finally the $C$-player
chooses a forecast $c_{s}\in C$ (equivalently: first $b_{s}$ is chosen by
the $AB$-player and is publicly announced, and then $a_{s}$ and $c_{s}$ are
chosen by the two players independently). In period $t$, when the game ends,
the payoff of the $C$-player is his calibration score $\mathcal{K}_{t}^{%
\mathbf{c}}$. The joint choice of the action $a_{s}$ and the forecast $b_{s}$
is equivalent to allowing an arbitrary dependence between them; i.e., the $%
\mathbf{b}$-forecaster may have any degree of \textquotedblleft knowledge"
or \textquotedblleft expertise" about the action sequence $\mathbf{a}$ (from
\textquotedblleft no knowledge,\textquotedblright\ where $a_{s}$ and $b_{s}$
are chosen independently, all the way to \textquotedblleft complete
knowledge," where they are fully correlated, e.g., $b_{s}=a_{s}$).

The sets $A$ and $B$ are assumed to be finite (with $A\subset C$), and we
will restrict the $C$-player to use a finite set $D\subset C$, which makes
the calibeating game a finite game. Moreover, we will make the $\mathbf{c}$%
-binning a refinement of the $\mathbf{b}$-binning, and so $\mathcal{R}_{t}^{%
\mathbf{c}}\leq \mathcal{R}_{t}^{\mathbf{b}}$ (by Proposition \ref%
{p:refine-R} in Appendix \ref{s-a:refined} below). Therefore, a strategy $%
\sigma $ of the $C$-player that guarantees $\mathbb{E}\left[ \mathcal{K}%
_{t}^{\mathbf{c}}\right] \leq \varepsilon $ against any strategy $\tau $ of
the $AB$-player gives $\mathbb{E}\left[ \mathcal{B}_{t}^{\mathbf{c}}-%
\mathcal{R}_{t}^{\mathbf{b}}\right] =\mathbb{E}\left[ \mathcal{K}_{t}^{%
\mathbf{c}}+\mathcal{R}_{t}^{\mathbf{c}}-\mathcal{R}_{t}^{\mathbf{b}}\right]
\leq \mathbb{E}\left[ \mathcal{K}_{t}^{\mathbf{c}}\right] \leq \varepsilon $%
, and thus it yields $B$-calibeating by a calibrated procedure (cf. Theorem %
\ref{th:beat-by-calib}).

We proceed as follows. For every $b\in B\mathbf{\ }$we take $D_{b}$ to be a
finite $\delta $-grid of $C$, such that these grids are disjoint, i.e., $%
D_{b}\cap D_{b^{\prime }}=\emptyset $ for all $b\neq b^{\prime };$ put $%
D:=\cup _{b\in B}D_{b}$. We then restrict the $C$-player to use only
forecasts in $D_{b}$ after $b$ is announced, i.e., $c_{s}\in D_{b_{s}}$ for
every $s;$ this indeed makes the $\mathbf{c}$-binning a refinement of the $%
\mathbf{b}$-binning.

We claim that for every mixed strategy $\tau $ of the $AB$-player there is a
strategy $\sigma $ of the $C$-player such that $\mathbb{E}\left[ \mathcal{K}%
_{t}^{\mathbf{c}}\right] \leq \varepsilon :=O(\delta +1/\sqrt{t})$ (we do
not try optimize the bounds here). Indeed, in each period $s\leq t$ choose $%
c_{s}\in D_{b_{s}}$ to be such that $\left\Vert c_{s}-\mathbb{E}\left[
a_{s}|h_{s-1},b_{s}\right] \right\Vert \leq \delta $ (i.e., take the
conditional probability---given the history $h_{s-1}=(\mathbf{a}_{s-1},%
\mathbf{b}_{s-1},\mathbf{c}_{s-1})$ and the announced $b_{s}$---that is
generated by the mixed strategy $\tau $ of the $AB$-player, and round it up
to the $\delta $-grid $D_{b_{s}}$ that is used for $b_{s}$). A standard
computation, as in Hart (2021), shows that $\mathbb{E}\left[ K_{t}\right]
\leq \delta +\gamma \sqrt{\left\vert D\right\vert }/\sqrt{t}$, where $%
K_{t}:=(1/t)\sum_{d\in D}n_{t}(d)\left\Vert e_{t}(d)\right\Vert $ is the $%
\ell _{1}$-calibration score, for which we trivially have $\mathcal{K}%
_{t}\leq \gamma K_{t}$ (see Section \ref{sus:scores}); this proves the claim.

The minimax theorem therefore yields a strategy $\sigma $ of the $C$-player
that guarantees $\mathbb{E}\left[ \mathcal{K}_{t}^{\mathbf{c}}\right] \leq
\varepsilon $ against all strategies $\tau $ of the $AB$-player; we have
thus obtained calibration and, as shown above, $B$-calibeating.

\subsection{Complexity of Procedures: Minimax (MM) and Fixed Point (FP)
Procedures\label{sus:fp-mm procedures}}

The basic calibeating procedure of Theorem \ref{th:beat-1} is very simple,
as it requires just the computation of averages; the same holds for the
multi-calibeating procedure of Theorem \ref{th:multi} (i). The other
procedures that we provide are more complex, and require solving at each
step a certain multidimensional problem. These problems turn out to be of
two distinct kinds: for stochastic procedures, they are finite minimax (or
linear programming) problems, and for deterministic and almost-deterministic
procedures, they are continuous fixed point problems. (The existence of the
corresponding solutions is proven by the von Neumann 1928 minimax theorem
and the Brouwer 1912 fixed point theorem, respectively; see Appendix \ref%
{s-a:outgoing} below.) Following Foster and Hart (2021, Section III.D), we
refer to these as being of \emph{type MM}\ (minmax) and \emph{type FP}
(fixed point), respectively.

This distinction, which is significant in the multidimensional case (i.e.,
for $m>1$) and is of the polynomial vs. nonpolynomial variety, is not just a
matter of proof technique; see Foster and Hart (2021), Sections III.D, VI,
and VII (with a summary in Table I there). Theorem \ref{th:outgoing} in
Appendix \ref{s-a:outgoing}, which provides the \textquotedblleft outgoing"
tools that we use, makes the distinction clear: part (S) gives procedures of
type MM, and parts (D) and (AD) give procedures of type FP.

Specifically, the results that yield procedures of type MM are: calibration
(Theorem \ref{th:calibration}), calibeating with calibration (Theorem \ref%
{th:beat-by-calib}), and multi-calibeating with calibration (Theorem \ref%
{th:multi} (ii))---all of them without the \textquotedblleft moreover"
almost-deterministic statement. The results that yield FP procedures are all
the above \textquotedblleft moreover" statements, calibeating with
continuous calibration, and multi-calibeating with continuous calibration
(Theorems \ref{th:cont}, \ref{th:cont-calib}, and \ref{th:multi} (iii)).

\subsection{\textquotedblleft Outgoing" Results\label{s-a:outgoing}}

We provide the results of the \textquotedblleft outgoing" theorems of Foster
and Hart (2021), restating them in a convenient manner for our use. The
seemingly slightly weaker formulations here are still equivalent to
Brouwer's fixed point theorem and von Neumann's minimax theorem,
respectively; see Remarks (c) and (d) below. The FP vs. MM distinction is
discussed in Appendix \ref{sus:fp-mm procedures} above. A probability
distribution $\eta $ is called \textquotedblleft $\delta $-local" if its
support is included in a ball of radius $\delta ;$ i.e., there is $y$ such
that $\eta (\bar{B}(y;\delta ))=1.$

\begin{theorem}
\label{th:outgoing}Let $C\subset \mathbb{R}^{m}$ be a nonempty compact
convex set.

\emph{\textbf{(D)}} Let $g:C\rightarrow \mathbb{R}^{m}$ be a continuous
function. Then there exists a point $y$ in $C$ that is of type FP, such that%
\begin{equation}
\left\Vert x-y\right\Vert \leq \left\Vert x-g(y)\right\Vert \;\;\text{for
all }x\in C.  \label{eq:out_fp-fp}
\end{equation}

\emph{\textbf{(S)}} Let $D\subset C$ be a finite $\delta $-grid of $C$ for
some $\delta >0$, and let $g:D\rightarrow \mathbb{R}^{m}$ be a function.
Then there exists a probability distribution $\eta $ on $D$ that is of type
MM and has support of size at most $m+3$, such that%
\begin{equation}
\mathbb{E}_{y\sim \eta }\left[ \left\Vert x-y\right\Vert ^{2}\right] \leq 
\mathbb{E}_{y\sim \eta }\left[ \left\Vert x-g(y)\right\Vert ^{2}\right]
+\delta ^{2}\;\;\text{for all }x\in C.  \label{eq:out-mm-fp}
\end{equation}

\emph{\textbf{(AD)}} Let $D\subset C$ be a finite $\delta $-grid of $C$ for
some $\delta >0$, and let $g:D\rightarrow \mathbb{R}^{m}$ be a function.
Then there exists a probability distribution $\eta $ on $D$ that is $\delta $%
-local, of type FP, has support of size at most $m+1$, and satisfies (\ref%
{eq:out-mm-fp}).
\end{theorem}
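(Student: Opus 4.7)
I would prove the three parts separately. For (D), a direct application of Brouwer suffices: define $F : C \to C$ by $F(y) := \pi_C(g(y))$, where $\pi_C$ is the metric projection onto the compact convex set $C$. $F$ is continuous since $g$ and $\pi_C$ are, so by Brouwer there exists $y \in C$ with $y = \pi_C(g(y))$. The defining variational inequality of the projection, $\langle x - y,\, g(y) - y \rangle \leq 0$ for every $x \in C$, is equivalent to $\|x - y\|^2 \leq \|x - g(y)\|^2$, which is \eqref{eq:out_fp-fp}. This is of type FP because it is obtained as a Brouwer solution.

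For (AD), I would triangulate (a neighborhood of) $C$ using simplices whose vertices lie in $D$ and whose diameters are at most $\delta$, then extend $g$ from $D$ to a continuous $\tilde g : C \to \mathbb{R}^m$ by affine interpolation on each simplex. Apply (D) to $\tilde g$ to obtain $y^* \in C$ with $\|x - y^*\|^2 \leq \|x - \tilde g(y^*)\|^2$ for all $x \in C$. Writing $y^* = \sum_{i=0}^{k} \lambda_i y_i$ in the barycentric coordinates of its containing simplex (so $k \leq m$ and $y_i \in D$), set $\eta := \sum_i \lambda_i \delta_{y_i}$; this is $\delta$-local with support at most $m+1$. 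Linearity of the interpolation gives $\mathbb{E}_\eta[g(y)] = \tilde g(y^*)$, and the variance-mean identity yields
\begin{equation*}
\mathbb{E}_\eta[\|x - y\|^2] - \mathbb{E}_\eta[\|x - g(y)\|^2] = \bigl(\|x - y^*\|^2 - \|x - \tilde g(y^*)\|^2\bigr) + \mathbb{V}ar_\eta(y) - \mathbb{V}ar_\eta(g(y)) \leq 0 + \delta^2,
\end{equation*}
since the bracket is $\leq 0$ by (D) and $\mathbb{V}ar_\eta(y) \leq \delta^2$ while $\mathbb{V}ar_\eta(g(y)) \geq 0$. The resulting procedure is of type FP because (D) was invoked.

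For (S), the distribution from (AD) already satisfies \eqref{eq:out-mm-fp} with support $m+1 \leq m+3$, so existence is immediate; the content of (S) is to obtain the distribution genuinely as the solution of a minimax problem (type MM). Formulate a finite two-player zero-sum game in which player 1 plays $y \in D$, an adversary plays $x$ from a fine discretization of $C$, and the payoff is $\|x - y\|^2 - \|x - g(y)\|^2$. Von Neumann's minimax theorem provides a value, which the (AD)-style calculation bounds by $\delta^2$. Since the expected payoff of player 1 against any $x \in C$ depends on $\eta$ only through the $(m+1)$-dimensional statistic $(\mathbb{E}_\eta[y - g(y)],\, \mathbb{E}_\eta[\|y\|^2 - \|g(y)\|^2])$, a Carath\'eodory-style argument trims the support of an optimal $\eta$ to at most $m+3$.

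The main obstacle is the triangulation in (AD): ensuring that simplices with vertices only in $D$ can be arranged to have mesh $\leq \delta$, when a $\delta$-grid only guarantees $\delta$-approximation of every point of $C$, not the existence of such a clean triangulation. In practice one either enlarges $D$ slightly (at the cost of a small constant), or absorbs a dimension-dependent constant into $\delta$, following the explicit construction given in Foster and Hart (2021).
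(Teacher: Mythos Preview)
Your argument for (D) is exactly the direct Brouwer proof the paper gives in its Remark~(a), and your (AD) construction via triangulation and affine interpolation is a reasonable self-contained substitute for the paper's citation of the almost-deterministic outgoing theorem of Foster and Hart (2021); the triangulation caveat you flag is real but is the same one handled there.

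The gap is in (S). You correctly identify that the point of (S) is to obtain $\eta$ as a genuine minimax (type~MM) object, but you then bound the value of the game by ``the (AD)-style calculation''---i.e., by invoking a distribution whose existence was established via Brouwer. That makes the proof of (S) rest on a fixed-point argument, which is precisely what the MM/FP distinction is meant to avoid (cf.\ the paper's Remark~(d), which shows (S) is \emph{equivalent} to von Neumann's theorem). The fix is short and purely minimax, and is what the paper does in its Remark~(b): against any mixed strategy $\xi$ of the maximizer with barycenter $\bar{x}=\mathbb{E}_{x\sim\xi}[x]$, the minimizer plays a \emph{pure} $y\in D$ with $\|y-\bar{x}\|\leq\delta_0<\delta$, and the bias--variance identity gives
\[
\mathbb{E}_{x\sim\xi}\!\left[\|x-y\|^2-\|x-g(y)\|^2\right]
=\|\bar{x}-y\|^2-\|\bar{x}-g(y)\|^2\leq\delta_0^2,
\]
so the max--min value is at most $\delta_0^2$; von Neumann then yields the desired $\eta$. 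A small continuity argument (the paper takes a finite $\delta_1$-grid $D_1$ for the maximizer with $\delta_1$ chosen so that $\delta_0^2+4\gamma\delta_1\leq\delta^2$) extends the inequality from the finite game to all $x\in C$, a step you also left implicit. With this correction your Carath\'eodory trimming goes through unchanged.
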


\begin{proof}
We will use the following easy-to-verify identity%
\begin{equation}
\left\Vert x-y\right\Vert ^{2}-\left\Vert x-z\right\Vert ^{2}=2(z-y)\cdot
(x-y)-\left\Vert z-y\right\Vert ^{2},  \label{eq:identity}
\end{equation}%
with $z=g(y),$ to get from the statements in Foster and Hart (2021) to the
present ones.

(D) The fixed point outgoing theorem 4 of Foster and Hart (2021) applied to
the function $f(x)=g(x)-x$ yields a point $y\in C$ such that for all $x\in C$
we have 
\begin{equation*}
(g(y)-y)\cdot (x-y)\leq 0,
\end{equation*}%
and thus $\left\Vert x-y\right\Vert ^{2}-\left\Vert x-g(y)\right\Vert
^{2}\leq 0$, by (\ref{eq:identity}) with $z=g(y).$

(S) The minimax outgoing theorem 5 of Foster and Hart (2021) applied to the
function $f(x)=g(x)-x$ yields a distribution $\eta \in \Delta (D)$ such that
for all $x\in C$ we have%
\begin{equation}
\mathbb{E}_{y\sim \eta }\left[ (g(y)-y)\cdot (x-y)\right] \leq \delta \,%
\mathbb{E}_{y\sim \eta }\left[ \left\Vert g(y)-y\right\Vert \right] ,
\label{eq:out-mm}
\end{equation}%
and thus, by (\ref{eq:identity}) with $z=g(y)$,%
\begin{equation*}
\mathbb{E}_{y\sim \eta }\left[ \left\Vert x-y\right\Vert ^{2}-\left\Vert
x-g(y)\right\Vert ^{2}\right] \leq \mathbb{E}_{y\sim \eta }\left[ 2\delta
\left\Vert g(y)-y\right\Vert \right] -\mathbb{E}_{y\sim \eta }\left[
\left\Vert g(y)-y\right\Vert ^{2}\right] ,
\end{equation*}%
which gives (\ref{eq:out-mm-fp}) since $2\delta \left\Vert g(y)-y\right\Vert
\leq \delta ^{2}+\left\Vert g(y)-y\right\Vert ^{2}.$

(AD) This is the same proof as for (S), except that it uses the almost
deterministic outgoing theorem 7 of Foster and Hart (2021).
\end{proof}

\bigskip

What (\ref{eq:out_fp-fp}) says is that $y$ is closer than $g(y)$ to each
point $x$ in $C;$ similarly, (\ref{eq:out-mm-fp}) says that the random $y$
with distribution $\eta $ is closer on average than $g(y)$ (within a $\delta 
$-tolerance) to each point $x$ in $C$. To get some intuition, let $\lambda
\equiv \lambda (x):=\left\Vert x-y\right\Vert ^{2}-\left\Vert
x-g(y)\right\Vert ^{2};$ if $g:C\rightarrow C$ (as in Brouwer's fixed point
theorem) then condition (\ref{eq:out_fp-fp}), which says that $\lambda $ $%
\leq 0$ for every $x\in C$, is equivalent to $g(y)=y$, i.e., to $y$ being a
fixed point of $g$ (indeed, for a fixed point $y$ we have $\lambda =0$ for
all $x;$ conversely, for $x=g(y)$, which is a point in $C$, we get $\lambda
=\left\Vert g(y)-y\right\Vert ^{2}\leq 0$, and thus $g(y)=y$). Condition (%
\ref{eq:out-mm-fp}) extends this by requiring that $\lambda \leq 0$ hold
approximately on average, i.e., $\mathbb{E}\left[ \lambda \right] \leq
\delta ^{2}$, for every $x\in C$. This suggests (\ref{eq:out-mm-fp}) as a
suitable concept of a \textquotedblleft \emph{stochastic approximate fixed
point}" (note that a point $y$ such that $y$ and $g(y)$ are close---a
natural attempt to define an approximate fixed point concept---need not
exist in general: take, for example, the function $g:[0,1]\rightarrow
\lbrack 0,1]$ given by $g(x)=1$ for $x\leq 1/2$, and $g(x)=0$ for $x>1/2$,
for which $|g(x)-x|\geq 1/2$ for all $x$; this example also shows that one
cannot strengthen $\mathbb{E}\left[ \lambda \right] \leq \delta ^{2}$ to $%
\mathbb{E}\left[ \left\vert \lambda \right\vert \right] \leq \delta ^{2}$
---i.e., \textquotedblleft $\lambda =0$" instead of \textquotedblleft $%
\lambda \leq 0$"---because for $x=0$ we have $\left\vert \lambda \right\vert
=\left\vert y^{2}-g(y)^{2}\right\vert \geq 1/4$ for all $y\in C$).

\bigskip

\noindent \textbf{Remarks. }\emph{(a)} It suffices to consider functions $g$
whose range is included in\footnote{%
In the application of these results to calibration (both in Foster and Hart
2021 and in the present paper) the functions are always into $C$ (for
instance, $g(c)=\bar{a}_{t-1}(c)).$} $C$. Indeed, replacing $g$ with the
function $\hat{g}$ given by $\hat{g}(x):=\mathrm{proj}_{C}g(x)$ for every $x$
(which is continuous when $g$ is continuous) can only decrease the
right-hand side of inequalities (\ref{eq:out_fp-fp}) and (\ref{eq:out-mm-fp}%
) (because $\left\Vert x-\mathrm{proj}_{C}z\right\Vert \leq \left\Vert
x-z\right\Vert $ for every $x\in C$ and $z\in \mathbb{R}^{m}$), and so if
they hold for $\hat{g}$ then they hold for $g$ as well. For a direct proof
of Theorem \ref{th:outgoing} (D) by Brouwer's fixed point theorem, let $y$
be a fixed point of $\hat{g};$ then $\left\Vert x-y\right\Vert =\left\Vert x-%
\hat{g}(y)\right\Vert \leq \left\Vert x-g(y)\right\Vert $ for all $x\in C$.

\emph{(b) }A direct proof of Theorem \ref{th:outgoing} (S) by von Neumann's
minimax theorem for finite games is as follows. Let $\delta _{0}:=\max_{x\in
C}\mathrm{dist}(x,D)<\delta $ (see the proof of Theorem 5 in Foster and Hart
2021), put $\delta _{1}:=(\delta ^{2}-\delta _{0}^{2})/4\gamma >0$, and take 
$D_{1}\subset C$ to be a finite $\delta _{1}$-grid of $C$. Consider the
finite two-person zero-sum game where the maximizer chooses $x\in D_{1}$,
the minimizer chooses $y\in D$, and the payoff is\ $\left\Vert
x-y\right\Vert ^{2}-\left\Vert x-g(y)\right\Vert ^{2}$, where $%
g:D\rightarrow C$ is the given function (use Remark (a) above). For every
mixed strategy $\xi \in \Delta (D_{1})$ of the maximizer, let $\bar{x}:=%
\mathbb{E}_{x\sim \xi }\left[ x\right] \in C$ be its expectation; the
minimizer can make the payoff $\leq \delta _{0}^{2}$ by choosing a point $y$
on the grid $D$ that is within $\delta _{0}$ of $\bar{x}$, since%
\begin{eqnarray*}
\mathbb{E}_{x\sim \xi }\left[ \left\Vert x-y\right\Vert ^{2}\right] &=&%
\mathbb{E}_{x\sim \xi }\left[ \left\Vert x-\bar{x}\right\Vert ^{2}\right]
+\left\Vert \bar{x}-y\right\Vert ^{2}\leq \mathbb{E}_{x\sim \xi }\left[
\left\Vert x-\bar{x}\right\Vert ^{2}\right] +\delta _{0}^{2} \\
\mathbb{E}_{x\sim \xi }\left[ \left\Vert x-g(y)\right\Vert ^{2}\right] &=&%
\mathbb{E}_{x\sim \xi }\left[ \left\Vert x-\bar{x}\right\Vert ^{2}\right]
+\left\Vert \bar{x}-g(y)\right\Vert ^{2}\geq \mathbb{E}_{x\sim \xi }\left[
\left\Vert x-\bar{x}\right\Vert ^{2}\right] .
\end{eqnarray*}%
Therefore, by the minimax theorem, the minimizer can guarantee that the
payoff is $\leq \delta _{0}^{2}$; i.e., there is a mixed strategy $\eta \in
\Delta (D)$ such that 
\begin{equation}
\mathbb{E}_{y\sim \eta }\left[ \left\Vert x-y\right\Vert ^{2}-\left\Vert
x-g(y)\right\Vert ^{2}\right] \leq \delta _{0}^{2}  \label{eq:delta-M}
\end{equation}%
for every $x\in D_{1}$. Now for every $x\in C$ there is $x^{\prime }\in
D_{1} $ with $\left\Vert x-x^{\prime }\right\Vert <\delta _{1}$, and so $%
\left\vert \left\Vert x-z\right\Vert ^{2}-\left\Vert x^{\prime
}-z\right\Vert ^{2}\right\vert <2\gamma \delta _{1}$ for any $z$ in $C;$
adding this inequality with $z=y$ and with $z=g(y)$ to (\ref{eq:delta-M})
yields, by the definition of $\delta _{1}$, the claimed result (\ref%
{eq:out-mm-fp}).

\emph{(c)} Theorem \ref{th:outgoing} (D) is equivalent to Brouwer's fixed
point theorem. Indeed, the former has been proved by using the latter (see
Foster and Hart 2021 or Remark (a) above); conversely, given a continuous
function $g:C\rightarrow C$, inequality (\ref{eq:out_fp-fp}) for the point $%
x=g(y)$ (which is in $C)$ yields $g(y)=y.$

\emph{(d)} Theorem \ref{th:outgoing} (S) is equivalent to von Neumann's
minimax theorem for finite games. Indeed, the former has been proved by
using the latter (see Foster and Hart 2021 or Remark (b) above); conversely,
we will show that the former yields Corollary 6 of Foster and Hart (2021),
from which the minimax theorem follows by Remark 4 of Corollary 6 in
Appendix A3.3 of Foster and Hart (2021). For this Corollary 6, let $%
f:C\rightarrow \mathbb{R}^{m}$ be a bounded function, say, $\left\Vert
f(x)\right\Vert \leq M$ for all $x\in C$, and let $\varepsilon >0$. Applying
(\ref{eq:out-mm-fp}) to the function $g(x)=x+\delta f(x)$ and a finite $%
\delta $-grid $D$ of $C$ yields a distribution $\eta \in \Delta (C)$ such
that $\mathbb{E}\left[ 2\delta f(y)\cdot (x-y)\right] \leq \mathbb{E}\left[
\delta ^{2}\left\Vert f(y)\right\Vert ^{2}\right] +\delta ^{2}\leq \delta
^{2}(M^{2}+1)$ (use the identity (\ref{eq:identity})), and thus, by choosing 
$\delta $ so that $\delta (M^{2}+1)/2\leq \varepsilon $, we get $\mathbb{E}%
\left[ f(y)\cdot (x-y)\right] \leq \varepsilon $, which is the result of
Corollary 6 of Foster and Hart (2021).

\emph{(e)} As in Remark 1 of Theorem 5 in Appendix A3.2 of Foster and Hart
(2021), one may use a limit argument (which, however, no longer yields the
distribution $\eta $ by a finite minimax construct) to replace $\delta $ in (%
\ref{eq:out-mm}), and thus in (\ref{eq:out-mm-fp}), with $\delta _{0}\equiv
\delta _{0}(D):=\max_{x\in C}\mathrm{dist}(x,D)<\delta $. Thus,%
\begin{eqnarray*}
&&\inf_{\eta \in \Delta (D)}\sup_{x\in C}\mathbb{E}_{y\sim \eta }\left[
\left\Vert x-y\right\Vert ^{2}-\left\Vert x-g(y)\right\Vert ^{2}\right] \\
&=&\sup_{\xi \in \Delta (C)}\inf_{y\in D}\mathbb{E}_{x\sim \xi }\left[
\left\Vert x-y\right\Vert ^{2}-\left\Vert x-g(y)\right\Vert ^{2}\right] \leq
\delta _{0}^{2}.
\end{eqnarray*}%
Moreover, the $\delta _{0}^{2}$ bound is tight (i.e., it cannot be replaced
by any smaller constant): let $x_{0}\in C$ be such that $\mathrm{dist}%
(x_{0},D)=\delta _{0};$ then, for the constant function $g\equiv x_{0}$
(i.e., $g(y)=x_{0}$ for all $y\in D$), we have $\left\Vert
x_{0}-y\right\Vert \geq \delta _{0}$ and $\left\Vert x_{0}-g(y)\right\Vert
=0 $ for all $y\in D.$

\bigskip

Finally, we point out that Theorem \ref{th:log-outgoing-MM} in Appendix \ref%
{s-a:log} provides a result for logarithmic scores that is parallel to
Theorem \ref{th:outgoing} (S).

\subsection{Refined Refinement\label{s-a:refined}}

In this appendix we prove formally that the refinement score is
monotonically decreasing with respect to refining the binning; this yields
in particular $\mathcal{R}_{t}^{\mathbf{b}^{1},...,\mathbf{b}^{N}}\leq 
\mathcal{R}_{t}^{\mathbf{b}^{n}}$ for each $n=1,....,N$ (Section \ref%
{s:multi-beat}) and also $\mathcal{R}_{t}^{\mathbf{b},\Pi }\leq \mathcal{R}%
_{t}^{\mathbf{b}}$ and $\mathcal{R}_{t}^{\mathbf{b},\Pi }\leq \mathcal{R}%
_{t}^{\Pi }$ (Appendix \ref{s:cont-calib}).

We consider general fractional binnings. Let $I$ be a finite or countably
infinite collection of bins, and consider a sequence $(z_{s})_{s\geq 1}$
(namely, $z_{s}=a_{s}-c_{s}$) such that at time $s$ the fraction $\lambda
_{s}(i)\geq 0$ of $z_{s}$ is assigned to bin $i$ for each $i\in I$, where $%
\sum_{i\in I}\lambda _{s}(i)=1$ (the specific way in which these weights are
determined will not matter). Fix the horizon $t\geq 1$ (we will thus drop
the subscript $t$); the refinement score is%
\begin{equation*}
\mathcal{R}=\frac{1}{t}\sum_{i\in I}\sum_{s=1}^{t}\lambda _{s}(i)(z_{s}-\bar{%
z}(i))^{2},
\end{equation*}%
where, for each $i$ in $I,$%
\begin{equation*}
\bar{z}(i)=\frac{\sum_{s=1}^{t}\lambda _{s}(i)z_{s}}{\sum_{s=1}^{t}\lambda
_{s}(i)}
\end{equation*}%
is the average of bin $i$ (when $\sum_{s\leq t}\lambda _{s}(i)>0$).

As in Section \ref{sus:scores}, let the two-dimensional random variable $%
(Z,U)$ take the value $(z_{s},i)$ with probability $\lambda _{s}(i)/t$ for
each $s=1,...,t$ and $i\in I$ (note that $\sum_{s\leq t}\sum_{i\in I}\lambda
_{s}(i)/t=1$); thus, $\mathbb{P}\left[ (Z,U)=(z,i)\right] =(1/t)\sum_{s\leq
t:z_{s}=z}\lambda _{s}(i)$, which is the average, over all periods $%
s=1,...,t $, of the probability that the value $z$ goes into bin $i$. We
then have 
\begin{eqnarray}
\mathbb{P}\left[ U=i\right] &=&\sum_{s=1}^{t}\frac{\lambda _{s}(i)}{t}, 
\notag \\
\mathbb{E}\left[ Z|U=i\right] &=&\frac{1}{\mathbb{P}\left[ U=i\right] }%
\sum_{s=1}^{t}\left( \frac{\lambda _{s}(i)}{t}\right) z_{s}\;\;=\;\;\bar{z}%
(i),  \notag \\
\mathbb{V}ar\left[ Z|U=i\right] &=&\frac{1}{\mathbb{P}\left[ U=i\right] }%
\sum_{s=1}^{t}\left( \frac{\lambda _{s}(i)}{t}\right) (z_{s}-\bar{z}(i))^{2},%
\text{and}  \notag \\
\mathbb{E}\left[ \mathbb{V}ar\left[ Z|U\right] \right] &=&\sum_{i\in I}%
\mathbb{P}\left[ U=i\right] \,\mathbb{V}ar\left[ Z|U=i\right] \;\;=\;\;%
\mathcal{R}.  \label{eq:EV=R}
\end{eqnarray}

Now assume that we are given another collection of bins $J$ together with
binning weights $\mu _{s}(j)\geq 0$, where $\sum_{j\in J}\mu _{s}(j)=1$ for
each $s$. The $J$-binning is a \emph{coarsening} of the $I$-binning
(equivalently, the $I$-binning is a \emph{refinement} of the\emph{\ }$J$%
-binning) if there is a function $\phi :I\rightarrow J$ such that $\mu
_{s}(j)=\sum_{i:\phi (i)=j}\lambda _{s}(i);$ that is, for each $j$ in $J$
the $j$-bin is the union of the set $\phi ^{-1}(j)=\{i\in I:\phi (i)=j\}$ of 
$i$-bins in $I$. Letting $U_{I}$ and $U_{J}$ be the random variables
corresponding to the $I$-binning and the $J$-binning, respectively, we have $%
U_{J}=\phi (U_{I})$, because being assigned to an $i$-bin for $i\in I$
translates to being assigned to the $j$-bin for $j=\phi (i)\in J$. Let $%
\mathcal{R}_{I}$ and $\mathcal{R}_{J}$ be the refinement scores
corresponding to the $I$-binning and the $J$-binning, respectively.

\begin{proposition}
\label{p:refine-R}If the $J$-binning is a coarsening of the $I$-binning then%
\begin{equation*}
\mathcal{R}_{J}=\mathbb{E}\left[ \mathbb{V}ar\left[ Z|U_{J}\right] \right]
\geq \mathbb{E}\left[ \mathbb{V}ar\left[ Z|U_{I}\right] \right] =\mathcal{R}%
_{I}.
\end{equation*}
\end{proposition}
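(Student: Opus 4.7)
The plan is to deduce this monotonicity from the conditional law of total variance, using the representation $\mathcal{R}_I = \mathbb{E}[\mathbb{V}ar[Z|U_I]]$ and $\mathcal{R}_J = \mathbb{E}[\mathbb{V}ar[Z|U_J]]$ already established in (\ref{eq:EV=R}). The key observation is that, since $\mu_s(j) = \sum_{i:\phi(i)=j}\lambda_s(i)$, the random variable $U_J$ equals $\phi(U_I)$, and in particular $U_J$ is $\sigma(U_I)$-measurable. This is the one structural fact doing all the work.

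First I would treat the scalar case (each coordinate of $Z$ separately). Conditioning $Z$ on the finer information $U_I$ and then on the coarser information $U_J$, the conditional law of total variance gives
\begin{equation*}
\mathbb{V}ar[Z \mid U_J] \;=\; \mathbb{E}\bigl[\mathbb{V}ar[Z \mid U_I] \,\big|\, U_J\bigr] \;+\; \mathbb{V}ar\bigl[\mathbb{E}[Z \mid U_I] \,\big|\, U_J\bigr].
\end{equation*}
Taking overall expectation and using the tower property yields
\begin{equation*}
\mathbb{E}\bigl[\mathbb{V}ar[Z \mid U_J]\bigr] \;=\; \mathbb{E}\bigl[\mathbb{V}ar[Z \mid U_I]\bigr] \;+\; \mathbb{E}\Bigl[\mathbb{V}ar\bigl[\mathbb{E}[Z \mid U_I] \,\big|\, U_J\bigr]\Bigr],
\end{equation*}
and the last term is nonnegative, giving the desired inequality. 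For the vector-valued $Z$ one applies this coordinatewise (with $\mathbb{V}ar$ interpreted as the sum of coordinate variances, as in Section \ref{sus:scores}) and sums the $m$ inequalities.

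For readers who prefer an entirely elementary argument, I would alternatively expand each $j$-bin as a disjoint union of its constituent $i$-bins $\{i : \phi(i)=j\}$ and invoke the basic identity that, for any partition of a finite collection of weighted points, the overall (weighted) variance equals the weighted average of the within-group variances plus the variance of the group means; dropping the second nonnegative term yields the bound bin by bin, and averaging over $j\in J$ with weights $\mathbb{P}[U_J=j]$ yields $\mathcal{R}_J \geq \mathcal{R}_I$. No step looks genuinely hard; the only thing to be careful about is the bookkeeping in the fractional-binning setting (terms with $\mathbb{P}[U_I=i]=0$ contribute nothing and may be discarded), and the promotion from scalar to vector variance, which is immediate since both $\mathcal{R}_I$ and $\mathcal{R}_J$ are defined as sums of coordinate variances.
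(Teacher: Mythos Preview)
Your proof is correct and essentially the same as the paper's: both rest on the fact that for $\sigma(U_J)\subseteq\sigma(U_I)$ one has $\mathbb{E}[\mathbb{V}ar[Z|U_I]\mid U_J]\leq \mathbb{V}ar[Z|U_J]$, then take overall expectation. The only cosmetic differences are that you invoke the conditional law of total variance as a named identity and work coordinatewise, whereas the paper derives the same inequality directly from the minimizing property $\mathbb{V}ar[X]\leq\mathbb{E}[\|X-x\|^2]$ applied to $Z|\mathcal{F}_2$ with $x=\mathbb{E}[Z|\mathcal{F}_1]$, handling the vector case in one stroke via $\|\cdot\|^2$.
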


\begin{proof}
Let $\mathcal{F}_{1},\mathcal{F}_{2}$ be two $\sigma $-fields such that $%
\mathcal{F}_{1}\subseteq \mathcal{F}_{2}$, i.e., $\mathcal{F}_{1}$ is a
coarsening of $\mathcal{F}_{2}$, and let $Z$ be a random variable. We will
show that%
\begin{equation}
\mathbb{E}\left[ \mathbb{V}ar\left[ Z|\mathcal{F}_{1}\right] \right] \geq 
\mathbb{E}\left[ \mathbb{V}ar\left[ Z|\mathcal{F}_{2}\right] \right] ,
\label{eq:E[V]}
\end{equation}%
which yields the result by (\ref{eq:EV=R}).

Applying the classic inequality $\mathbb{V}ar\left[ X\right] =\mathbb{E}%
\left[ \left\Vert X-\mathbb{E}\left[ X\right] \right\Vert ^{2}\right] \leq 
\mathbb{E}\left[ \left\Vert X-x\right\Vert ^{2}\right] $ for any random
variable $X$ and any constant $x$ (i.e., the expected square deviation from
a constant is minimized when the constant equals the expectation) to $Z|%
\mathcal{F}_{2}$ we get (a.s.) 
\begin{equation*}
\mathbb{V}ar\left[ Z|\mathcal{F}_{2}\right] \leq \mathbb{E}\left[ \left\Vert
Z-\mathbb{E}\left[ Z|\mathcal{F}_{1}\right] \right\Vert ^{2}|\mathcal{F}_{2}%
\right] ,
\end{equation*}%
because $\mathbb{E}\left[ Z|\mathcal{F}_{1}\right] $ is constant given $%
\mathcal{F}_{2}$ (since $\mathcal{F}_{1}$ is a coarsening of $\mathcal{F}%
_{2} $). Taking expectation conditional on $\mathcal{F}_{1}$ yields on the
right-hand side $\mathbb{E}\left[ \left\Vert Z-\mathbb{E}\left[ Z|\mathcal{F}%
_{1}\right] \right\Vert ^{2}|\mathcal{F}_{1}\right] $ (again, by $\mathcal{F}%
_{1}\subseteq \mathcal{F}_{2}$), which is the conditional variance $\mathbb{V%
}ar\left[ Z|\mathcal{F}_{1}\right] $, and so we have (a.s.)%
\begin{equation*}
\mathbb{E}\left[ \mathbb{V}ar\left[ Z|\mathcal{F}_{2}\right] |\mathcal{F}_{1}%
\right] \leq \mathbb{V}ar\left[ Z|\mathcal{F}_{1}\right] .
\end{equation*}%
Taking overall expectation yields (\ref{eq:E[V]}), and thus completes the
proof.
\end{proof}

\bigskip

Applying Proposition \ref{p:refine-R} with $\phi $ being a projection, such
as $\phi (b^{1},...,b^{N})=b^{n}$, yields the needed inequalities.

\subsection{General Brier Score Decomposition\label{s-a:decompose}}

We show here that the decomposition of the Brier score (\ref{eq:B=R+K}) into
the refinement and calibration scores holds for any fractional binning $\Pi
=(w_{i})_{i=1}^{I}$, i.e.,%
\begin{equation*}
\mathcal{B}_{t}=\mathcal{R}_{t}^{\Pi }+\mathcal{K}_{t}^{\Pi }.
\end{equation*}%
Indeed, in the notation of the previous Section \ref{s-a:refined}, this is%
\begin{equation*}
\mathbb{E}\left[ \left\Vert Z\right\Vert ^{2}\right] =\mathbb{E}\left[ 
\mathbb{E}[\left\Vert Z\right\Vert ^{2}|U]\right] =\mathbb{E}\left[ \mathbb{V%
}ar\left[ Z|U\right] \right] +\mathbb{E}\left[ \left\Vert \mathbb{E}\left[
Z|U\right] \right\Vert ^{2}\right] ,
\end{equation*}%
which follows from applying the identity $\mathbb{E}\left[ X^{2}\right] =%
\mathbb{V}ar\left[ X\right] +\mathbb{E}\left[ X\right] ^{2}$ to each one of
the $m$ coordinates of $Z|U$, summing up, and then taking overall
expectation.

\subsection{Calibeating by a Deterministic Continuously Calibrated Forecast 
\label{s:cont-calib}}

In this appendix we prove Theorem \ref{th:cont} in Section \ref%
{s:beat-by-calibrated}: one can guarantee calibeating by a \emph{%
deterministic} procedure that is \emph{continuously calibrated}, a useful
weakening of calibration (see Foster and Hart 2021).

We start by recalling the definition of continuous calibration. A \emph{%
(fractional) binning} is a collection $\Pi =(w_{i})_{i\in I}$ of weight
functions $w_{i}:C\rightarrow \lbrack 0,1]$ for $i\in I$ such that $%
\sum_{i\in I}w_{i}(c)=1$ for all $c\in C$, where $I$ is a finite or
countably infinite set; the binning $\Pi $ is \emph{continuous} if all the $%
w_{i}$ are continuous functions on $C$. The interpretation is that at each
period $s$ the fraction $w_{i}(c_{s})$ of $z_{s}=a_{s}-c_{s}$ is assigned to
each bin $i$ in $I$. A deterministic forecasting procedure $\sigma $ is 
\emph{continuously calibrated} if%
\begin{equation}
\lim_{t\rightarrow \infty }\left( \sup_{\mathbf{a}_{t}}\mathcal{K}_{t}^{\Pi
}\right) =0  \label{eq:cont-calib}
\end{equation}%
for every continuous binning $\Pi $, where the $\Pi $\emph{-calibration}
score $\mathcal{K}_{t}^{\Pi }$ is\footnote{%
A more precise, but cumbersome, notation would be $\mathcal{K}^{\Pi (\mathbf{%
c})},$ since at each time $t$ the binning is given by $\Pi (c_{t}\mathbf{)}%
=(w_{i}(c_{t}))_{i\in I}.$}%
\begin{equation*}
\mathcal{K}_{t}^{\Pi }:=\sum_{i\in I}\left( \frac{n_{t}^{i}}{t}\right)
\left\Vert \bar{a}_{t}^{i}-\bar{c}_{t}^{i}\right\Vert ^{2},
\end{equation*}%
where, for each bin $i\in I,$ we let $n_{t}^{i}%
{\;:=\;}%
\sum_{s=1}^{t}w_{i}(c_{s})$ be the total weight of bin $i$, and $\bar{a}%
_{t}^{i}%
{\;:=\;}%
\sum_{s=1}^{t}(w_{i}(c_{s})/n_{t}^{i})a_{s}$ and $\bar{c}_{t}^{i}%
{\;:=\;}%
\sum_{s=1}^{t}(w_{i}(c_{s})/n_{t}^{i})c_{s}$ be the average action and
forecast there. Proposition 3 in Foster and Hart (2021) shows that it
suffices to require (\ref{eq:cont-calib}) for one specific continuous
binning $\Pi _{0}$; i.e., $\sigma $ is continuously calibrated if and only
if (\ref{eq:cont-calib}) holds for $\Pi =\Pi _{0}.$

To avoid confusion,\footnote{%
This confusion led to an error in previous versions of the paper; see the
errata Foster and Hart (2026).} when we deal with the variance of the
differences $a_{t}-c_{t}$ rather than the variance of the actions $a_{t}$,
we will from now on add a superscript $\#$ (see $v_{t}^{\#}$ and $\mathcal{R}%
_{t}^{\#}$ below).

Let $B$ be an arbitrary finite set\footnote{%
One may easily generalize to fractional $B$ binnings; also, $B$ could be
infinite when the binning is continuous (or, more generally, when the
binning is uniformly approximable by finite fractional binnings, as in (9)
in Foster and Hart 2021).} and let $\Pi =(w_{i})_{i\in I}$ be a fractional
binning. Consider the joint fractional binning with bins $U:=B\times I$,
where at each time $t$ the fractions $w_{i}(c_{t})$ of $a_{t}-c_{t}$ are
assigned to bins $(b_{t},i)$ for all $i\in I;$ that is, each bin $(b,i)\in
B\times I$ is assigned the fraction%
\begin{equation*}
\lambda _{t}(b,i):=\mathbf{1}_{b}(b_{t})w_{i}(c_{t}),
\end{equation*}%
where $\mathbf{1}_{x}$ stands for the $x$-indicator function (i.e., $\mathbf{%
1}_{x}(y)=1$ for $y=x$ and $\mathbf{1}_{x}(y)=0$ for $y\neq x$). Consider
bin $(b,i)$ at time $t;$ its weight, averages, and variance are,
respectively,%
\begin{eqnarray*}
n_{t}(b,i) &%
{\;:=\;}%
&\sum_{s=1}^{t}\lambda _{s}(b,i), \\
\bar{a}_{t}(b,i) &%
{\;:=\;}%
&\sum_{s=1}^{t}\left( \frac{\lambda _{s}(b,i)}{n_{t}(b,i)}\right) a_{s}, \\
\bar{c}_{t}(b,i) &%
{\;:=\;}%
&\sum_{s=1}^{t}\left( \frac{\lambda _{s}(b,i)}{n_{t}(b,i)}\right) c_{s},%
\text{\ \ \ and} \\
v_{t}^{\#}(b,i) &%
{\;:=\;}%
&\sum_{s=1}^{t}\left( \frac{\lambda _{s}(b,i)}{n_{t}(b,i)}\right) \left\Vert
a_{s}-c_{s}-\left( \bar{a}_{t}(b,i)-\bar{c}_{t}(b,i)\right) \right\Vert ^{2}.
\end{eqnarray*}%
The calibration and refinement scores are then 
\begin{eqnarray*}
\mathcal{K}_{t}^{\mathbf{b,}\Pi } &%
{\;:=\;}%
&\sum_{(b,i)\in B\times I}\left( \frac{n_{t}(b,i)}{t}\right) \left\Vert \bar{%
a}_{t}(b,i)-\bar{c}_{t}(b,i)\right\Vert ^{2}\;\text{\ \ and} \\
\mathcal{R}_{t}^{\#;\mathbf{b,}\Pi } &%
{\;:=\;}%
&\sum_{(b,i)\in B\times I}\left( \frac{n_{t}(b,i)}{t}\right) v_{t}^{\#}(b,i).
\end{eqnarray*}

Note that the result of Appendix \ref{s-a:refined}, namely, the refinement
score being monotonically decreasing with respect to refining the binning,
is stated for an arbitrary sequence $z_{t}$, and so holds for the two kinds
of refinement scores; thus, for instance, $\mathcal{R}_{t}^{\#;\mathbf{b,}%
\Pi }\leq \mathcal{R}_{t}^{\#;\Pi }$ and $\mathcal{R}_{t}^{\mathbf{b,}\Pi
}\leq \mathcal{R}_{t}^{\mathbf{b}}$. The general Brier score decomposition
of Appendix \ref{s-a:decompose} for a fractional binning $\Pi $ is now
written as $\mathcal{B}_{t}^{\mathbf{c}}=\mathcal{R}_{t}^{\#;\Pi }+\mathcal{K%
}_{t}^{\Pi }$.

We now state a more detailed version of Theorem \ref{th:cont} (see Section %
\ref{s:beat-by-calibrated}) on calibeating by a \emph{deterministic}
continuously calibrated procedure.

\begin{theorem}
\label{th:cont-calib}Let $B$ be a finite set. Then there exists a \emph{%
deterministic} $\mathbf{b}$-based forecasting procedure $\zeta $ that is $B$%
-calibeating and is continuously calibrated. Specifically: first, for every
continuous binning $\Pi $ there is a deterministic $\mathbf{b}$-based
forecasting procedure $\zeta \equiv \zeta _{\Pi }$ such that%
\begin{equation}
\mathcal{K}_{t}^{\mathbf{b,}\Pi }=\mathcal{B}_{t}^{\mathbf{c}}-\mathcal{R}%
_{t}^{\#;\mathbf{b,}\Pi }\leq o(1);  \label{eq:cont-calibeat}
\end{equation}%
and second, there is a continuous binning $\Pi ^{\ast }$ for which (\ref%
{eq:cont-calibeat}) implies that the corresponding procedure $\zeta _{\Pi
^{\ast }}$ is $B$-calibeating, i.e., 
\begin{equation*}
\mathcal{B}_{t}^{\mathbf{c}}\leq \mathcal{R}_{t}^{\mathbf{b}}+o(1),
\end{equation*}%
and is continuously calibrated. All the above inequalities hold as $%
t\rightarrow \infty $ uniformly over all sequences $\mathbf{a}$ and $\mathbf{%
b}.$
\end{theorem}

To prove this we use the corresponding \emph{online refinement} score $%
\widetilde{\mathcal{R}}_{t}^{\#;\mathbf{b},\Pi }$, in which the offline
averages $\bar{a}_{t}-\bar{c}_{t}$ are replaced with the online averages $%
\bar{a}_{s-1}-\bar{c}_{s-1}$; namely, 
\begin{equation*}
\widetilde{\mathcal{R}}_{t}^{\#;\mathbf{b},\Pi }%
{\;:=\;}%
\sum_{(b,i)\in B\times I}\sum_{s=1}^{t}\left( \frac{\lambda _{s}(b,i)}{t}%
\right) \left\Vert a_{s}-c_{s}-(\bar{a}_{s-1}(b,i)-\bar{c}%
_{s-1}(b,i))\right\Vert ^{2}.
\end{equation*}%
The parallel result to Proposition \ref{p:online-R} is

\begin{proposition}
\label{p:Rtilde-R-cont}For every finite set $B$ and every continuous binning 
$\Pi =(w_{i})_{i=1}^{I}$ on $C$, as $t\rightarrow \infty $ we have%
\begin{equation*}
0\leq \widetilde{\mathcal{R}}_{t}^{\#;\mathbf{b},\Pi }-\mathcal{R}_{t}^{\#;%
\mathbf{b},\Pi }\leq o(1)
\end{equation*}%
uniformly over all sequences $\mathbf{a}$, $\mathbf{b}$, and $\mathbf{c.}$
\end{proposition}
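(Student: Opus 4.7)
The plan is to mirror the strategy behind Proposition \ref{p:online-R}, replacing the unweighted variance update of Proposition \ref{p:var} by a weighted one. The first step is to establish that for nonnegative weights $\mu_s$ and vectors $x_s$, with $M_n := \sum_{s=1}^n \mu_s$ and $\bar{x}_n := M_n^{-1}\sum_{s=1}^n \mu_s x_s$, one has
\[
\sum_{s=1}^n \mu_s \|x_s - \bar{x}_n\|^2 \;=\; \sum_{s=1}^n \mu_s\Bigl(1 - \frac{\mu_s}{M_s}\Bigr)\|x_s - \bar{x}_{s-1}\|^2.
\]
This follows by the same telescoping calculation as in Proposition \ref{p:var}: reduce to $\bar{x}_{n-1}=0$, note that $\bar{x}_n = (\mu_n/M_n)x_n$, and expand.

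Next, I would apply this identity bin-by-bin, taking $\mu_s = \lambda_s(b,i) = \mathbf{1}_b(b_s)w_i(c_s)$, $x_s = z_s := a_s - c_s$, so that $M_s = n_s(b,i)$ and $\bar{x}_s = e_s(b,i)$. Subtracting the two sides and averaging over $(b,i)\in B\times I$ gives
\[
\widetilde{\mathcal{R}}_t^{\mathbf{b},\Pi} - \mathcal{R}_t^{\mathbf{b},\Pi} \;=\; \frac{1}{t}\sum_{(b,i)\in B\times I}\sum_{s=1}^t \frac{\lambda_s(b,i)^2}{n_s(b,i)}\,\|z_s - e_{s-1}(b,i)\|^2,
\]
with the convention that terms where $n_s(b,i)=0$ vanish. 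The right-hand side is manifestly nonnegative, which already yields $\widetilde{\mathcal{R}}_t^{\mathbf{b},\Pi}\ge \mathcal{R}_t^{\mathbf{b},\Pi}$. Using $\|z_s-e_{s-1}(b,i)\|\le 2\gamma$ (both lie in $C-C$, whose diameter is at most $2\gamma$), the remaining task is to bound, for each fixed bin, the scalar $S_t(b,i) := \sum_{s=1}^t \lambda_s(b,i)^2/n_s(b,i)$.

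The main obstacle is showing that $S_t(b,i)\le 2 + \ln t$ uniformly in all sequences, since the crude bound $\lambda_s^2/n_s\le \lambda_s$ only gives $O(t)$ per bin. Because $\lambda_s(b,i)\in[0,1]$, I would let $s^*$ be the first period with $n_{s^*}(b,i)\ge 1$: for $s\le s^*$ the estimate $\lambda_s^2/n_s\le \lambda_s$ bounds the partial sum by $n_{s^*}<2$, and for $s>s^*$ the elementary inequality $1-y\le \ln(1/y)$ applied to $y=n_{s-1}/n_s\in(0,1]$ yields $\lambda_s/n_s\le \ln(n_s/n_{s-1})$, hence $\lambda_s^2/n_s\le \ln(n_s/n_{s-1})$, and the tail telescopes to $\ln(n_t/n_{s^*})\le \ln t$. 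Extracting a logarithm in this way (in particular the split at $s^*$ to absorb possibly tiny initial weights) is the delicate step. Since $B$ and $I$ are both finite, summing over $(b,i)$ finally gives
\[
\widetilde{\mathcal{R}}_t^{\mathbf{b},\Pi} - \mathcal{R}_t^{\mathbf{b},\Pi}\;\le\;\frac{4\gamma^2\,|B|\,|I|\,(2+\ln t)}{t}\;=\;o(1),
\]
uniformly in $\mathbf{a}$, $\mathbf{b}$, and $\mathbf{c}$, as claimed.
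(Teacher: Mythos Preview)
Your per-bin analysis is essentially the paper's: the weighted variance identity you state is exactly Proposition~\ref{p:var-lambda}, and your bound $S_t(b,i)\le 2+\ln t$ is the content of inequality~(\ref{eq:v-tilda-lambda}). So far so good.

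The gap is your final sentence: you assert that ``$B$ and $I$ are both finite,'' but a continuous binning $\Pi=(w_i)_{i\in I}$ is allowed to have $I$ countably infinite (see the definition just before the proposition), and indeed the binning $\Pi_0$ used in Theorem~\ref{th:cont-calib} need not be finite. With infinitely many bins, summing the per-bin bound $4\gamma^2(2+\ln t)$ over all $(b,i)$ diverges, so your argument breaks down. Notice also that your proof never invokes the \emph{continuity} of the $w_i$, which should signal that something is missing: continuity is precisely what rescues the infinite case. The paper's proof splits $I$ into a finite $J$ and its complement; on $J$ your bound applies verbatim, while for $I\setminus J$ one uses Dini's theorem (monotone convergence of continuous functions on the compact $C$) to choose $J$ so that $\sum_{i\notin J}w_i(c)\le\varepsilon$ uniformly in $c$, whence the total tail weight is at most $\varepsilon$ per period and the tail contribution to $\widetilde{\mathcal{R}}_t^{\mathbf{b},\Pi}-\mathcal{R}_t^{\mathbf{b},\Pi}$ is at most $4\gamma^2\varepsilon$. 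Combining and letting $\varepsilon\to 0$ gives the $o(1)$.
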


The proof is an adaptation of the proof of Proposition \ref{p:online-R} to
fractional binnings. We start by generalizing Proposition \ref{p:var} to
weighted variances. Let $(x_{n})_{n\geq 1}$ be a sequence of vectors in a
Euclidean space (or, more generally, in a normed vector space), let $%
(\lambda _{n})_{n\geq 1}$ be a sequence of weights in $[0,1]$, and let $%
\Lambda _{n}:=\sum_{i=1}^{n}\lambda _{i}$. Let $\bar{x}_{n}:=\sum_{i=1}^{n}(%
\lambda _{i}/\Lambda _{n})x_{i}$ denote the weighted average of $%
x_{1},...,x_{n}$ (when $\Lambda _{n}=0$, and hence $\lambda _{i}=0$ for all $%
i=1,...,n$, let $\lambda _{i}/\Lambda _{n}=0/0=0$).

\begin{proposition}
\label{p:var-lambda}For every $n\geq 1$ we have\footnote{%
The sum on the right-hand side of (\ref{eq:p-var-lambda}) effectively starts
from $i=2,$ and so, as in (\ref{eq:p-var}), it does not matter how $\bar{x}%
_{0}$ is defined.} 
\begin{equation}
\sum_{i=1}^{n}\lambda _{i}\left\Vert x_{i}-\bar{x}_{n}\right\Vert
^{2}=\sum_{i=1}^{n}\lambda _{i}\left( 1-\frac{\lambda _{i}}{\Lambda _{i}}%
\right) \left\Vert x_{i}-\bar{x}_{i-1}\right\Vert ^{2}.
\label{eq:p-var-lambda}
\end{equation}
\end{proposition}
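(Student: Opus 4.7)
The plan is to mimic the proof of Proposition \ref{p:var}, establishing the recursion
\[
s_n \;=\; s_{n-1} + \lambda_n\!\left(1-\frac{\lambda_n}{\Lambda_n}\right)\!\left\Vert x_n-\bar{x}_{n-1}\right\Vert^{2},
\qquad\text{where }s_n:=\sum_{i=1}^n\lambda_i\Vert x_i-\bar{x}_n\Vert^2,
\]
and then unwinding it recursively, with $s_0=0$, to obtain \eqref{eq:p-var-lambda}.

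First I would record the weighted analogues of the basic identities: $\Lambda_n\bar{x}_n=\Lambda_{n-1}\bar{x}_{n-1}+\lambda_nx_n$, so $\bar{x}_n-\bar{x}_{n-1}=(\lambda_n/\Lambda_n)(x_n-\bar{x}_{n-1})$, and the ``variance shortcut'' $s_n=\sum_{i=1}^n\lambda_i\Vert x_i\Vert^2-\Lambda_n\Vert\bar{x}_n\Vert^2$ (immediate from expanding $\sum_i\lambda_i\Vert x_i-\bar{x}_n\Vert^2$ and using $\sum_i\lambda_ix_i=\Lambda_n\bar{x}_n$). As in the unweighted proof, assume w.l.o.g.\ $\bar{x}_{n-1}=0$ — shifting all $x_i$ by the same constant leaves every term in \eqref{eq:p-var-lambda} unchanged — so that $\bar{x}_n=(\lambda_n/\Lambda_n)x_n$.

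Then a direct computation gives
\[
s_n-s_{n-1}
\;=\;\Bigl(\sum_{i=1}^n\lambda_i\Vert x_i\Vert^2-\Lambda_n\bigl\Vert\tfrac{\lambda_n}{\Lambda_n}x_n\bigr\Vert^2\Bigr)-\sum_{i=1}^{n-1}\lambda_i\Vert x_i\Vert^2
\;=\;\lambda_n\Vert x_n\Vert^2-\frac{\lambda_n^{2}}{\Lambda_n}\Vert x_n\Vert^2,
\]
which equals $\lambda_n(1-\lambda_n/\Lambda_n)\Vert x_n-\bar{x}_{n-1}\Vert^2$ since $\bar{x}_{n-1}=0$. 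Applying this recursion from $n$ down to $0$ telescopes to the right-hand side of \eqref{eq:p-var-lambda}.

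The only subtlety is the degenerate cases. If $\Lambda_n=0$ (so $\lambda_i=0$ for all $i\le n$), both sides are zero. If $\Lambda_{n-1}=0$ but $\lambda_n>0$, then $\bar{x}_{n-1}$ is arbitrary but irrelevant, since the coefficient $\lambda_n(1-\lambda_n/\Lambda_n)=\lambda_n(1-1)=0$ annihilates $\Vert x_n-\bar{x}_{n-1}\Vert^2$; meanwhile $\bar{x}_n=x_n$ and $s_n=0$, so the recursion still holds. This also confirms the footnote that the right-hand sum effectively starts at $i=2$: the $i=1$ term always carries the factor $1-\lambda_1/\Lambda_1=0$ (or $\lambda_1=0$). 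No real obstacle is anticipated; the argument is essentially bookkeeping around the weighted update formula for $\bar{x}_n$.
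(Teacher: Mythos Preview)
Your proposal is correct and follows essentially the same approach as the paper's proof: establish the one-step recursion $s_n=s_{n-1}+\lambda_n(1-\lambda_n/\Lambda_n)\Vert x_n-\bar{x}_{n-1}\Vert^2$ via the normalization $\bar{x}_{n-1}=0$ and the identity $s_n=\sum_i\lambda_i\Vert x_i\Vert^2-\Lambda_n\Vert\bar{x}_n\Vert^2$, then telescope. Your treatment of the degenerate cases $\Lambda_n=0$ and $\Lambda_{n-1}=0<\lambda_n$ is slightly more explicit than the paper's, which simply assumes $\Lambda_n>0$ and notes both sides vanish otherwise.
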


\begin{proof}
Let $s_{n}:=\sum_{i=1}^{n}\lambda _{i}\left\Vert x_{i}-\bar{x}%
_{n}\right\Vert ^{2}$; we claim that%
\begin{equation}
s_{n}=s_{n-1}+\lambda _{n}\left( 1-\frac{\lambda _{n}}{\Lambda _{n}}\right)
\left\Vert x_{n}-\bar{x}_{n-1}\right\Vert ^{2}.  \label{eq:diff-var-lambda}
\end{equation}%
Indeed, assume that $\Lambda _{n}>0$ (otherwise both sides vanish) and $\bar{%
x}_{n-1}=0$ (without loss of generality, since subtracting a constant from
all the $x_{i}$ does not affect any term); then $\bar{x}_{n}=(\lambda
_{n}/\Lambda _{n})x_{n}$, and so, using $s_{n}=\sum_{i=1}^{n}\lambda
_{i}||x_{i}||^{2}-\Lambda _{n}||\bar{x}_{n}||^{2}$, we get%
\begin{equation*}
s_{n}-s_{n-1}=\left( \sum_{i=1}^{n}\lambda _{i}\left\Vert x_{i}\right\Vert
^{2}-\Lambda _{n}\left\Vert \frac{\lambda _{n}}{\Lambda _{n}}%
x_{n}\right\Vert ^{2}\right) -\sum_{i=1}^{n-1}\lambda _{i}\left\Vert
x_{i}\right\Vert ^{2}=\lambda _{n}\left\Vert x_{n}\right\Vert ^{2}-\frac{%
\lambda _{n}^{2}}{\Lambda _{n}}\left\Vert x_{n}\right\Vert ^{2},
\end{equation*}%
which is precisely $\lambda _{n}\left( 1-\lambda _{n}/\Lambda _{n}\right)
\left\Vert x_{n}-\bar{x}_{n-1}\right\Vert ^{2}$.

Applying (\ref{eq:diff-var-lambda}) recursively yields the result.
\end{proof}

\bigskip

Let $v_{n}:=(1/\Lambda _{n})\sum_{i=1}^{n}\lambda _{i}\left\Vert x_{i}-\bar{x%
}_{n}\right\Vert ^{2}$ denote the weighted variance of $x_{1},...,x_{n}$,
and let $\widetilde{v}_{n}:=(1/\Lambda _{n})\sum_{i=1}^{n}\lambda
_{i}\left\Vert x_{i}-\bar{x}_{i-1}\right\Vert ^{2}$ be the corresponding 
\emph{online weighted variance} of $x_{1},...,x_{n}$ (again, take $\bar{x}%
_{0}$ to be an arbitrary element of the convex hull of the $x_{i}$).
Proposition \ref{p:var-lambda} gives $\widetilde{v}_{n}-v_{n}=(1/\Lambda
_{n})\sum_{i=1}^{n}(\lambda _{i}^{2}/\Lambda _{i})\left\Vert x_{i}-\bar{x}%
_{i-1}\right\Vert ^{2}$, and so, by inequality (22) in Foster and Hart
(2021),%
\begin{equation}
0\leq \widetilde{v}_{n}-v_{n}\leq \xi ^{2}\frac{1}{\Lambda _{n}}%
\sum_{i=1}^{n}\frac{\lambda _{i}^{2}}{\Lambda _{i}}\leq \xi ^{2}\frac{\ln
\Lambda _{n}+2}{\Lambda _{n}},  \label{eq:v-tilda-lambda}
\end{equation}%
where $\xi :=\max_{1\leq i,j\leq n}\left\Vert x_{i}-x_{j}\right\Vert .$

\bigskip

We now prove Proposition \ref{p:Rtilde-R-cont}, which shows that the online
refinement score $\widetilde{\mathcal{R}}_{t}^{\#;\mathbf{b},\Pi }$ is close
to the (offline) refinement score.

\bigskip

\begin{proof}[Proof of Proposition \protect\ref{p:Rtilde-R-cont}]
We have $\widetilde{\mathcal{R}}_{t}^{\#;\mathbf{b},\Pi }-\mathcal{R}%
_{t}^{\#;\mathbf{b},\Pi }=(1/t)\sum_{b\in B}\sum_{i\in I}\mu _{t}(b,i)$,
where%
\begin{equation*}
\mu _{t}(b,i):=\sum_{s=1}^{t}\lambda _{s}(b,i)\left\Vert
a_{s}-c_{s}-e_{s-1}(b,i)\right\Vert ^{2}-\sum_{s=1}^{t}\lambda
_{s}(b,i)\left\Vert a_{s}-c_{s}-e_{t}(b,i)\right\Vert ^{2}
\end{equation*}%
with $e_{t}(b,i):=\bar{a}_{t}(b,i)-\bar{c}_{t}(b,i)$, for each $(b,i)\in
B\times I$. Proposition \ref{p:var-lambda}, specifically, (\ref%
{eq:v-tilda-lambda}), yields%
\begin{equation}
0\leq \mu _{t}(b,i)\leq 4\gamma ^{2}(\ln n_{t}(b,i)+2)\leq 4\gamma ^{2}(\ln
t+2)  \label{eq:i<k}
\end{equation}%
(because $\left\Vert a-c-e\right\Vert \leq 2\gamma $---since $\left\Vert
a-c\right\Vert \leq \gamma $ and so $\left\Vert e\right\Vert \leq \gamma $%
---and $n_{t}(b,i)\leq t$). For each finite $J\subseteq I$, summing over all 
$(b,i)$ in $B\times J$ yields%
\begin{equation}
0\leq \frac{1}{t}\sum_{b\in B}\sum_{i\in J}\mu _{t}(b,i)\leq 4\gamma
^{2}|B|\,\left\vert J\right\vert \,\frac{\ln t+2}{t}.  \label{eq:J}
\end{equation}

When $I$ is finite we are thus done. When $I$ is infinite, for every $%
\varepsilon >0$ there is a finite $J\subset I$ such that $\sum_{i\in
I\backslash J}w_{i}(c)\leq \varepsilon $ for all $c\in C;$ such a finite $J$
exists by Dini's theorem (see (9) in Foster and Hart 2021). For $i\in
I\backslash J$ we get%
\begin{eqnarray*}
0 &\leq &\frac{1}{t}\sum_{b\in B}\sum_{i\in I\backslash J}\mu _{t}(b,i)\leq 
\frac{1}{t}\sum_{b\in B}\sum_{i\in I\backslash J}\sum_{s=1}^{t}\lambda
_{s}(b,i)\left\Vert a_{s}-c_{s}-e_{s-1}(b,i)\right\Vert ^{2} \\
&\leq &4\gamma ^{2}\frac{1}{t}\sum_{s=1}^{t}\sum_{i\in I\backslash J}\lambda
_{s}(b_{s},i)\leq 4\gamma ^{2}\frac{1}{t}\sum_{s=1}^{t}\varepsilon =4\gamma
^{2}\varepsilon .
\end{eqnarray*}%
Adding this to (\ref{eq:J}) yields%
\begin{equation*}
0\leq \widetilde{\mathcal{R}}_{t}^{\#;\mathbf{b},\Pi }-\mathcal{R}_{t}^{\#;%
\mathbf{b},\Pi }\leq 4\gamma ^{2}|B|\,\left\vert J\right\vert \,\frac{\ln t+2%
}{t}+4\gamma ^{2}\varepsilon ,
\end{equation*}%
which is less than, say, $5\gamma ^{2}\varepsilon $ for all large enough $t$%
. The result follows since $\varepsilon $ was arbitrary; moreover, all the
above inequalities are uniform over all sequences $\mathbf{a}$, $\mathbf{b}$%
, and $\mathbf{c}$.
\end{proof}

\bigskip 

Next, we need to show that the two kinds of refinements---$\mathcal{R}^{\#}$%
, the average within-bin variance of $a_{t}-c_{t}$, and $\mathcal{R}$, the
average within-bin variance of $a_{t}$---are close when the binning is
\textquotedblleft local," in the sense that the forecasts within each bin
are almost constant.

A weight function $w:C\rightarrow \lbrack 0,1]$ is $\delta $\emph{-local}
(for $\delta >0$) if all points with positive weight (the \emph{support} of $%
w$) lie in an open ball of radius $\delta $; i.e., $\{c\in
C:w(c)>0\}\subseteq B(y;\delta )$ for some $y\in C$. A fractional binning $%
\Pi =(w_{i})_{i\in I}$ is $\delta $\emph{-local} if all the weight functions 
$w_{i}$ are $\delta $-local. Such a fractional binning $\Pi \equiv \Pi
_{\delta }$ can be obtained, for instance, by using the so-called
\textquotedblleft $\delta $-tent functions" based on a $\delta $-grid: let $%
C_{\delta }=\{y^{i}\}_{i\in I}$ be a finite $\delta $-grid of $C$, and for
each $i\in I$ let $w_{i}(c):=\Lambda (c,y^{i})/\sum_{j\in I}\Lambda
(c,y_{j}),$ where $\Lambda (c,y):=[\delta -\left\Vert c-y\right\Vert ]_{+}$;
moreover, this construction yields a continuous binning (because the
functions $\Lambda (\cdot ,y^{i})$ are continuous, and their sum is
positive, and thus bounded away from $0$ on the compact set $C$).

\begin{lemma}
\label{l:Rsharp-R}\emph{Let }$\Pi $\emph{\ be a }$\delta $-local \emph{%
fractional binning. Then}%
\begin{equation*}
\left\vert \mathcal{R}_{t}^{\#;\Pi }-\mathcal{R}_{t}^{\Pi }\right\vert
<2\delta +\delta ^{2}.
\end{equation*}
\end{lemma}

\begin{proof}
Let $\Pi =(w_{i})_{i\in I},$ and for each $i\in I$ let $y^{i}\in C$ be such
that $B(y^{i};\delta )$ contains the support of $w_{i}$. Thus, all forecasts 
$c_{s}$ counted in bin $i$ (i.e., with $\lambda _{s}(i)=w_{i}(c_{s})>0$)
satisfy $\left\Vert c_{s}-y^{i}\right\Vert <\delta $.

Given random variables $X$ and $Y$ such that $\left\Vert X\right\Vert \leq 1$
and $\left\Vert Y\right\Vert <\delta $, and hence $\mathbb{V}ar[X]\leq 1$
and $\mathbb{V}ar[Y]<\delta ^{2}$, we have%
\begin{equation*}
\left\vert \mathbb{V}ar[X-Y]-\mathbb{V}ar[X]\right\vert \leq 2\sqrt{\mathbb{V%
}ar[X]\,\mathbb{V}ar[Y]}+\mathbb{V}ar[Y]<2\delta +\delta ^{2}.
\end{equation*}%
Applying this to $X=a_{s}$ and $Y=c_{s}-y^{i}$ (since $y^{i}$ is a constant
it does not affect the variance) yields%
\begin{equation*}
\left\vert v_{t}^{\#}(i)-v_{t}(i)\right\vert <2\delta +\delta ^{2},
\end{equation*}%
where 
\begin{eqnarray*}
v_{t}^{\#}(i) &=&\sum_{s=1}^{t}\left( \frac{\lambda _{s}(i)}{n_{t}(i)}%
\right) \left\Vert a_{s}-c_{s}-(\bar{a}_{t}(i)-\bar{c}_{t}(i))\right\Vert
^{2}\;\;\;\text{and} \\
v_{t}(i) &=&\sum_{s=1}^{t}\left( \frac{\lambda _{s}(i)}{n_{t}(i)}\right)
\left\Vert a_{s}-\bar{a}_{t}(i)\right\Vert ^{2}
\end{eqnarray*}%
are the variances in bin $i$ of the differences $a_{s}-c_{s}$ and of the
actions $a_{s}$, respectively. Averaging over $i\in I$, with weights $%
n_{t}(i)/t$, gives the result.
\end{proof}

\bigskip

We can now prove Theorem \ref{th:cont-calib}.

\bigskip 

\begin{proof}[Proof of Theorem \protect\ref{th:cont-calib}]
(i) Let $\Pi =(w_{i})_{i\in I}$ be a continuous binning. At time $t$, given $%
\mathbf{a}_{t-1},\mathbf{c}_{t-1}$, and $\mathbf{b}_{t}$, applying the
outgoing fixed point result, specifically, Theorem \ref{th:outgoing} (D), to
the continuous function $c\longmapsto c+\sum_{i\in I}w_{i}(c)e_{t-1}(b_{t},i)
$, yields $c_{t}\in C$ such that 
\begin{eqnarray*}
\left\Vert a_{t}-c_{t}\right\Vert ^{2} &\leq &\left\Vert
a_{t}-c_{t}-\sum_{i\in I}w_{i}(c_{t})e_{t-1}(b_{t},i)\right\Vert ^{2} \\
&\leq &\sum_{i\in I}w_{i}(c_{t})\left\Vert
a_{t}-c_{t}-e_{t-1}(b_{t},i)\right\Vert ^{2}
\end{eqnarray*}%
for every $a_{t}\in A$ (the second inequality is by the convexity of $%
\left\Vert \cdot \right\Vert ^{2}$). Averaging over $t$ gives $\mathcal{B}%
_{t}^{\mathbf{c}}\leq \widetilde{\mathcal{R}}_{t}^{\#;\mathbf{b},\Pi }$, and
thus (\ref{eq:cont-calibeat}) by Proposition \ref{p:Rtilde-R-cont} together
with the decomposition of Appendix A.5 for the fractional binning $(\mathbf{b%
},\Pi )$.

(ii) We construct $\Pi ^{\ast }$ as follows. Let $\Pi _{0}=(w_{i}^{0})_{i\in
I_{0}}$ be the continuous binning given by Proposition 3 in Foster and Hart
(2021). For each $n\geq 1$ let $\Pi _{n}=(w_{i}^{n})_{i\in I_{n}}$ be a $%
\delta _{n}$-local continuous binning, where $0<\delta _{n}<1$ and $\delta
_{n}\rightarrow 0$ as $n\rightarrow \infty $; assume that the indexing sets $%
I_{n}$ for all $n\geq 0$ are taken to be disjoint. The collection of all
these weight functions, with $w_{i}^{n}$ rescaled by a factor of $1/2^{n+1}$%
, yields a continuous binning, which we denote by $\Pi ^{\ast }$; i.e., $\Pi
^{\ast }=(w_{i}^{\ast })_{i\in I^{\ast }}$ with $I^{\ast }:=\cup _{n\geq
0}I_{n}$ and $w_{i}^{\ast }:=(1/2^{n+1})w_{i}^{n}$ for each $i\in I_{n}$ and 
$n\geq 0$ (indeed, each $w_{i}^{\ast }$ is continuous, and $\sum_{i\in
I}w_{i}^{\ast }=\sum_{n\geq 0}\sum_{i\in
I_{n}}(1/2^{n+1})w_{i}^{n}=\sum_{n\geq 0}\left( 1/2^{n+1}\right) \mathbf{1}=%
\mathbf{1},$ because $\sum_{i\in I_{n}}w_{i}^{n}=\mathbf{1}$ for each $n$).

Consider now the joint fractional binning $(\mathbf{b},\Pi ^{\ast })$, and
the corresponding refinement score $\mathcal{R}_{t}^{\#;\mathbf{b},\Pi
^{\ast }}$. Separating the sum over all $i\in I$ into sums over $i\in I_{n}$
for all $n\geq 0$, and noting that rescaling a weight function does not
affect the within-bin \emph{relative} weights $\lambda _{s}(b,i)/n_{t}(b,i)=%
\mathbf{1}_{b}(b_{s})\cdot w_{i}(c_{s})/\sum_{r\leq t}w_{i}(c_{r})$ that are
used to compute the bin averages and variances, we get%
\begin{equation*}
\mathcal{R}_{t}^{\#;\mathbf{b},\Pi ^{\ast }}=\sum_{n\geq 0}\frac{1}{2^{n+1}}%
\mathcal{R}_{t}^{\#;\mathbf{b},\Pi _{n}}.
\end{equation*}%
The procedure $\zeta _{\Pi ^{\ast }}$ constructed in (i) for $\Pi ^{\ast }$
yields, by (\ref{eq:cont-calibeat}),%
\begin{equation*}
\mathcal{B}_{t}^{\mathbf{c}}-\mathcal{R}_{t}^{\#;\mathbf{b},\Pi ^{\ast
}}=\sum_{n\geq 0}\frac{1}{2^{n+1}}\left( \mathcal{B}_{t}^{\mathbf{c}}-%
\mathcal{R}_{t}^{\#;\mathbf{b},\Pi _{n}}\right) \leq o(1).
\end{equation*}%
By the decomposition of Appendix \ref{s-a:decompose} for $(\mathbf{b},\Pi
_{n})$, each term in the above sum equals $\mathcal{K}_{t}^{\mathbf{b},\Pi
_{n}}$, and is thus nonnegative, and so%
\begin{equation}
\mathcal{K}_{t}^{\mathbf{b},\Pi _{n}}=\mathcal{B}_{t}^{\mathbf{c}}-\mathcal{R%
}_{t}^{\#;\mathbf{b,}\Pi _{n}}\leq o(1)\text{\ \ \ for every }n\geq 0.
\label{eq:Pi-n}
\end{equation}

For $n=0$, (\ref{eq:Pi-n}) implies that 
\begin{equation*}
\mathcal{K}_{t}^{\Pi _{0}}=\mathcal{B}_{t}^{\mathbf{c}}-\mathcal{R}%
_{t}^{\#;\Pi _{0}}\leq \mathcal{B}_{t}^{\mathbf{c}}-\mathcal{R}_{t}^{\#;%
\mathbf{b,}\Pi _{0}}\leq o(1),
\end{equation*}%
where we have used the decomposition of Appendix \ref{s-a:decompose} for $%
\Pi _{0}$, and $\mathcal{R}_{t}^{\#;\mathbf{b,}\Pi _{0}}\leq \mathcal{R}%
_{t}^{\#;\Pi _{0}}$ by the refining monotonicity of the refinement score
(Appendix \ref{s-a:refined}). Thus, by Proposition 3 in Foster and Hart
(2021), the procedure is continuously calibrated.

Next, for each $n\geq 1$, (\ref{eq:Pi-n}) implies that%
\begin{equation*}
\mathcal{B}_{t}^{\mathbf{c}}\leq \mathcal{R}_{t}^{\#;\mathbf{b},\Pi
_{n}}+o(1)\leq \mathcal{R}_{t}^{\mathbf{b,}\Pi _{n}}+3\delta _{n}+o(1)\leq 
\mathcal{R}_{t}^{\mathbf{b}}+3\delta _{n}+o(1),
\end{equation*}%
where we have used $\mathcal{R}_{t}^{\#;\mathbf{b},\Pi _{n}}\leq \mathcal{R}%
_{t}^{\mathbf{b},\Pi _{n}}+2\delta _{n}+\delta _{n}^{2}<\mathcal{R}_{t}^{%
\mathbf{b},\Pi _{n}}+3\delta _{n}$ by Lemma \ref{l:Rsharp-R} applied to the
fractional binning $(\mathbf{b},\Pi _{n})$, which is $\delta _{n}$-local,
and $\mathcal{R}_{t}^{\mathbf{b},\Pi _{n}}\leq \mathcal{R}_{t}^{\mathbf{b}}$
(again by Appendix \ref{s-a:refined}). Therefore $\mathcal{B}_{t}^{\mathbf{c}%
}\leq \mathcal{R}_{t}^{\mathbf{b}}+4\delta _{n}$ for all $t$ large enough;
since $\delta _{n}\rightarrow 0$, this proves that $\mathcal{B}_{t}^{\mathbf{%
c}}-\mathcal{R}_{t}^{\mathbf{b}}\leq o(1).$
\end{proof}

\subsection{Multi-calibeating: Improved Error Terms\label{s-a:multi}}

The multi-calibeating procedure of Section \ref{s:multi-beat} yields an
error that is proportional to the product of the sizes of the sets $B^{n}$,
i.e., $\prod_{n=1}^{N}\left\vert B^{n}\right\vert $, which increases
exponentially with $N$. We provide in this appendix two approaches that
yield better errors terms: $O(\sqrt{N}/\sqrt{t})$ by a Blackwell
approachability construct in Section \ref{sus-a:blackwell} (see (\ref%
{eq:blackwell})), and $O((\max_{n}|B^{n}|+N)\log t/t)$ by an online linear
regression construct in Section \ref{sus-a:online LR} (see (\ref{eq:aw-error}%
)). For large $t$, an $O(1/\sqrt{t})$ error is of course worse than an $%
O(\log t/t)$ error; however, for small $t$ the former may well be smaller
than the latter.

We will use throughout the superscript $n$ instead of the more cumbersome $%
\mathbf{b}^{n}$, e.g., $\mathcal{R}_{t}^{n}$ for $\mathcal{R}_{t}^{\mathbf{b}%
^{n}}$, and $\bar{a}_{t-1}^{n}(b_{t}^{n})$ for $\bar{a}_{t-1}^{\mathbf{b}%
^{n}}(b_{t}^{n}).$

\subsubsection{A Blackwell Approachability Approach\label{sus-a:blackwell}}

We use here a construct along the lines of the vector approachability of
Blackwell (1950), with continuous actions taking the place of mixed actions
(which dispenses with the use of probabilities and laws of large numbers).

Put $x_{t}^{n}:=\left\Vert a_{t}-c_{t}\right\Vert ^{2}-\left\Vert a_{t}-\bar{%
a}_{t-1}^{n}(b_{t}^{n})\right\Vert ^{2};$ then%
\begin{equation}
\bar{x}_{t}^{n}%
{\;:=\;}%
\frac{1}{t}\sum_{s=1}^{t}x_{s}^{n}=\mathcal{B}_{t}^{\mathbf{c}}-\widetilde{%
\mathcal{R}}_{t}^{n}.  \label{eq:ui}
\end{equation}%
Let $x_{t}:=(x_{t}^{n})_{n=1,...,N}\in \mathbb{R}^{N}$ and $\bar{x}_{t}:=(%
\bar{x}_{t}^{i})_{n=1,...,N}\in \mathbb{R}^{N}$ be the corresponding $N$%
-dimensional vectors; given the history $(\mathbf{a}_{t-1},\mathbf{c}_{t-1},%
\mathbf{b}_{t})$, the vector $x_{t}$ is determined by $c_{t}$ and $a_{t}$.
We will show that the negative orthant\footnote{%
Notation: $\mathbb{R}_{+}^{N}=\{x\in \mathbb{R}^{N}:x\geq 0\}$ and $\mathbb{R%
}_{-}^{N}=\{x\in \mathbb{R}^{N}:x\leq 0\}$; for real $x\in \mathbb{R},$ put $%
\left[ x\right] _{+}=\max \{x,0\}$ and $\left[ x\right] _{-}=\min \{x,0\};$
for a vector $x\in \mathbb{R}^{N},$ put $[x]_{+}=\left(
[x_{1}]_{+},...,[x_{N}]_{+}\right) $, and $[x]_{-}=\left( \left[ x_{1}\right]
_{-},...,\left[ x_{N}\right] _{-}\right) $.} $\mathbb{R}_{-}^{N}$ of $%
\mathbb{R}^{N}$ is approachable by the $c$-player; i.e., there is a $\mathbf{%
b}$-based forecasting procedure $\zeta $ such that $\sup_{\mathbf{a}_{t},%
\mathbf{b}_{t}}\mathrm{dist}(\bar{x}_{t},\mathbb{R}_{-}^{N})\rightarrow 0$
as $t\rightarrow \infty .$

To this end we claim that for every $\lambda \in \mathbb{R}_{+}^{N}$ there
is $c_{t}\in C$ such that%
\begin{equation}
\lambda \cdot x_{t}\leq 0\;\;\text{for every }a_{t}\in A
\label{eq:blackwell-cond}
\end{equation}%
(this is the Blackwell condition here). This of course holds when $\lambda
=0;$ otherwise, assuming without loss of generality that $%
\sum_{n=1}^{N}\lambda _{n}=1$ (rescale $\lambda $ as needed), we have 
\begin{eqnarray*}
\lambda \cdot x_{t} &=&\sum_{n=1}^{N}\lambda _{n}\left( \left\Vert
a_{t}-c_{t}\right\Vert ^{2}-\left\Vert a_{t}-\bar{a}_{t-1}^{n}(b_{t}^{n})%
\right\Vert ^{2}\right) \\
&=&\left\Vert a_{t}-c_{t}\right\Vert ^{2}-\sum_{n=1}^{N}\lambda
_{n}\left\Vert a_{t}-\bar{a}_{t-1}^{n}(b_{t}^{n})\right\Vert ^{2} \\
&\leq &\left\Vert a_{t}-c_{t}\right\Vert ^{2}-\left\Vert
a_{t}-\sum_{n=1}^{N}\lambda _{n}\bar{a}_{t-1}^{n}(b_{t}^{n})\right\Vert ^{2}
\end{eqnarray*}%
(the inequality is by the convexity of $\left\Vert \cdot \right\Vert ^{2}$),
and so, by taking $c_{t}=\sum_{n=1}^{N}\lambda _{n}\bar{a}%
_{t-1}^{n}(b_{t}^{n})$, one guarantees that the final expression vanishes,
and thus $\lambda \cdot x_{t}\leq 0$, for any $a_{t}.$

Let $\zeta $ be the procedure whereby at time $t$ one chooses $c_{t}\in C$
so as to guarantee $[\bar{x}_{t-1}]_{+}\cdot x_{t}\leq 0$ for all $a_{t}\in
A $ (i.e., condition (\ref{eq:blackwell-cond}) for $\lambda =[\bar{x}%
_{t-1}]_{+}$); thus, $c_{t}$ is arbitrary when $\bar{x}_{t-1}\leq 0$ (i.e., $%
\left[ \bar{x}_{t-1}\right] _{+}=0$), and is otherwise given by%
\begin{equation*}
c_{t}=\frac{\sum_{n=1}^{N}\left[ \bar{x}_{t-1}^{n}\right] _{+}\bar{a}%
_{t-1}^{n}(b_{t}^{n})}{\sum_{n=1}^{N}\left[ \bar{x}_{t-1}^{n}\right] _{+}}.
\end{equation*}

Putting $X_{t}:=t\bar{x}_{t}$, we have%
\begin{eqnarray*}
\mathrm{dist}^{2}(X_{t},\mathbb{R}_{-}^{N}) &\leq &\left\Vert
(x_{t}+X_{t-1})-[X_{t-1}]_{-}\right\Vert ^{2}=\left\Vert
x_{t}+[X_{t-1}]_{+}\right\Vert ^{2} \\
&=&\left\Vert [X_{t-1}]_{+}\right\Vert ^{2}+2[X_{t-1}]_{+}\cdot
x_{t}+\left\Vert x_{t}\right\Vert ^{2} \\
&\leq &\mathrm{dist}^{2}(X_{t-1},\mathbb{R}_{-}^{N})+\gamma ^{4}N,
\end{eqnarray*}%
where the first inequality is by $\mathrm{dist}(X_{t},\mathbb{R}%
_{-}^{N})\leq \left\Vert X_{t}-[X_{t-1}]_{-}\right\Vert $ (since $%
[X_{t-1}]_{-}\in \mathbb{R}_{-}^{N}$), and the second inequality is by the
choice of $c_{t}$ (since $[X_{t-1}]_{+}=(t-1)[\bar{x}_{t-1}]_{+})$ for the
middle term, and $\left\vert x_{t}^{n}\right\vert \leq \gamma ^{2}$ (since $%
a_{t},c_{t},\bar{a}_{t-1}^{n}(b_{t})$ are all in $C)$ for all $n$ for the
third term. Applying this recursively yields%
\begin{equation*}
t^{2}\mathrm{dist}^{2}(\bar{x}_{t},\mathbb{R}_{-}^{N})=\mathrm{dist}%
^{2}(X_{t},\mathbb{R}_{-}^{N})\leq (\gamma ^{4}N)t,
\end{equation*}%
and so $\mathrm{dist}^{2}(\bar{x}_{t},\mathbb{R}_{-}^{N})\leq (\gamma
^{4}N)/t$, which gives%
\begin{equation*}
\max_{1\leq n\leq N}\bar{x}_{t}^{n}\leq \max_{1\leq n\leq N}\left[ \bar{x}%
_{t}^{n}\right] _{+}\leq \left\Vert \left[ \bar{x}_{t}\right]
_{+}\right\Vert =\mathrm{dist}(\bar{x}_{t},\mathbb{R}_{-}^{N})\leq \gamma
^{2}\sqrt{N}\frac{1}{\sqrt{t}}.
\end{equation*}%
By (\ref{eq:ui}) and Proposition \ref{p:online-R} we get%
\begin{equation}
\max_{1\leq n\leq N}\left( \mathcal{B}_{t}^{\mathbf{c}}-\mathcal{R}%
_{t}^{n}\right) \leq \gamma ^{2}\sqrt{N}\frac{1}{\sqrt{t}}+\gamma
^{2}\max_{1\leq n\leq N}|B^{n}|\frac{\ln t+1}{t},  \label{eq:blackwell}
\end{equation}%
which is $\sim \gamma ^{2}\sqrt{N}/\sqrt{t}$ as $t\rightarrow \infty .$

\subsubsection{An Online Linear Regression Approach\label{sus-a:online LR}}

Assume that $C\subseteq \lbrack -\gamma _{0},\gamma _{0}]^{m}$ (see the
remark below on the relation between $\gamma $ and $\gamma _{0}$), and put $%
c_{s}=(c_{i,s})_{i=1,...,m}\in C$ and $a_{s}=(a_{i,s})_{i=1,...,m.}\in A$.
For each $t\geq 1$ let $x_{i,t}^{n}:=\bar{a}_{i,t-1}^{n}(b_{t}^{n})$ (this
is the average of the $i$th coordinates of $a_{s}$ over all periods $s\leq t$
in which $b_{s}^{n}=b_{t}^{n}$), and put $x_{i,t}:=(x_{i,t}^{n})_{n=1,...,N}%
\in \mathbb{R}^{N}.$

For each coordinate $i=1,...,m$, consider the linear regression problem,
regularized by adding the strictly convex term $\alpha \left\Vert \theta
\right\Vert ^{2}$ for some $\alpha >0$, of minimizing%
\begin{equation*}
\mathcal{F}_{i,t}^{{}}(\theta ):=\frac{1}{t}\left(
\sum_{s=1}^{t}(a_{i,s}-\theta \cdot x_{i,s})^{2}+\alpha \left\Vert \theta
\right\Vert ^{2}\right)
\end{equation*}%
over $\theta \in \mathbb{R}^{N};$ let $\mathcal{F}_{i,t}^{\ast }$ denote
this minimum. For each $n=1,...,N$, when $\theta $ equals the $n$th unit
vector $e^{n}\in \mathbb{R}^{N}$, we have 
\begin{equation*}
\mathcal{F}_{i,t}(e^{n})=\frac{1}{t}\sum_{s\leq t}(a_{i,s}^{{}}-\bar{a}%
_{i,s-1}^{n}(b_{s}^{n}))^{2}+\frac{\alpha }{t};
\end{equation*}%
summing over $i=1,...,m$ we get%
\begin{equation}
\sum_{i=1}^{m}\mathcal{F}_{i,t}^{\ast }\leq \sum_{i=1}^{m}\mathcal{F}%
_{i,t}(e^{n})\leq \widetilde{\mathcal{R}}_{t}^{n}+\frac{m\alpha }{t}.
\label{eq:L<=R}
\end{equation}

The \textquotedblleft forward algorithm" of Azoury and Warmuth (2001)
applied to each coordinate $i$ separately yields an online procedure that
generates at each time $t$ a vector $\theta _{i,t}\in \mathbb{R}^{N}$ (that
depends on the history $a_{i,1},...,a_{i,t-1}$ and $x_{i,1},...,x_{i,t-1}$
as well as on $x_{i,t})$ such that%
\begin{equation}
\sum_{s=1}^{t}(a_{i,s}-\theta _{i,s}\cdot x_{i,s})^{2}\leq t\mathcal{F}%
_{i,t}^{\ast }+\gamma _{0}N\ln \left( \frac{\gamma _{0}}{\alpha }t+1\right)
\label{eq:aw}
\end{equation}%
is guaranteed for any sequence\footnote{%
This is Theorem 5.6 of Azoury and Warmuth (2001); in the notation there, $%
X=\max_{n,s}\left\vert x_{i,s}^{n}\right\vert \leq \gamma _{0}$ and $%
Y=\max_{s}\left\vert a_{i,s}\right\vert \leq \gamma _{0}$. Note that there
is a misprinted sign in the first line of the formula (5.17) there.} $%
(a_{i,s})_{s\geq 1}$.

Combining these $m$ algorithms yields an online $\mathbf{b}$-based procedure
(because $x_{t}$ is determined by $b_{t}$ and the history); the vectors $%
\theta _{i,s}$ for $i=1,...,m$ together yield a point $\hat{c}_{s}:=(\theta
_{i,s}\cdot x_{i,s})_{i=1,...,m}\in \mathbb{R}^{m}$. Let $c_{s}:=\mathrm{proj%
}_{C}(\hat{c}_{s})$ be the closest point to $\hat{c}_{s}$ in $C$ (it is well
defined since $C$ is a nonempty convex compact set); then any point in $C$,
in particular $a_{s}$, is closer to $c_{s}$ than to $\hat{c}_{s}$, which
yields 
\begin{equation*}
\left\Vert a_{s}-c_{s}\right\Vert ^{2}\leq \left\Vert a_{s}-\hat{c}%
_{s}\right\Vert ^{2}=\sum_{i=1}^{m}(a_{i,s}-\theta _{i,s}\cdot x_{i,s})^{2}.
\end{equation*}%
Averaging over $s\leq t$ gives%
\begin{eqnarray*}
\mathcal{B}_{t}^{\mathbf{c}} &=&\frac{1}{t}\sum_{s=1}^{t}\left\Vert
a_{s}-c_{s}\right\Vert ^{2}\leq \frac{1}{t}\sum_{i=1}^{m}%
\sum_{s=1}^{t}(a_{i,s}-\theta _{i,s}\cdot x_{i,s})^{2} \\
&\leq &\sum_{i=1}^{m}\mathcal{F}_{i,t}^{\ast }+\frac{m\gamma _{0}N}{t}\ln
\left( \frac{\gamma _{0}}{\alpha }t+1\right)
\end{eqnarray*}%
by (\ref{eq:aw}). Recalling (\ref{eq:L<=R}) and Proposition \ref{p:online-R}
yields%
\begin{equation}
\mathcal{B}_{t}^{\mathbf{c}}-\mathcal{R}_{t}^{n}\leq \frac{m\gamma _{0}N}{t}%
\ln \left( \frac{\gamma _{0}}{\alpha }t+1\right) +\frac{m\alpha }{t}+\gamma
^{2}|B^{n}|\frac{\ln t+1}{t}=O\left( \frac{\log t}{t}\right) .
\label{eq:aw-error}
\end{equation}

\bigskip

\noindent \textbf{Remarks. }\emph{(a) }A connection between $\gamma _{0}$
and $\gamma $ is as follows. The set $C\subset \mathbb{R}^{m}$, whose
diameter is $\gamma $, can be enclosed in a ball of radius $r$, where $%
\gamma /2\leq r\leq \gamma \sqrt{m/(2m+2)}$ by Jung's (1901) theorem. Since
translating the set $C$ does not matter (only differences $a-c$ do), we can
assume without loss of generality that $C\subseteq \bar{B}(0;r)\subseteq
\lbrack -r,r]^{m}$, and so we can take $\gamma _{0}=r.$

\emph{(b)} For a fixed horizon $t$ one may optimize $\alpha .$

\emph{(c)} The forecast $c_{t}$ at time $t$ of the above construction is
given by the formula%
\begin{equation*}
c_{i,t}=\sum_{n=1}^{N}\theta _{i,t}^{n}\bar{a}_{i,t-1}^{n}(b_{t}^{n}),
\end{equation*}%
where $\theta _{i,t}$ is the minimizer of $\mathcal{F}_{i,t}(\theta )$ as if
we have $a_{i,t}=0$ (the actual $a_{t}$ is not known at this point); see
Azoury and Warmuth (2001) for details and more explicit formulas.

\emph{(d) }Since we use the inequalities $\mathcal{F}_{i,t}^{\ast }\leq 
\mathcal{F}_{i,t}(\theta )$ only for $\theta $ equal to the unit vectors $%
e^{n}$ in $\mathbb{R}^{N}$, it suffices to minimize $\mathcal{F}%
_{i,t}(\theta )$ over the convex hull of these vectors, that is, over the
unit simplex $\Delta (N)$ of $\mathbb{R}^{N}$, as in Foster (1991) (whose
result would need to be generalized from the one-dimensional case of $%
A=\{0,1\}$ and $C=[0,1]$ to a general $C$); note that multi-calibeating is
equivalent to being, in terms of the Brier scores, \textquotedblleft as
strong as" each one of the $N$ sequences $(\bar{a}_{t-1}^{n}(b_{t}^{n}))_{t%
\geq 1}$ (for $n=1,...,N$).

\emph{(e)} An alternative approach is to first calibeat each forecaster
separately (by Theorem \ref{th:beat-1}) and then to combine these $N$
calibeating forecasters (by a method such as Azoury and Warmuth's 2001).

\subsection{Log-calibeating\label{s-a:log}}

In this appendix we show that our approach to calibeating works for another
classic scoring rule, namely, the logarithmic one; like the quadratic Brier
score, it is also a strictly proper scoring rule. While the analysis here is
parallel to that of the quadratic scores, some of the technical details
require a little more work here (one reason being the infinite slope of the
logarithmic scores on the boundary).

Let the action space $A$ and the forecast space $C$ both be the unit simplex
in $\mathbb{R}^{m}$; i.e., $A=C=\Delta :=\{x\in \mathbb{R}%
_{+}^{m}:\sum_{i}x_{i}=1\}$. For $x$ and $y$ in $\Delta ,$ the \emph{%
relative entropy} (or \emph{Kullback--Leibler divergence}) of $y$ with
respect to $x$ is defined as%
\begin{equation*}
\mathcal{D}(x\parallel y)%
{\;:=\;}%
\sum_{i=1}^{m}x_{i}\log \left( \frac{x_{i}}{y_{i}}\right)
=\sum_{i=1}^{m}x_{i}\log x_{i}-\sum_{i=1}^{m}x_{i}\log y_{i}
\end{equation*}%
(with $0\log 0=0$ by continuity); $\mathcal{D}(x\parallel y)$ is always
nonnegative, and it equals zero if and only if $y=x$ (like the quadratic $%
\left\Vert x-y\right\Vert ^{2}$). Given sequences $\mathbf{a}_{t}$ and $%
\mathbf{c}_{t}$ in $\Delta ^{t}$, we define the \emph{logarithmic score} $%
\mathcal{L}$, the \emph{log-refinement score} $\mathcal{R}^{\mathcal{L}}$,%
\emph{\ }and the \emph{log-calibration score} $\mathcal{K}^{\mathcal{L}}$, by%
\begin{eqnarray}
\mathcal{L}_{t} &%
{\;:=\;}%
&\frac{1}{t}\sum_{s=1}^{t}\mathcal{D}(a_{s}\parallel c_{s}),  \notag \\
\mathcal{R}_{t}^{\mathcal{L}} &%
{\;:=\;}%
&\frac{1}{t}\sum_{s=1}^{t}\mathcal{D}(a_{s}\parallel \bar{a}_{t}(c_{s})),%
\text{\ \ \ and}  \label{eq:log-scores} \\
\mathcal{K}_{t}^{\mathcal{L}} &%
{\;:=\;}%
&\frac{1}{t}\sum_{s=1}^{t}\mathcal{D}(\bar{a}_{t}(c_{s})\parallel c_{s}). 
\notag
\end{eqnarray}%
This amounts to replacing all quadratic terms, such as $\left\Vert
a_{s}-c_{s}\right\Vert ^{2}$ in the Brier score $\mathcal{B}$, with the
corresponding logarithmic terms, such as $\mathcal{D}(a_{s}\parallel c_{s})$
in $\mathcal{L}$. Now%
\begin{equation*}
\mathcal{D}(a\parallel c)=L(a,c)-L(a,a)=L(a,c)-H(a),
\end{equation*}%
where 
\begin{equation*}
L(a,c)%
{\;:=\;}%
-\sum_{i=1}^{m}a_{i}\log c_{i}
\end{equation*}%
is the \emph{cross entropy }of $c$\emph{\ }with respect to\emph{\ }$a$, and 
\begin{equation*}
H(a)%
{\;:=\;}%
L(a,a)=-\sum_{i=1}^{m}a_{i}\log a_{i}
\end{equation*}%
is the \emph{entropy} of $a$; thus $L(a,c)\geq L(a,a)=H(a)$, with equality
if and only if $c=a.$ Summing by bins and using the linearity of $L$ in its
first argument, the log scores (\ref{eq:log-scores}) can thus be rewritten as%
\begin{eqnarray}
\mathcal{L}_{t} &=&\sum_{x\in \Delta }\left( \frac{n_{t}(x)}{t}\right) L(%
\bar{a}_{t}(x),x)-\mathcal{H}_{t},  \notag \\
\mathcal{R}_{t}^{\mathcal{L}} &=&\sum_{x\in \Delta }\left( \frac{n_{t}(x)}{t}%
\right) H(\bar{a}_{t}(x))-\mathcal{H}_{t},\text{\ \ \ and}
\label{eq:log-scores-2} \\
\mathcal{K}_{t}^{\mathcal{L}} &=&\sum_{x\in \Delta }\left( \frac{n_{t}(x)}{t}%
\right) \mathcal{D}(\bar{a}_{t}(x)\parallel x),  \notag
\end{eqnarray}%
where%
\begin{equation*}
\mathcal{H}_{t}%
{\;:=\;}%
\frac{1}{t}\sum_{s=1}^{t}H(a_{s}),
\end{equation*}%
the average entropy of the actions, depends only on the actions $a_{s}$ and
not on the forecasts $c_{s}$ (and so one could work throughout with $%
\mathcal{L}_{t}^{\prime }:=\mathcal{L}_{t}+\mathcal{H}_{t}=(1/t)\sum_{s\leq
t}L(a_{s},c_{s}),$ another standard version of the logarithmic scoring rule,
instead of $\mathcal{L}_{t}$).

The following properties of the log scores, completely parallel to those of
the quadratic scores, are now easy to see from (\ref{eq:log-scores}) and (%
\ref{eq:log-scores-2}):

\begin{itemize}
\item $\mathcal{L}_{t}\geq 0,$ and $\mathcal{L}_{t}=0$ if and only if $%
c_{s}=a_{s}$ for all $1\leq s\leq t$ (i.e., $\mathcal{B}_{t}=0$: every
forecast is equal to the action).

\item $\mathcal{R}_{t}^{\mathcal{L}}\geq 0,$ with equality if and only if $%
a_{s}=\bar{a}_{t}(c_{s})$ for all $1\leq s\leq t$ (i.e., $\mathcal{R}_{t}=0$%
: all actions in the same forecasting bin are the same).

\item $\mathcal{K}_{t}^{\mathcal{L}}\geq 0,$ with equality if and only if $%
c_{s}=\bar{a}_{t}(c_{s})$ for all $1\leq s\leq t$ (i.e., $\mathcal{K}_{t}=0$%
: every forecast is equal to the bin average action).

\item $\mathcal{L}_{t}=\mathcal{R}_{t}^{\mathcal{L}}+\mathcal{K}_{t}^{%
\mathcal{L}}$ (by (\ref{eq:log-scores-2}) and $L(y,x)=H(y)+\mathcal{D}%
(y\parallel x)$ for every $x$ and $y=\bar{a}_{t}(x)$).

\item $\mathcal{R}_{t}^{\mathcal{L}}=\min_{\phi }\mathcal{L}_{t}^{\phi (%
\mathbf{c})}$, where the minimum is taken over all bin relabelings $\phi
:C\rightarrow C$, and is attained when $\phi (x)=\bar{a}_{t}(x)$ for every $%
x,$ i.e., when the label of the $x$-bin is changed to the average $\bar{a}%
_{t}(x)$ of the bin (cf. (\ref{eq:R=minB})).

\item $\mathcal{L}_{t}$ is a strictly proper scoring rule (like the Brier
score; see Section \ref{sus:scores}).
\end{itemize}

\noindent Thus, the log-refinement score $\mathcal{R}^{\mathcal{L}}$ depends
only on the binning and not on the bin labels, whereas the log-calibration
score $\mathcal{K}^{\mathcal{L}}$ depends only on the bin averages and
labels.

We will now show that our results for the quadratic scores hold for the log
scores as well. Specifically, we will prove Theorem \ref{th:beat-1-log}, the
basic log-calibeating result (the parallel of Theorem \ref{th:beat-1}) and
Theorem \ref{th:log-calibration}, the log-calibration as
self-log-calibeating (the parallel of Theorem \ref{th:calibration}), from
which the other results follow, just as in the quadratic case.

For this we need the online version of the log-refinement score, where each
offline average $\bar{a}_{t}(c_{s})$ at time $t$ is replaced with the online
average $\bar{a}_{s-1}(c_{s})$ at time $s-1$. In order to avoid infinite
values, we use a \textquotedblleft regularized" version $\bar{a}%
_{s-1}^{\prime }(c_{s})$ instead, where for every $t\geq 1$ and $x\in \Delta 
$ we define%
\begin{equation*}
\bar{a}_{t}^{\prime }(x)%
{\;:=\;}%
\frac{1}{n_{t}(x)+1}\left( \sum_{1\leq s\leq t:c_{s}=x}a_{s}+g_{0}\right)
\end{equation*}%
with $g_{0}:=(1/m,...,1/m)$; thus, $\bar{a}_{t}^{\prime }(x)\gg 0$ and $\bar{%
a}_{t}^{\prime }(x)-\bar{a}_{t}(x)\rightarrow 0$ as $n_{t}(x)\rightarrow
\infty $. This amounts to starting each bin with an initial strictly
positive element $g_{0}$. We then define the (regularized) \emph{online
refinement score} by%
\begin{equation*}
\widetilde{\mathcal{R}}_{t}^{\mathcal{L}}%
{\;:=\;}%
\frac{1}{t}\sum_{s=1}^{t}\mathcal{D}(a_{s}\parallel \bar{a}_{s-1}^{\prime
}(c_{s})).
\end{equation*}

Assuming for simplicity that there are finitely many forecasts---and thus
bins---the parallel result to Proposition \ref{p:online-R} is

\begin{proposition}
\label{p:online-R-log}Let $(a_{t})_{t\geq 1}$ be a sequence of actions in $%
\Delta $, and let $(c_{t})_{t\geq 1}$ be a sequence of forecasts in a finite
set $D\subset \Delta $. Then, as $t\rightarrow \infty $, we have 
\begin{equation*}
\sup_{\mathbf{a}_{t}\in \Delta ^{t},\mathbf{c}_{t}\in D^{t}}\left[ 
\widetilde{\mathcal{R}}_{t}^{\mathcal{L}}-\mathcal{R}_{t}^{\mathcal{L}}%
\right] \leq O\left( \frac{\log t}{t}\right) .
\end{equation*}
\end{proposition}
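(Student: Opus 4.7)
The plan is to mirror the proof of Proposition \ref{p:online-R} step for step, replacing the clean variance identity of Proposition \ref{p:var} by a Laplace-style log-loss regret bound. First decompose by bins: since $D$ is finite, for each $d\in D$ isolate the subsequence $x_1,\dots,x_n$ of actions assigned to bin $d$ (where $n=n_t(d)$) and show that
\[
\Delta(d)\;:=\;\sum_{i=1}^n\bigl[\mathcal{D}(x_i\parallel \bar x'_{i-1})-\mathcal{D}(x_i\parallel \bar x_n)\bigr]\;=\;O(\log n),
\]
with $\bar x_n=\tfrac{1}{n}\sum_i x_i$ and $\bar x'_{i-1}=(S_{i-1}+g_0)/i$. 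Summing $\widetilde{\mathcal{R}}_t^{\mathcal{L}}-\mathcal{R}_t^{\mathcal{L}}=\tfrac{1}{t}\sum_{d\in D}\Delta(d)$ and using $n_t(d)\le t$ together with $|D|<\infty$ then yields the claimed $O(\log t/t)$ bound uniformly in $\mathbf{a}_t$ and $\mathbf{c}_t$.

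The key reformulation comes from the identity $L(x,q)=H(x)+\mathcal{D}(x\parallel q)$ and the fact that $\min_{q\in\Delta}\sum_iL(x_i,q)$ is attained at $q=\bar x_n$: this gives
\[
\Delta(d)\;=\;\sum_{i=1}^n L(x_i,\bar x'_{i-1})\;-\;\min_{q\in\Delta}\sum_{i=1}^n L(x_i,q),
\]
which is precisely the \emph{cumulative log-loss regret} of the online Laplace-type predictor $\bar x'$ against the best constant predictor in hindsight. To bound this, I would expand coordinatewise using $\bar x'_{i-1,k}=(S_{i-1,k}+1/m)/i$, apply Stirling to $\sum_{i=1}^n\log i = n\log n-n+O(\log n)$, and compare $\sum_i x_{i,k}\log(S_{i-1,k}+1/m)$ with the integral $\int_0^{S_{n,k}}\log(s+1/m)\,ds$ via the tangent-line inequality $\log u_1\le\log u_0+(u_1-u_0)/u_0$. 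After cancellation, the leading terms telescope exactly, leaving a ``boundary'' contribution $\sum_k\bigl[S_{n,k}\log S_{n,k}-(S_{n,k}+1/m)\log(S_{n,k}+1/m)\bigr]$, which is $\le\log m=O(1)$, plus a second-order error $\tfrac{1}{2}\sum_k\sum_i x_{i,k}^2/(S_{i-1,k}+1/m)$.

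The main obstacle is this last error term, which must be shown to be $O(\log n)$. Writing $r_i=x_{i,k}$ and $R_i=S_{i,k}$, I would split the per-coordinate sum according to whether $R_{i-1}<1$ or $R_{i-1}\ge 1$: the first range has only $O(m)$ indices (since each $r_i\le1$) and contributes $O(1)$ after the $1/m$ regularization, while on the second range the bound $r_i\le 1\le R_{i-1}$ gives $r_i^2/(R_{i-1}+1/m)\le r_i/R_{i-1}\le 2r_i/R_i$, and the standard integral comparison $\sum_i r_i/R_i\le \log R_n+O(1)$ yields the desired $O(\log n)$ estimate. (This is exactly where the regularization by $g_0$ plays its role---without it, the first step $\bar x'_0\gg 0$ would fail and $\mathcal{D}(x_1\parallel\bar x'_0)$ could be infinite.)

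Summing the per-coordinate bounds over $k=1,\dots,m$ gives $\Delta(d)\le C_m\log(n+1)$ for an explicit constant depending only on $m$, and then summing over $d\in D$ with weight $1/t$ yields the uniform $O(|D|\,m\log t/t)=O(\log t/t)$ bound claimed in Proposition \ref{p:online-R-log}. The construction is entirely constructive and the estimates are uniform over $\mathbf{a}_t\in\Delta^t$ and $\mathbf{c}_t\in D^t$, so the proof transfers seamlessly to the log-calibeating analogs of Theorems \ref{th:beat-1} and \ref{th:calibration}.
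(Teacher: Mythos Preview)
Your reduction to the per-bin regret $\Delta(d)=\sum_i L(x_i,\bar x'_{i-1})-\min_q\sum_i L(x_i,q)$ is exactly the right reformulation, and the second-order Taylor/integral-comparison route you sketch does yield $\Delta(d)=O(\log n)$. But this is a genuinely different per-bin argument from the paper's. The paper (its Proposition~\ref{p:lt}) instead observes that each summand in the regret is \emph{convex} in every $x_r$, so the supremum over $\mathbf{x}_n\in\Delta^n$ is attained at unit vectors; a further monotonicity argument pins the worst case to equal counts $n_i=n/m$, and then an explicit Stirling computation gives the bound $\Lambda_{mr}\le\frac{m+1}{2}\log r+O(1)$. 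Your approach bypasses this worst-case reduction and bounds the regret directly via $F(s)=(s+\alpha)\log(s+\alpha)-s$ and its second derivative---the classical online-learning proof of the Laplace/KT regret bound---which is more routine but gives a slightly looser constant.

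One slip to fix: the claim that the range $R_{i-1}<1$ ``has only $O(m)$ indices (since each $r_i\le1$)'' is false as stated---if all $r_i$ are tiny, arbitrarily many indices can land there. What is true, and what you actually need, is that the \emph{contribution} from this range is $O(1)$: since $\sum_{i:R_{i-1}<1}r_i\le 2$ (the partial sums first exceed $1$ after adding at most one more term bounded by $1$) and $R_{i-1}+1/m\ge 1/m$, one has $\sum_{i:R_{i-1}<1}r_i^2/(R_{i-1}+1/m)\le m\sum_{i:R_{i-1}<1}r_i\le 2m$. With that correction your argument goes through, and the final summation over bins is identical to the paper's.
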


As in Section \ref{s:online R}, we first prove the result for a single
bin---Proposition \ref{p:lt}, the parallel of Proposition \ref{p:var}---from
which it easily extends to finitely many bins.

Let thus $(x_{t})_{t\geq 1}$ be a sequence of points in $\Delta $; for every 
$t\geq 0$ put%
\begin{eqnarray*}
X_{t} &%
{\;:=\;}%
&\sum_{s=1}^{t}x_{s}, \\
\bar{x}_{t} &%
{\;:=\;}%
&\frac{1}{t}X_{t}, \\
\bar{x}_{t}^{\prime } &%
{\;:=\;}%
&\frac{1}{t+1}(X_{t}+g_{0}),\text{\ \ and} \\
\ell _{t} &%
{\;:=\;}%
&\frac{1}{t}\sum_{s=1}^{t}L(x_{s},\bar{x}_{s-1}^{\prime }).
\end{eqnarray*}

\begin{proposition}
\label{p:lt}As $t\rightarrow \infty $ we have%
\begin{equation*}
\sup_{\mathbf{x}_{t}\in \Delta ^{t}}\left[ \ell _{t}-L(\bar{x}_{t},\bar{x}%
_{t})\right] \leq O\left( \frac{\log t}{t}\right) .
\end{equation*}
\end{proposition}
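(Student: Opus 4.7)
The plan is to express $t[\ell_t - L(\bar{x}_t,\bar{x}_t)]$ as the gap between the ``online'' cumulative log-loss $L_t^{\text{on}} := \sum_s L(x_s,\bar x'_{s-1})$ of the Laplace-type predictor $\bar x'_{s-1}$ and the ``offline optimal'' value $L_t^{\text{off}} := tL(\bar x_t,\bar x_t)$, and then to show this difference is $O(\log t)$ uniformly in $\mathbf{x}_t \in \Delta^t$. Since $\bar x'_{s-1,i} = (X_{s-1,i}+1/m)/s$ and $L(\bar x_t,\bar x_t) = -\sum_i (X_{t,i}/t)\log(X_{t,i}/t)$, a direct expansion gives
\begin{equation*}
L_t^{\text{on}} - L_t^{\text{off}} = \sum_{i=1}^m\left[X_{t,i}\log X_{t,i} - \sum_{s=1}^t x_{s,i}\log(X_{s-1,i}+1/m)\right] + \log(t!) - t\log t.
\end{equation*}
Invoking Stirling's formula $\log(t!) = t\log t - t + O(\log t)$ reduces the task to showing the coordinate-wise estimate $X_{t,i}\log X_{t,i} - \sum_s x_{s,i}\log(X_{s-1,i}+1/m) \leq X_{t,i} + O(\log t)$; summing over $i$ and using $\sum_i X_{t,i} = t$ cancels the $-t$ from Stirling.

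The crux is an integral comparison playing the role of Proposition~\ref{p:var}. Fix $i$ and set $y_s := x_{s,i} \in [0,1]$, $Y_s := X_{s,i}$. The concavity of $u \mapsto \log u$ gives the chord bound $\log(Y_s+1/m) \leq \log(Y_{s-1}+1/m) + y_s/(Y_{s-1}+1/m)$, and since $\log(\cdot+1/m)$ is also increasing,
\begin{equation*}
\int_{Y_{s-1}}^{Y_s}\log(y+1/m)\,dy - y_s\log(Y_{s-1}+1/m) \leq \frac{y_s^2}{Y_{s-1}+1/m}.
\end{equation*}
Summing over $s$ and evaluating $\int_0^{Y_t}\log(y+1/m)\,dy = (Y_t+1/m)\log(Y_t+1/m) - Y_t + (\log m)/m$ yields
\begin{equation*}
Y_t\log Y_t - \sum_{s=1}^{t} y_s\log(Y_{s-1}+1/m) \leq \Delta_i + Y_t - \tfrac{\log m}{m} + E_i,
\end{equation*}
where $\Delta_i := Y_t\log Y_t - (Y_t+1/m)\log(Y_t+1/m)$ and $E_i := \sum_s y_s^2/(Y_{s-1}+1/m)$. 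A short calculation of the derivative shows $\Delta_i$ is decreasing in $Y_t$ with maximum $(\log m)/m$ at $Y_t=0$, so $\Delta_i - (\log m)/m \leq 0$ and the right-hand side is bounded by $Y_t + E_i$.

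Finally, I bound $E_i = O(\log t)$ by a dyadic argument exploiting $y_s \in [0,1]$: partition the summation indices by $\lfloor Y_{s-1}\rfloor = k$; since each increment is at most~$1$, the total $y_s$-mass in each block $k$ is at most $2$, so that block contributes at most $2/(k+1/m)$, and summing $k=0,1,\ldots,\lceil Y_t\rceil$ gives $E_i \leq 2m + 2\log(m Y_t+1) = O(\log t)$. Collecting the bounds across $i=1,\ldots,m$ then gives $L_t^{\text{on}} - L_t^{\text{off}} \leq O(m\log t) = O(\log t)$ uniformly in $\mathbf{x}_t$, as required. The main obstacle is the integral-comparison step: unlike the quadratic case, where the variance-update identity of Proposition~\ref{p:var} gives an exact telescoping, here the concavity of $\log$ delivers only a second-order Riemann-sum deficit $y_s^2/(Y_{s-1}+1/m)$ that must be separately controlled via the $y_s \leq 1$ dyadic estimate; the remaining pieces (Stirling and the monotonicity of $\Delta_i$) are routine.
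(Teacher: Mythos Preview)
Your argument is correct and takes a genuinely different route from the paper's. The paper first splits off the easy $O(1/t)$ gap between $L(\bar x_t,\bar x_t)$ and $L(\bar x_t,\bar x'_t)$, then writes $t\ell_t - tL(\bar x_t,\bar x'_t)$ as a telescoping sum plus a remainder $\Lambda_t$, observes that $\Lambda_t$ is convex in each $x_r$ and hence maximized at unit-vector sequences, argues by a rearrangement that the worst unit-vector sequence has the coordinate counts $n_i$ as equal as possible, and finally evaluates that extremal case via Stirling. Your approach instead handles \emph{all} sequences in $\Delta^t$ directly: after the same Stirling reduction, you bound each coordinate by comparing the online sum $\sum_s y_s\log(Y_{s-1}+1/m)$ to the integral $\int_0^{Y_t}\log(y+1/m)\,dy$, with a second-order deficit $\sum_s y_s^2/(Y_{s-1}+1/m)$ controlled by a block argument using $y_s\in[0,1]$. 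The paper's extremal-case analysis yields a sharper leading constant (essentially $(m+1)/2$ in front of $\log t$), whereas your analytic bound gives a looser constant but a shorter and more transparent proof that avoids the convexity-plus-equalization detour entirely; it is also closer in spirit to standard online-learning regret arguments for the Laplace predictor.
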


\begin{proof}
First, we have 
\begin{equation}
L(\bar{x}_{t},\bar{x}_{t}^{\prime })-L(\bar{x}_{t},\bar{x}_{t})\leq \frac{1}{%
t}.  \label{eq:x-bar}
\end{equation}%
Indeed, $(t+1)\bar{x}_{t}^{\prime }\geq X_{t}=t\bar{x}_{t},$ and so $\log 
\bar{x}_{i,t}^{\prime }\geq \log \bar{x}_{i,t}+\log (t/(1+t))\geq \log \bar{x%
}_{i,t}-1/t$ for every $i$; multiplying by $\bar{x}_{i,t}$ and summing over $%
i$ yields (\ref{eq:x-bar}) (use $\sum_{i}\bar{x}_{i,t}=1$).

Second, we will show that%
\begin{equation}
\ell _{t}-L(\bar{x}_{t},\bar{x}_{t}^{\prime })\leq O\left( \frac{\log t}{t}%
\right) ;  \label{eq:lt}
\end{equation}%
together with (\ref{eq:x-bar}) it yields the result.

To prove (\ref{eq:lt}), we start with the identity%
\begin{eqnarray*}
L(x_{s},\bar{x}_{s-1}^{\prime }) &=&L(X_{s},\bar{x}_{s-1}^{\prime
})-L(X_{s-1},\bar{x}_{s-1}^{\prime }) \\
&=&[L(X_{s},\bar{x}_{s}^{\prime })-L(X_{s-1},\bar{x}_{s-1}^{\prime
})]+[L(X_{s},\bar{x}_{s-1}^{\prime })-L(X_{s},\bar{x}_{s}^{\prime })].
\end{eqnarray*}%
The sum over $s=1,...,t$ of the first terms, $L(X_{s},\bar{x}_{s}^{\prime
})-L(X_{s-1},\bar{x}_{s-1}^{\prime })$, telescopes to $L(X_{t},\bar{x}%
_{t}^{\prime })$ (because $X_{0}=0$ and $\bar{x}_{0}^{\prime }=g_{0}$, and
so $L(X_{0},\bar{x}_{0}^{\prime })=0$), and hence, putting%
\begin{equation*}
\Lambda _{t}%
{\;:=\;}%
\sum_{s=1}^{t}[L(X_{s},\bar{x}_{s-1}^{\prime })-L(X_{s},\bar{x}_{s}^{\prime
})]
\end{equation*}%
for the sum of the second terms, $L(X_{s},\bar{x}_{s-1}^{\prime })-L(X_{s},%
\bar{x}_{s}^{\prime })$, we get%
\begin{equation*}
t\ell _{t}=L(X_{t},\bar{x}_{t}^{\prime })+\Lambda _{t}=tL(\bar{x}_{t},\bar{x}%
_{t}^{\prime })+\Lambda _{t}.
\end{equation*}%
We will now prove that the maximum of $\Lambda _{t}$ over all possible
sequences $x_{1},...,x_{t}$ is $O(\log t)$.

Put%
\begin{equation*}
\lambda _{i,s}%
{\;:=\;}%
X_{i,s}\left[ \log (X_{i,s}+\alpha )-\log (X_{i,s-1}+\alpha )\right] \text{\
\ \ and\ \ \ }\lambda _{s}%
{\;:=\;}%
\sum_{i=1}^{m}\lambda _{i,s},
\end{equation*}%
where $\alpha :=1/m$ and we write $x_{i,s}$ for the $i$th coordinate of $%
x_{s}$ (and similarly for the other vectors); then $X_{i,s}[\log \bar{x}%
_{i,s}^{\prime }-\log \bar{x}_{i,s-1}^{\prime }]=\lambda _{i,s}-X_{i,s}[\log
(s+1)-\log s],$ and so, summing over $i=1,...,m$, we get%
\begin{equation*}
L(X_{s},\bar{x}_{s-1}^{\prime })-L(X_{s},\bar{x}_{s}^{\prime })=\lambda
_{s}-s[\log (s+1)-\log s]
\end{equation*}%
(because $x_{t}\in \Delta $ for every $t,$ and so $\sum_{i}X_{i,s}=s$).
Summing over $s=1,...,t$ yields%
\begin{equation*}
\Lambda _{t}=\sum_{s=1}^{t}\lambda _{s}-\sum_{s=1}^{t}s[\log (s+1)-\log s].
\end{equation*}

Take $1\leq r\leq t$; the function $\lambda _{i,s}$ is a convex function of $%
x_{i,r}$, and thus $\lambda _{s}=\sum_{i=1}^{m}\lambda _{i,s}$ is a convex
function of the vector $x_{r}$. Therefore $\sum_{s=1}^{t}\lambda _{s}$ is a
convex function of $x_{r}$, from which it follows that $\Lambda _{t}$ is
maximal when $x_{r}$ is a unit vector. When all the $x_{r}$ for $1\leq r\leq
t$ are unit vectors we get%
\begin{equation*}
\sum_{s=1}^{t}\lambda _{i,s}=\sum_{k=1}^{n_{i}}k\left[ \log \left( k+\alpha
\right) -\log \left( k-1+\alpha \right) \right] ,
\end{equation*}%
where $n_{i}$ ($=X_{i,t}$) is the number of times up to $t$ that $x_{s}$
equals the $i$th unit vector (these are the times when the sequence $%
(X_{i,s})_{s\geq 0}$ increases, by $1$).

The function $\xi \mapsto \xi \lbrack \log (\xi +\alpha )-\log (\xi
-1+\alpha )]$ is decreasing in $\xi ,$ and so, in order for $%
\sum_{s=1}^{t}\lambda _{s}$, and thus $\Lambda _{t}$, to be maximal, the $%
n_{i}$-s should be as close to equal as possible, i.e., $|n_{i}-n_{j}|\leq 1$
for all $i,j$ (if, say, $n_{1}\geq n_{2}+2$ then replacing in the sequence $%
(x_{s})_{1\leq s\leq t}$ one instance of $v_{1}$ with $v_{2}$ will increase $%
\Lambda _{t}$). Let $t=mr$ for simplicity, then $n_{i}=r$ for every $i$, and
so%
\begin{equation*}
\Lambda _{mr}=m\sum_{k=1}^{r}k[\log \left( k+\alpha \right) -\log \left(
k-1+\alpha \right) ]-\sum_{s=1}^{mr}s[\log (s+1)-\log s].
\end{equation*}%
Consider the second sum; opening the square brackets yields%
\begin{equation}
\sum_{s=1}^{mr}s[\log (s+1)-\log s]=mr\log (mr+1)-\sum_{s=1}^{mr}\log s.
\label{eq:sum2}
\end{equation}%
Similarly, the first sum is%
\begin{eqnarray}
&\,&m\sum_{k=1}^{r}k[\log \left( k+\alpha \right) -\log \left( k-1+\alpha
\right) ]=mr\log (r+\alpha )-m\sum_{k=0}^{r-1}\log (k+\alpha )  \notag \\
&\;&\;\;\;\;\;\;\;\;\;\;\;\;\;\;\;\;\;\;\leq mr\log
(mr+1)-m\sum_{k=1}^{r-1}\log k-(mr-m)\log m,  \label{eq:sum1}
\end{eqnarray}%
where we have used $\log (r+\alpha )=\log (mr+1)-\log m$ and $%
\sum_{k=0}^{r-1}\log (k+\alpha )=\log \alpha +\sum_{k=1}^{r-1}\log (k+\alpha
)\geq -\log m+\sum_{k=1}^{r-1}\log k$. Subtracting (\ref{eq:sum2}) from (\ref%
{eq:sum1}) gives%
\begin{equation}
\Lambda _{mr}\leq \sum_{s=1}^{mr}\log s-m\sum_{k=1}^{r-1}\log k-(mr-m)\log m.
\label{eq:mr}
\end{equation}%
We will show that the expression on the right-hand side of (\ref{eq:mr}) is $%
O(\log r)$ as $r\rightarrow \infty $ (the dimension $m$ is fixed). The
Stirling approximation, $n!\sim \sqrt{2\pi }n^{n+1/2}e^{-n}$, gives%
\begin{equation*}
\sum_{s=1}^{n}\log s=\log (n!)=\left( n+\frac{1}{2}\right) \log n-n+\log
\left( \sqrt{2\pi }\right) +o(1);
\end{equation*}%
using it in (\ref{eq:mr}) yields%
\begin{eqnarray*}
\Lambda _{mr} &\leq &\left[ \left( mr+\frac{1}{2}\right) \log (mr)-mr\right]
-m\left[ \left( r-\frac{1}{2}\right) \log (r-1)-(r-1)\right] \\
&&-(mr-m)\log m+O(1).
\end{eqnarray*}%
Since $\log (mr)=\log m+\log r\leq \log m+\log (r-1)+1/(r-1)$, we finally get%
\begin{equation*}
\Lambda _{mr}\leq \frac{m+1}{2}\log (r-1)+O(1).
\end{equation*}%
Thus, $\Lambda _{mr}\leq O(\log r)=O(\log (mr))$, which completes the proof
of (\ref{eq:lt}), and hence of the proposition.
\end{proof}

\bigskip

\begin{proof}[Proof of Proposition \protect\ref{p:online-R-log}]
We have%
\begin{eqnarray*}
\widetilde{\mathcal{R}}_{t}^{\mathcal{L}}-\mathcal{R}_{t}^{\mathcal{L}}
&=&\sum_{x\in D}\left( \frac{n_{t}(x)}{t}\right) \left[ \frac{1}{n_{t}(x)}%
\sum_{s\leq t:c_{s}=x}L(a_{s},\bar{a}_{s-1}^{\prime }(x))-L(\bar{a}_{t}(x),%
\bar{a}_{t}(x))\right] \\
&\leq &\sum_{x\in D}\left( \frac{n_{t}(x)}{t}\right) O\left( \frac{\log
n_{t}(x)}{n_{t}(x)}\right) \leq O\left( \frac{\log t}{t}\right) ,
\end{eqnarray*}%
where the first inequality is by Proposition \ref{p:lt} applied to each $x$%
-bin separately, and the second inequality is by the concavity of the $\log $
function.
\end{proof}

\bigskip

We thus get a simple way to log-calibeat, which is the parallel of Theorem %
\ref{th:beat-1}.

\begin{theorem}
\label{th:beat-1-log}Let $B$ be a finite set, and let $\zeta $ be the
deterministic $\mathbf{b}$-based forecasting procedure given by 
\begin{equation*}
c_{t}=\bar{a}_{t-1}^{\prime \,\mathbf{b}}(b_{t})
\end{equation*}%
for every time $t\geq 1$. Then $\zeta $ is $B$-log-calibeating; specifically,%
\begin{equation*}
\sup_{\mathbf{a}_{t}\in \Delta ^{t},\mathbf{b}_{t}\in B^{t}}\left[ \mathcal{L%
}_{t}^{\mathbf{c}}-\mathcal{R}_{t}^{\mathcal{L},\mathbf{b}}\right] \leq
O\left( \frac{\log t}{t}\right)
\end{equation*}%
as $t\rightarrow \infty $.
\end{theorem}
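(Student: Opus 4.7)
The plan is to mirror the proof of Theorem \ref{th:beat-1}. First I would define the $\mathbf{b}$-based online log-refinement score
\begin{equation*}
\widetilde{\mathcal{R}}_t^{\mathcal{L},\mathbf{b}} := \frac{1}{t}\sum_{s=1}^{t} \mathcal{D}\bigl(a_s \parallel \bar{a}_{s-1}^{\prime\,\mathbf{b}}(b_s)\bigr),
\end{equation*}
i.e., the exact analog of $\widetilde{\mathcal{R}}_t^{\mathcal{L}}$ from Section \ref{s-a:log}, with the $\mathbf{b}$-bins in place of the $\mathbf{c}$-bins. The choice $c_t = \bar{a}_{t-1}^{\prime\,\mathbf{b}}(b_t)$ then yields the identity $\mathcal{L}_t^{\mathbf{c}} = \widetilde{\mathcal{R}}_t^{\mathcal{L},\mathbf{b}}$ by direct substitution into the definition of the logarithmic score. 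This is the log-analog of the identity $\mathcal{B}_t^{\mathbf{c}} = \widetilde{\mathcal{R}}_t^{\mathbf{b}}$ used in the proof of Theorem \ref{th:beat-1}.

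It thus suffices to show that
\begin{equation*}
\sup_{\mathbf{a}_t \in \Delta^t,\, \mathbf{b}_t \in B^t}\bigl[\widetilde{\mathcal{R}}_t^{\mathcal{L},\mathbf{b}} - \mathcal{R}_t^{\mathcal{L},\mathbf{b}}\bigr] \leq O\!\left(\frac{\log t}{t}\right),
\end{equation*}
and here I would simply invoke Proposition \ref{p:online-R-log} with $\mathbf{b}$ in the role of $\mathbf{c}$ there. This relabeling is legitimate because both $\mathcal{R}_t^{\mathcal{L},\mathbf{b}}$ and $\widetilde{\mathcal{R}}_t^{\mathcal{L},\mathbf{b}}$ depend only on the partition of the periods $\{1,\dots,t\}$ induced by $\mathbf{b}$ (through the bin averages $\bar{a}_t^{\mathbf{b}}$ and $\bar{a}_{s-1}^{\prime\,\mathbf{b}}$), and not on the bin labels themselves, so an arbitrary finite $B$ can be identified with any finite $D \subset \Delta$ of the same cardinality without altering either score.

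The main obstacle is therefore entirely absorbed into Proposition \ref{p:online-R-log}, whose genuine content---reduced via its single-bin workhorse Proposition \ref{p:lt}---is the log-analog of the variance-update identity, together with the role played by the regularization $\bar{a}_{s-1}^{\prime}$ (with the $g_0$ initialization) in keeping the logarithmic terms finite. Once that proposition is in hand, the deduction of Theorem \ref{th:beat-1-log} is a one-line substitution, completely parallel to the Brier-score case, and the $O(\log t/t)$ rate is inherited without change.
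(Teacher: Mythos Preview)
Your proposal is correct and follows exactly the paper's approach: the paper's proof is the one-liner ``By Proposition \ref{p:online-R-log}, since our choice of $c_{t}=\bar{a}_{t-1}^{\prime\,\mathbf{b}}(b_{t})$ makes $\mathcal{L}_{t}^{\mathbf{c}}=\widetilde{\mathcal{R}}_{t}^{\mathcal{L},\mathbf{b}}$.'' Your additional remark that Proposition \ref{p:online-R-log} applies with $\mathbf{b}$ in place of $\mathbf{c}$ because the log-refinement scores depend only on the binning and not on the labels in $\Delta$ is a welcome clarification that the paper leaves implicit.
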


\begin{proof}
By Proposition \ref{p:online-R-log}, since our choice of $c_{t}=\bar{a}%
_{t-1}^{\prime \,\mathbf{b}}(b_{t})$ makes $\mathcal{L}_{t}^{\mathbf{c}}=%
\widetilde{\mathcal{R}}_{t}^{\mathcal{L},\mathbf{b}}$.
\end{proof}

\bigskip

Next, we obtain log-calibration by self-log-calibeating. Theorem \ref%
{th:log-outgoing-MM} below will provide the needed tool: an appropriate
minimax (\textquotedblleft stochastic fixed point") result, similar to
Theorem \ref{th:outgoing} (S). Let $\delta >0$; a set $D\subseteq \Delta $
is a $\delta $\emph{-log-grid} of $\Delta $ if for every $x\in \Delta $
there is $d\in D$ such that $\mathcal{D}(x\parallel d)<\delta $. Finite $%
\delta $-log-grids are, for instance, finite $\delta ^{\prime }$-grids of $%
\{x\in \Delta :x_{i}\geq \delta ^{\prime }$ for all $i\}$ (one needs to stay
away from the boundary where the slope of $-\log $ becomes infinite) for
appropriate $\delta ^{\prime }>0$.

\begin{theorem}
\label{th:log-outgoing-MM}Let $D\subset \Delta $ be a finite $\delta $%
-log-grid of $\Delta $, and let $g:D\rightarrow \Delta $ be an arbitrary
function. Then there exists a distribution $\eta $ on $D$ that is of type MM
such that%
\begin{equation*}
\mathbb{E}_{c\sim \eta }\left[ \mathcal{D}(a\parallel c)-\mathcal{D}%
(a\parallel g(c))\right] =\mathbb{E}_{c\sim \eta }\left[ L(a,c)-L(a,g(c))%
\right] \leq \delta
\end{equation*}%
for every $a\in \Delta $.
\end{theorem}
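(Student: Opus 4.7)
The identity $\mathcal{D}(a\parallel c)-\mathcal{D}(a\parallel g(c))=L(a,c)-L(a,g(c))$ is immediate, since both relative entropies expand as $L(a,\cdot)-H(a)$ and the $H(a)$ terms cancel. So the task reduces to finding $\eta\in\Delta(D)$ with
$$\sup_{a\in\Delta}\mathbb{E}_{c\sim\eta}\bigl[L(a,c)-L(a,g(c))\bigr]\;\le\;\delta.$$
The plan is to apply a minimax argument to the zero-sum game where the maximizer chooses $a\in\Delta$, the minimizer chooses $c\in D$, and the payoff is $F(a,c):=L(a,c)-L(a,g(c))=\sum_{i=1}^{m}a_i\log(g(c)_i/c_i)$.

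Since $F(a,c)$ is linear in $a$ for each fixed $c$, the maximum over the simplex $\Delta$ is attained at one of the $m$ unit vectors; thus the game is effectively finite on both sides (the minimizer's side is already finite since $D$ is finite). Von Neumann's minimax theorem for finite two-person zero-sum games therefore applies, yielding an $\eta\in\Delta(D)$ (a type-MM object, obtained as the solution of a finite linear program) such that
$$\sup_{a\in\Delta}\mathbb{E}_{c\sim\eta}[F(a,c)]\;=\;\sup_{a\in\Delta}\inf_{c\in D}F(a,c).$$

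It remains to bound the right-hand side by $\delta$, and this is precisely what the $\delta$-log-grid property is designed for. Given any $a\in\Delta$, pick $d\in D$ with $\mathcal{D}(a\parallel d)<\delta$; then
$$F(a,d)\;=\;\mathcal{D}(a\parallel d)-\mathcal{D}(a\parallel g(d))\;\le\;\mathcal{D}(a\parallel d)\;<\;\delta,$$
using the nonnegativity of relative entropy. Hence $\inf_{c\in D}F(a,c)\le\delta$ for every $a\in\Delta$, which supplies the desired max-min bound and closes the argument.

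The only subtlety is the asymmetry between the compact continuous maximizer space $\Delta$ and the finite minimizer space $D$; this is handled precisely by the observation that $F(\cdot,c)$ is linear, so the supremum over $a$ may be replaced by the maximum over the $m$ vertices of $\Delta$ and the classical finite minimax theorem applies without needing Sion's more general statement. I expect no other obstacle: the grid condition, the linearity in $a$, and the nonnegativity of $\mathcal{D}(\cdot\parallel\cdot)$ combine cleanly to mirror the structure of Theorem \ref{th:outgoing} (S).
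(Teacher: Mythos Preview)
Your proof is correct and follows essentially the same minimax route as the paper: linearity of $L(a,c)-L(a,g(c))$ in $a$, the $\delta$-log-grid bound via nonnegativity of $\mathcal{D}$, and von Neumann's theorem. The only cosmetic difference is that you restrict the maximizer to the $m$ vertices of $\Delta$ (so that the extension to all $a\in\Delta$ is immediate by convexity), whereas the paper lets the maximizer choose $a\in D$ and uses linearity to pass to the barycenter $\bar a$ of a mixed strategy; both variants yield the same value and the same $\eta$.
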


\begin{proof}
Consider the finite two-person zero-sum game where the maximizer chooses $%
a\in D$, the minimizer chooses $c\in D$, and the payoff is $%
L(a,c)-L(a,g(c)). $ For every mixed strategy $\nu \in \Delta (D)$ of the
maximizer, let $\bar{a}:=\mathbb{E}_{a\sim \nu }\left[ a\right] $ be its
expectation; then%
\begin{equation*}
\mathbb{E}_{a\sim \nu }\left[ L(a,c)-L(a,g(c))\right] =L(\bar{a},c)-L(\bar{a}%
,g(c))\leq \mathcal{D}(\bar{a}\parallel c)
\end{equation*}%
(the inequality is by $L(\bar{a},\cdot )\geq H(\bar{a})$), and so the
minimizer can choose $c$ in the $\delta $-log-grid $D$ that makes the payoff 
$<\delta $. By the minimax theorem, there is therefore a mixed strategy $%
\eta \in \Delta (D)$ of the minimizer that \emph{guarantees} that the payoff
is $\leq \delta $.
\end{proof}

\bigskip

The parallel result of Theorem \ref{th:calibration} is

\begin{theorem}
\label{th:log-calibration}Let $\delta >0$ and let $D\subset \Delta $ be a
finite $\delta $-log-grid of $\Delta .$ Then there exists a stochastic $D$%
-forecasting procedure $\sigma $ that is $\delta $-log-calibrated.
\end{theorem}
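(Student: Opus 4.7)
The plan is to mirror the proof of Theorem \ref{th:calibration}, replacing the quadratic outgoing tool (Theorem \ref{th:outgoing} (S)) with its logarithmic counterpart (Theorem \ref{th:log-outgoing-MM}), and replacing the online-refinement bound (Proposition \ref{p:online-R}) with its logarithmic counterpart (Proposition \ref{p:online-R-log}). The guiding identity is the self-calibeating one: since $\mathcal{L}_t = \mathcal{R}_t^{\mathcal{L}} + \mathcal{K}_t^{\mathcal{L}}$, bringing the log-calibration score down to $\delta$ amounts to making the forecast $c_t$ approximately equal (in expected cross-entropy) to its own bin's regularized average $\bar{a}_{t-1}^{\prime}(c_t)$.

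At each time $t$, given the history $h_{t-1}=(\mathbf{a}_{t-1},\mathbf{c}_{t-1})$, I would apply Theorem \ref{th:log-outgoing-MM} to the function $g:D\to\Delta$ defined by $g(c):=\bar{a}_{t-1}^{\prime}(c)$. This is legitimate because $\bar{a}_{t-1}^{\prime}(c)\gg 0$ (the regularizer $g_0=(1/m,\ldots,1/m)$ keeps the averages strictly in the interior of $\Delta$), so cross-entropies with this target are finite. The theorem produces a distribution $\eta_t$ on $D$ such that, using $\eta_t$ as the law $\sigma(h_{t-1})$ of the forecast $c_t$,
\begin{equation*}
\mathbb{E}_{t-1}\left[\,L(a_t,c_t)-L\bigl(a_t,\bar{a}_{t-1}^{\prime}(c_t)\bigr)\,\right]\leq\delta
\end{equation*}
for every $a_t\in\Delta$, where $\mathbb{E}_{t-1}$ is expectation with respect to $\eta_t$. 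This is the log analog of inequality (\ref{eq:Ediff}).

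Taking overall expectation, averaging over $s=1,\ldots,t$, and using $\mathcal{D}(a\|\cdot)=L(a,\cdot)-H(a)$ to pass from cross-entropies to relative entropies, one obtains $\mathbb{E}[\mathcal{L}_t-\widetilde{\mathcal{R}}_t^{\mathcal{L}}]\leq\delta$ uniformly over $\mathbf{a}_t\in\Delta^t$. Proposition \ref{p:online-R-log} then gives $\widetilde{\mathcal{R}}_t^{\mathcal{L}}-\mathcal{R}_t^{\mathcal{L}}\leq O(\log t/t)$ (here the set of forecasts is $D$, which is finite, so the proposition applies directly). Combining these with the decomposition $\mathcal{L}_t=\mathcal{R}_t^{\mathcal{L}}+\mathcal{K}_t^{\mathcal{L}}$ yields
\begin{equation*}
\mathbb{E}\!\left[\mathcal{K}_t^{\mathcal{L}}\right]\;\leq\;\delta+O\!\left(\tfrac{\log t}{t}\right),
\end{equation*}
and hence $\varlimsup_{t\to\infty}\sup_{\mathbf{a}_t}\mathbb{E}[\mathcal{K}_t^{\mathcal{L}}]\leq\delta$, which is exactly $\delta$-log-calibration.

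The main technical point to watch is the regularization. Using the raw online averages $\bar{a}_{s-1}(c_s)$ would risk infinite cross-entropies (if some coordinate of the bin average is zero), so the use of $\bar{a}_{s-1}^{\prime}$ — with its initial pseudo-observation $g_0$ — is essential both for the log-outgoing step to output a finite bound and for Proposition \ref{p:online-R-log} to apply. A secondary sanity check is the scaling: unlike the quadratic case, where the outgoing slack $\delta^2$ matches the squared-norm calibration score, here the log-outgoing slack $\delta$ already has the right units to bound $\mathcal{K}_t^{\mathcal{L}}$ (which is itself an average of relative entropies), so no squaring appears.
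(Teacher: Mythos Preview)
Your proof is correct and follows essentially the same approach as the paper: apply Theorem \ref{th:log-outgoing-MM} with $g(c)=\bar{a}_{t-1}^{\prime}(c)$ to get $\mathbb{E}[\mathcal{L}_t-\widetilde{\mathcal{R}}_t^{\mathcal{L}}]\leq\delta$, then invoke Proposition \ref{p:online-R-log} and the decomposition $\mathcal{K}_t^{\mathcal{L}}=\mathcal{L}_t-\mathcal{R}_t^{\mathcal{L}}$. Your added remarks on the role of the regularizer $g_0$ and the $\delta$-versus-$\delta^2$ scaling are accurate and go slightly beyond what the paper spells out.
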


\begin{proof}
For every history $h_{t-1}=(\mathbf{a}_{t-1},\mathbf{c}_{t-1}),$ applying
Theorem \ref{th:log-outgoing-MM} to the function $g(c)=\bar{a}_{t-1}^{\prime
}(c)$ yields a probability distribution on $D$ such that, by using it as the
distribution $\sigma (h_{t-1})$ of the forecast $c_{t},$ we have%
\begin{equation*}
\mathbb{E}_{t-1}\left[ L(a_{t},c_{t})-L(a_{t},\bar{a}_{t-1}^{\prime }(c_{t}))%
\right] \leq \delta
\end{equation*}%
for every $a_{t}\in \Delta .$ Taking overall expectation and averaging over $%
t$ yields $\mathbb{E}\left[ \mathcal{L}_{t}-\widetilde{\mathcal{R}}_{t}^{%
\mathcal{L}}\right] \leq \delta $; together with Proposition \ref%
{p:online-R-log} we get $\mathbb{E}\left[ \mathcal{K}_{t}^{\mathcal{L}}%
\right] =\mathbb{E}\left[ \mathcal{L}_{t}-\mathcal{R}_{t}^{\mathcal{L}}%
\right] \leq \delta +O\left( \log t/t\right) .$
\end{proof}

\bigskip

The other results in the paper extend in a similar way to the log scores.

\subsection{Multi-Calibeating Is Stronger Than the Stronger Expert\label%
{s-a:stronger expert}}

We show here that calibeating is a stronger notion than the so-called
\textquotedblleft stronger expert.\textquotedblright \footnote{%
Expands on the reply given to a question asked at a lecture at the Workshop
on Learning in Games, Toulouse, July 2024.}

Let $C=\Delta (A)$; as we have seen at the end of Section \ref{s:setup} (see
(\ref{eq:R=minB})), the refinement score is the minimal Brier score over all
relabelings of the bins; i.e.,%
\begin{equation*}
\mathcal{R}_{t}=\min_{\phi }\mathcal{B}_{t}^{\phi (\mathbf{c})},
\end{equation*}%
where the minimum is taken over all functions $\phi :\Delta (A)\rightarrow
\Delta (A)$ (from current labels to new labels), and we write $\mathcal{B}%
_{t}^{\phi (\mathbf{c})}$ for the Brier score where the sequence $\mathbf{c}$
is replaced by $\phi (\mathbf{c})=(\phi (c_{s}))_{s=1,2,...}$ .

Therefore, we have:

\begin{itemize}
\item $\mathbf{c}$ is \emph{multi-calibeating} $\mathbf{b}_{1},...,\mathbf{b}%
_{N}$ if 
\begin{equation*}
\mathcal{B}_{t}^{\mathbf{c}}\leq \min_{\phi }\mathcal{B}_{t}^{\phi (\mathbf{b%
}_{1},...,\mathbf{b}_{N})}+o(1)
\end{equation*}%
as\footnote{%
For simplicity we ignore here the uniformity requirement on the sequences $%
\mathbf{a},\mathbf{b}_{1},...,\mathbf{b}_{N}$.} $t\rightarrow \infty $,
where the minimum is taken over all functions $\phi :\Pi
_{n=1}^{N}B^{n}\rightarrow \Delta (A)$.
\end{itemize}

By comparison, when all forecasts are probability distributions on $A,$
i.e., all the sets $B^{n}$ are subsets of $\Delta (A)$, Foster (1991)
defines:\footnote{%
In the expansive literature on experts, these notions are referred to as
\textquotedblleft prediction with no regret"; see Appendix \ref{s-a:log} for
the parallel results with the logarithmic scoring rule.}

\begin{itemize}
\item $\mathbf{c}$ is \emph{as strong as} $\mathbf{b}_{1},...,\mathbf{b}_{N}$
if 
\begin{equation*}
\mathcal{B}_{t}^{\mathbf{c}}\leq \min_{1\leq n\leq N}\mathcal{B}_{t}^{%
\mathbf{b}_{n}}+o(1),
\end{equation*}
and

\item $\mathbf{c}$ is \emph{as strong as the convex hull} of $\mathbf{b}%
_{1},...,\mathbf{b}_{N}$ if 
\begin{equation*}
\mathcal{B}_{t}^{\mathbf{c}}\leq \min_{w}\mathcal{B}_{t}^{w_{1}\mathbf{b}%
_{1}+...w_{N}\mathbf{b}_{N}}+o(1),
\end{equation*}%
where the minimum is taken over all $w=(w_{1},...,w_{N})\geq 0$ with $%
\sum_{n=1}^{N}w_{n}=1$ (i.e., over all convex combinations of $\mathbf{b}%
_{1},...,\mathbf{b}_{N}$.
\end{itemize}

Whereas multi-calibeating takes into account \emph{all} bin relabelings,
stronger-expert concepts do not go beyond linear-combination relabelings.
Therefore, as claimed, \emph{multi-calibeating is a stronger notion than the
\textquotedblleft stronger expert."}

\end{document}